\newtheorem{proposition}{Proposition}
\newtheorem{lemma}{Lemma}
\newtheorem{corollary}{Corollary}
\newtheorem{conjecture}{Conjecture}
\begin{document}
%
% paper title
% Titles are generally capitalized except for words such as a, an, and, as,
% at, but, by, for, in, nor, of, on, or, the, to and up, which are usually
% not capitalized unless they are the first or last word of the title.
% Linebreaks \\ can be used within to get better formatting as desired.
% Do not put math or special symbols in the title.
\title{On the Age of Information of Processor Sharing Systems}
%
%
% author names and IEEE memberships
% note positions of commas and nonbreaking spaces ( ~ ) LaTeX will not break
% a structure at a ~ so this keeps an author's name from being broken across
% two lines.
% use \thanks{} to gain access to the first footnote area
% a separate \thanks must be used for each paragraph as LaTeX2e's \thanks
% was not built to handle multiple paragraphs
%

\author{Beñat~Gandarias, Josu~Doncel, Mohamad~Assaad
\thanks{B. Gandarias and J. Doncel are with University of the Basque Country, UPV/EHU.}% <-this % stops a space
\thanks{M. Assaad is with Laboratoire des Signaux et Systèmes (L2S), CentraleSupelec, University of
Paris-Saclay.}% <-this % stops a space
\thanks{This work has been partially supported by the Department of Education of the Basque Government, Spain through
the Consolidated Research Group MATHMODE (IT1294-19) and by the Marie Sklodowska-Curie grant agreement N.
777778.}
}

% note the % following the last \IEEEmembership and also \thanks - 
% these prevent an unwanted space from occurring between the last author name
% and the end of the author line. i.e., if you had this:
% 
% \author{....lastname \thanks{...} \thanks{...} }
%                     ^------------^------------^----Do not want these spaces!
%
% a space would be appended to the last name and could cause every name on that
% line to be shifted left slightly. This is one of those "LaTeX things". For
% instance, "\textbf{A} \textbf{B}" will typeset as "A B" not "AB". To get
% "AB" then you have to do: "\textbf{A}\textbf{B}"
% \thanks is no different in this regard, so shield the last } of each \thanks
% that ends a line with a % and do not let a space in before the next \thanks.
% Spaces after \IEEEmembership other than the last one are OK (and needed) as
% you are supposed to have spaces between the names. For what it is worth,
% this is a minor point as most people would not even notice if the said evil
% space somehow managed to creep in.

% The paper headers
\markboth{On the Age of Information of Processor Sharing Queues}%
{Gandarias \MakeLowercase{\textit{et al.}}: On the Age of Information of Processor Sharing Queues}
% The only time the second header will appear is for the odd numbered pages
% after the title page when using the twoside option.
% 
% *** Note that you probably will NOT want to include the author's ***
% *** name in the headers of peer review papers.                   ***
% You can use \ifCLASSOPTIONpeerreview for conditional compilation here if
% you desire.

% If you want to put a publisher's ID mark on the page you can do it like
% this:
%\IEEEpubid{0000--0000/00\$00.00~\copyright~2015 IEEE}
% Remember, if you use this you must call \IEEEpubidadjcol in the second
% column for its text to clear the IEEEpubid mark.

% use for special paper notices
%\IEEEspecialpapernotice{(Invited Paper)}

% make the title area
\maketitle

% As a general rule, do not put math, special symbols or citations
% in the abstract or keywords.
\begin{abstract}
In this paper, we examine the Age of Information (AoI) of a source sending status updates to a monitor through a queue operating under the Processor Sharing (PS) discipline. While AoI has been well studied for various queuing models and policies, less attention has been given so far to the PS discipline.  We first consider the M/M/1/2 queue with and without preemption and provide closed-form expressions for the average AoI in this case. We overcome the challenges of deriving the AoI expression by employing the Stochastic Hybrid Systems (SHS) tool. We then extend the analysis to the M/M/1 queue with one and two sources and provide numerical results for these cases. Our results show that PS can outperform the M/M/1/1* queue in some cases.  
\end{abstract}

% Note that keywords are not normally used for peerreview papers.
\begin{IEEEkeywords}
Age of Information; Processor Sharing Queues; Stochastic Hybrid Systems; %IEEE, IEEEtran, journal, \LaTeX, paper, template.
\end{IEEEkeywords}

% For peer review papers, you can put extra information on the cover
% page as needed:
% \ifCLASSOPTIONpeerreview
% \begin{center} \bfseries EDICS Category: 3-BBND \end{center}
% \fi
%
% For peerreview papers, this IEEEtran command inserts a page break and
% creates the second title. It will be ignored for other modes.
\IEEEpeerreviewmaketitle

\section{Introduction}

%\subsection{Motivation}
With the development of Internet of Things (IoT), there is an increasing interest nowadays in real-time monitoring, where a remote monitor is tracking the status of a source/sensor.  Age of Information (AoI) has been introduced  in \cite{kaul2012real} to capture the freshness of information in such contexts, networked control systems. This metric is defined as the time elapsed since the generation of the last correctly received packet at the monitor. Since its introduction, this metric has received particular interest from researchers and has been studied in various network models and scenarios. Since the evolution of the AoI over time exhibits a sawtooth pattern, researchers have focused on  AoI-dependent metrics computation, such as Average AoI, Peak AoI, etc. In particular, the average AoI has received a lot of attention and has been evaluated in various  continuous and discrete time network  models. In \cite{kaul2012real,kaul2012realCISS,yates2012realISIT,KamIT2016}, it was shown that the computation of the average AoI is a hard task in general settings since it needs the evaluation of the expected value of the product  of inter-arrival and response times, which are correlated random variables. The average AoI has been computed by considering specific source and response time models (although these models cover a wide range of scenarios), and the medium between the source and monitor has been modeled by a queuing system. For instance,  the authors in \cite{kaul2012real} derived the average AoI of the M/M/1 queue, M/D/1 queue, and D/M/1 queue models and obtained the best arrival rate of the packet update. The single source single destination M/M/1 queue under First Come First Served (FCFS) and Last Come First Served (LCFS) disciplines has been studied in \cite{kaul2012realCISS,yates2012realISIT}. The peak AoI has also been studied in various scenarios. For instance, average AoI and peak AoI have been analyzed for M/M/1/1 and M/M/1/2 queues \cite{costa2014ISIT,costa2016age}. To improve the AoI, a queue discipline called M/M/1/2*, in which a new arriving packet preempts a waiting one in the queue,  has been introduced in \cite{costa2014ISIT,costa2016age}. 

In addition to the aforementioned works that have focused on Poisson arrivals and exponential service time,  more general single queue models have been explored. In \cite{SoysalIT2021}, a G/G/1/1 is analyzed. The average aoI in a multi-source M/G/1/1 queue with packet management has been studied in \cite{ChenTcom2022,NajmAoI2018}, where it was shown that preempting packets does not reduce AoI for small service time variance coefficients.

The analysis of AoI was also studied in the case of multi-source single-server (e.g. M/M/1/2*) in \cite{yates2012real,yates2018age} and multi-source multi-server systems in \cite{bedewy2016optimizing,yates2018ISIT,JosuJCN2022}. 
While the aforementioned works have focused on predefined arrival and queuing disciplines, several studies have explored the optimal status update (information sampling and scheduling) policy in various scenarios, e.g. in single hop networks \cite{kadota2018scheduling,maatouk2020status,sun2018age},  multihop networks \cite{bedewy2017age,bedewy2019age}, etc. Interestingly, it has been shown in \cite{SunElifINFOCOM,SunElif2017}, for a single source and single destination scenario,   that zero-wait policy, where the source transmits a fresh update right when the previous one has been delivered, does not always minimize the AoI. For discrete-time multi-user networks, several Age-based scheduling solutions have also been developed, e.g. \cite{kadota2018scheduling,hsu2017age,maatouk2020optimality}. A whittle index based scheduling policy has been developed in \cite{kadota2018scheduling,hsu2017age}.  Such a Whittle index based policy has been proved to be asymptotically optimal in some cases \cite{SaadAoI2022,maatouk2020optimality}. Furthermore, there have been studies on energy-constrained updating, e.g. in energy harvesting context \cite{Arafa2018,Arafa2019,yates2015ISIT,Bacinoglu2018ISIT,Bacinoglu2015ITA}. It is worth mentioning that in addition to the above AoI metrics, there is an increasing interest recently in developing beyond-age metrics \cite{Chiariotti2022,zhong2018,maatouk2020age,escroc2019,kam2018towards,yates2021ISIT,SunCyr2019},  for example to capture the semantic of information \cite{ElifSemantic} such as value of information \cite{KostaVoI}, Age of Incorrect of Information (AoII)\cite{maatouk2020age,maatouk2020agenew,kriouile2021minimizing,ChenAoII2021}, Query Age of Incorrect Information (QAoII) \cite{Ayik2023}, etc. For more recent surveys of existing work the reader can refer to \cite{BookAoI2023,SunBook,Survey2021}. 

In this paper, we focus on the average AoI metric. We consider a single source single destination queuing model under Processor Sharing (PS) discipline. Under the Processor Sharing discipline, all the packets in the queue are served at the same speed, i.e., when there are $n$ packets in the queue, 
each packet gets a proportion of $1/n$ of the service capacity. Despite its extensive use and analysis since its introduction in \cite{kleinrock1975theory}, e.g. \cite{yashkov1987processor,kelly1979stochastic}, to the best of our knowledge, the PS queue has not been studied from the AoI perspective. In this paper, we provide analysis of the average AoI in a single queue system under the PS discipline. We make use of Stochastic Hybrid System (SHS) tool to overcome the challenges of analyzing the AoI under the PS discipline.   Specifically, the main contributions in this paper are as follows:
\begin{itemize}
	\item We first consider the M/M/1/2 queue without preemption and we provide an explicit expression of the AAoI for the PS discipline. 
	\item We then consider the M/M/1/2 queue with preemption, which we denote the M/M/1/2$^*$ queue. 
	We also provide an analytical expression of the AAoI when the queueing discipline is PS. 
	\item We show that, for the M/M/1/2 with and without preemption, the PS discipline outperforms the FGFS discipline in terms of AAoI. Moreover, for the M/M/1/2 queue, the AAoI for the FGFS discipline is, at most, 1.2 times worse than the AAoI for the PS discipline and, for the M/M/1/2$^*$ queue, it is at most 4/3 worse. 
	\item We prove that the AAoI of the M/M/1/2$^*$ queue is always smaller than the AAoI of the M/M/1/2 queue under PS, and, in fact, the AAoI of the M/M/1/2 queue is, at most, 5/3 worse than the AAoI of the M/M/1/2$^*$ queue. 
	\item We analyze  the AAoI of the M/M/1 queue for PS and FGFS (First Generated First Served) disciplines, and provide numerical results by solving the system of equations obtained by the SHS technique. %Our numerical experiments suggest that the results obtained for the M/M/1/2 queue extend to the M/M/1 queue.
	\item We analyze the case of two sources and provide numerical results, by solving the equations resulting from the SHS analysis. Interestingly, the results show that the PS discipline can outperform the M/M/1/1* in some cases.    
\end{itemize}
The remainder of the article is organized as follows. We present the system model in Section II. In Section III, we study the M/M/1/2 model without and with preemption, while the M/M/1 analysis is provided in Section IV. The multiple source scenario is given in Section V, and the conclusion is presented in Section VI. The proofs are given in the appendix.

% \subsection{Related Work}

% \subsection{Contributions}

%  \subsection{Organization}

%\section{Model Description}

\section{Model Description}
We consider a monitoring system in which there is a process of interest (i.e. the source) whose status needs to be observed timely by a remote monitor (i.e. the sink). To this end, packets containing information about the status of the system are generated at the source and are sent to the sink through a transmission channel. 
\begin{figure}[h!]
\centering
\includegraphics[width=0.8\columnwidth,clip=true,trim=20pt 550pt 20pt 170pt]{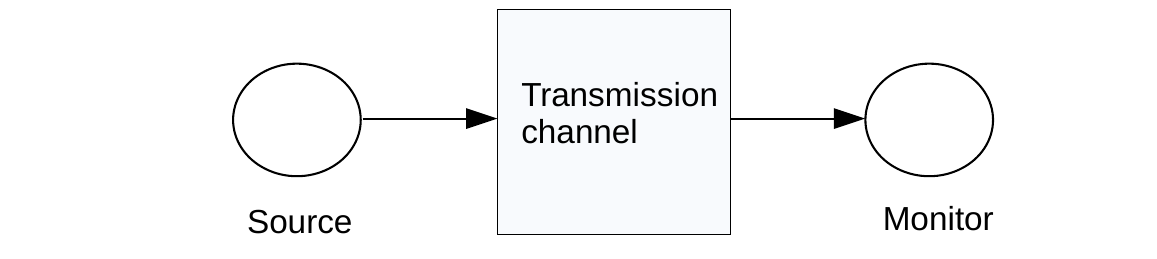}
\caption{A monitoring system example.}
\end{figure}

We assume that packet generation times at the source follow a Poisson process of rate $\lambda$ and that the transmission channel is a single server queue with exponential service times of rate $\mu$. 
The load of the system is $\rho=\lambda/\mu$.
Moreover, it is assumed that the transmission times from the source to the queue and from the queue to the monitor are both zero.

We consider that the queue serves the packet containing status updates of the system according to the Processor Sharing (PS) discipline. This means that all the packets in the queue are served at the same speed, i.e., when there are $n$ packets in the queue, 
each packet is served at rate $\mu/n$.

We consider the Age of Information as the metric of performance of the system. The Age of Information is defined as the time elapsed since the generation time of the last packet that has been delivered to the monitor successfully.  
More precisely, if $t_i$ is the generation time of the $i$-th packet and $L(t)$ is the index of the last successfully delivered packet
before time $t$, the Age of Information at time $t$ is defined as $\Delta(t)=t-t_{L(t)}$.

Our focus will be on the AAoI of divers queueing models, which we denote by $\Delta$ with a subindex that indicates the queueing model we refer to. For instance, when we study the AAoI of the M/M/1/2-PS queue, we denote it by $\Delta_{M/M/1/2-PS}$.

\section{The M/M/1/2 Queue With And Without Preemption}

\subsection{The M/M/1/2 Queue}
\label{sec:mm12}

We consider the M/M/1/2 queue without preemption. In this system, the maximum number of packets that can be stored in the queue is two. Besides, when a new packet arrives and there are two packets in the system, the incoming packet is discarded. Under the PS discipline, when there is a single packet in the queue, it is served at rate $\mu$, but when there
are two packets in the queue, both are served at rate $\mu/2$.

The following result characterizes the AAoI of the M/M/1/2 queue without preemption and under the PS discipline. The proof is available in Appendix~\ref{app:prop:mm12-ps}.

\begin{proposition}
The AAoI of the M/M/1/2-PS queue is
\begin{equation}
\Delta_{M/M/1/2-PS}=\frac{5\lambda^4+9\lambda^3\mu+8\lambda^2\mu^2+6\lambda\mu^3+2\mu^4}
{2 \lambda\mu(\lambda+\mu)(\lambda^2+\lambda\mu+\mu^2)}.
\label{eq:mm12-ps}
\end{equation}
\label{prop:mm12-ps}
\end{proposition}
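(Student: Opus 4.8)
The plan is to apply the Stochastic Hybrid Systems (SHS) framework for AoI. The crucial first step is the design of the discrete modes. Because the queue holds at most two packets and, under PS, a younger packet can depart before an older one, a packet left alone in the server may be \emph{obsolete}: a fresher packet has already been delivered, so the eventual delivery of the stale one does not lower the AoI. Keeping track of this requires five modes: $q=0$ (empty), $q=1$ (one informative packet in service), $q=2$ (two packets, both still able to reduce the AoI), $q=3$ (one obsolete packet), and $q=4$ (two packets, the older obsolete and the younger informative). I would then check that this mode set is closed under the arrival transition (rate $\lambda$, blocked in $q\in\{2,4\}$) and the departure transition (total rate $\mu$, split as $\mu/2+\mu/2$ in the two-packet modes): for instance, from $q=2$ the departure of the older packet leads to $q=1$ while the departure of the younger packet leads to $q=3$, and from $q=4$ the departure of the obsolete older packet leads to $q=1$ while the departure of the younger packet leads to $q=3$.

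Next I set up the continuous state $\mathbf{x}=(x_0,x_1,x_2)$, with $x_0$ the current AoI, $x_1$ the age of the older informative packet present and $x_2$ the age of the younger one, and I record the drift vectors $\mathbf{b}_q$ (always a $1$ in the $x_0$ slot, plus a $1$ for each age component that is ``live'' in mode $q$) together with the linear reset map $\mathbf{A}_\ell$ of each transition: an arrival sets a fresh age component to $0$ and leaves $x_0$ unchanged; a departure that delivers an informative packet of age $a$ sets $x_0\leftarrow a$; a departure of an obsolete packet leaves $x_0$ unchanged; and in all cases the surviving packet's age is relabelled into the appropriate slot.

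With this data fixed I would first solve the balance equations of the underlying continuous-time Markov chain for the stationary mode probabilities $\bar\pi_q$. Using $\bar\pi_2=(\lambda/\mu)\bar\pi_1$ and $\bar\pi_4=(\lambda/\mu)\bar\pi_3$ and eliminating, one obtains $\bar\pi_3=\tfrac{\lambda}{\lambda+2\mu}\bar\pi_1$, $\bar\pi_0=\tfrac{2\mu(\lambda+\mu)}{\lambda(\lambda+2\mu)}\bar\pi_1$, and after normalization $\bar\pi_0=\tfrac{\mu^2}{\lambda^2+\lambda\mu+\mu^2}$, so the factor $\lambda^2+\lambda\mu+\mu^2$ of \eqref{eq:mm12-ps} already appears here. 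I would then solve the SHS steady-state system for the correlation vectors $\bar{\mathbf{v}}_q=\lim_t \mathbb{E}[\mathbf{x}(t)\mathbf{1}(q(t)=q)]$. This system is essentially triangular: the $x_2$-components follow immediately ($\bar v_{22}=\bar\pi_2/\mu$, $\bar v_{42}=\bar\pi_4/\mu$, the others vanish), they feed the $x_1$-components, and all of these feed the $x_0$-components $\bar v_{q0}$. Finally $\Delta_{M/M/1/2-PS}=\sum_{q=0}^{4}\bar v_{q0}$, and putting the five terms over the common denominator $2\lambda\mu(\lambda+\mu)^2(\lambda^2+\lambda\mu+\mu^2)$ produces a degree-five numerator divisible by $\lambda+\mu$; cancelling this factor yields exactly \eqref{eq:mm12-ps}.

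The genuinely delicate part is the mode design: one has to recognize that PS forces the informative/obsolete distinction in the one- and two-packet states, and to get the transition targets and the reset maps right — a single mislabelled transition changes the whole expression. Once the SHS data are correct, the rest is a structured linear solve followed by routine polynomial algebra, which I would organize by treating the $x_2$-, then the $x_1$-, then the $x_0$-components in turn, and clearing denominators only at the very end to recognize the stated rational function.
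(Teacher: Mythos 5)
Your proof is correct and reaches the stated formula, but it routes through the SHS framework differently than the paper does. The paper keeps only the three discrete states $q\in\{0,1,2\}$ counting packets in the queue and absorbs the obsolescence issue into the reset maps: in its transition $l=4$ (the younger of two packets departs first), it sets $x_0'=x_2$ and replaces the surviving older packet by a \emph{fake update} with $x_1'=x_2$, so that the later ``delivery'' of that stale packet resets $x_0$ to its own current value and is a no-op for the AoI. Your five-mode chain makes the informative/obsolete distinction explicit in the discrete state instead. I checked that your mode set is closed, that your stationary probabilities ($\bar\pi_2=\rho\bar\pi_1$, $\bar\pi_3=\tfrac{\lambda}{\lambda+2\mu}\bar\pi_1$, $\bar\pi_0=\tfrac{\mu^2}{\lambda^2+\lambda\mu+\mu^2}$) aggregate exactly to the paper's $\pi_i=\rho^i/(1+\rho+\rho^2)$, and that the resulting triangular linear system gives the same answer: for instance at $\lambda=\mu$ your solve yields $\bar v_{00}=\tfrac{2}{3\mu}$, $\bar v_{10}+\bar v_{30}=\tfrac{3}{4\mu}$, $\bar v_{20}+\bar v_{40}=\tfrac{13}{12\mu}$, matching the paper's $v_{00},v_{10},v_{20}$ term by term and summing to $5/(2\mu)$ as \eqref{eq:mm12-ps} requires. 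What the paper's compression buys is a smaller ($3$-mode) system; what your version buys is that every reset map is the physical one --- a departing obsolete packet simply leaves $x_0$ unchanged --- so there is no fake-update bookkeeping to justify. Either way the delicate step is exactly the one you flag: correctly handling the PS-specific event in which the younger packet overtakes the older.
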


In  \cite{costa2016age}, it is shown that the AAoI of the M/M/1/2 queue with the FGFS discipline and without preemption in waiting is
\begin{equation}
\Delta_{M/M/1/2-FGFS}= \frac{3\lambda^4+5\lambda^3\mu+4\lambda^2\mu^2+3\lambda\mu^3+\mu^4}
{\lambda\mu(\lambda+\mu)(\lambda^2+\mu^2+\lambda\mu)}.
\label{eq:mm12-fgfs}
\end{equation}

We now aim to analyze the benefits on the AAoI of the PS discipline with respect to the FGFS discipline by comparing \eqref{eq:mm12-ps} with \eqref{eq:mm12-fgfs}. The following result provides an analytical comparison of both expressions.
The proof is available in Appendix~\ref{app:prop:mm12-comparison}.

\begin{proposition}
We have that
\begin{equation}
1\leq \frac{\Delta_{M/M/1/2-FGFS}}{\Delta_{M/M/1/2-PS}}\leq 1.2.
\label{eq:mm12-comparison}
\end{equation}
\label{prop:mm12-comparison}
\end{proposition}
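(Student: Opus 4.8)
The plan is to reduce the claimed double inequality to the nonnegativity of two explicit polynomials. First I would substitute \eqref{eq:mm12-ps} and \eqref{eq:mm12-fgfs} into the ratio and simplify. The crucial point is that the two denominators differ only by a factor of $2$: both contain the factor $\lambda\mu(\lambda+\mu)(\lambda^2+\lambda\mu+\mu^2)$, which cancels, leaving
\[
\frac{\Delta_{M/M/1/2-FGFS}}{\Delta_{M/M/1/2-PS}}
=\frac{2\bigl(3\lambda^4+5\lambda^3\mu+4\lambda^2\mu^2+3\lambda\mu^3+\mu^4\bigr)}
{5\lambda^4+9\lambda^3\mu+8\lambda^2\mu^2+6\lambda\mu^3+2\mu^4}=:\frac{N}{D},
\]
where $N$ and $D$ denote the numerator and denominator of this last expression, two homogeneous quartics in $(\lambda,\mu)$ with strictly positive coefficients. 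Note there is no stability condition to impose, since the M/M/1/2 buffer is finite; the only standing assumption is $\lambda,\mu>0$, so $D>0$ and each of the two bounds is equivalent to a sign statement about a linear combination of $N$ and $D$.

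For the lower bound $N/D\geq 1$ I would expand $N-D$ and observe that it equals $\lambda^4+\lambda^3\mu=\lambda^3(\lambda+\mu)$, which is positive. For the upper bound $N/D\leq 6/5$ I would expand $6D-5N$ (clearing the denominator in $1.2=6/5$) and observe that it equals $4\lambda^3\mu+8\lambda^2\mu^2+6\lambda\mu^3+2\mu^4$, again a polynomial all of whose coefficients are nonnegative, hence nonnegative. Together these give $1\leq \Delta_{M/M/1/2-FGFS}/\Delta_{M/M/1/2-PS}\leq 1.2$, which is \eqref{eq:mm12-comparison}.

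There is no genuine analytic difficulty here; the argument is a finite computation. The one place to be careful is precisely the factor-of-two mismatch between the denominators of \eqref{eq:mm12-ps} and \eqref{eq:mm12-fgfs} — mishandling it would corrupt both bounds. It is also worth remarking (though not needed for the statement) that both inequalities are asymptotically sharp: comparing the $\mu^4$ coefficients of $N$ and $D$ shows the ratio tends to $1$ as $\lambda/\mu\to 0$ (light traffic, where the disciplines coincide), and comparing the $\lambda^4$ coefficients shows it tends to $6/5$ as $\lambda/\mu\to\infty$.
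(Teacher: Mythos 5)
Your proof is correct, but it takes a genuinely different route from the paper's. You verify the two bounds directly: after cancelling the common factor $\lambda\mu(\lambda+\mu)(\lambda^2+\lambda\mu+\mu^2)$ you reduce everything to the sign of two explicit linear combinations of quartics, namely $N-D=\lambda^3(\lambda+\mu)\geq 0$ for the lower bound and $6D-5N=4\lambda^3\mu+8\lambda^2\mu^2+6\lambda\mu^3+2\mu^4\geq 0$ for the upper bound (both of which I have checked). The paper instead proves that the ratio is a monotonically increasing function of $\lambda$ for every fixed $\mu>0$ --- which requires differentiating the quotient and expanding products of degree-$3$ and degree-$4$ polynomials into a degree-$7$ identity --- and then obtains the two bounds as the limits of the ratio as $\lambda\to 0$ and $\lambda\to\infty$. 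Your argument is shorter and involves only subtraction of quartics, so it is less error-prone; the paper's argument buys the additional structural fact that the ratio increases monotonically from $1$ to $1.2$ with the load, so that it attains every value in between exactly once. Both approaches show that the two bounds are asymptotically sharp, yours via the leading/trailing coefficients and the paper's via the stated limits.
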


From this result, we conclude that the PS discipline outperforms the FGFS discipline and also that the AAoI when we consider the FGFS discipline is, at most, 1.2 times the AAoI of the PS discipline.

\subsection{The M/M/1/2$^*$ Queue}
\label{sec:mm12star}

We now consider the M/M/1/2 queue with preemption. In this system, when a new packet arrives to the system and there are two packets in the queue, the last update in the queue is replaced by the incoming one. Note that this is a big difference with respect to the M/M/1/2 queue without preemption that has been studied in the previous section. In fact, it is known that when we consider the Age of Information metric, preemption leads to a performance improvement with respect to a system without preemption \cite{bedewy2016optimizing}. In this section, we follow the  notation of \cite{costa2016age} and we denote by  M/M/1/2$^*$ the system under study in this section. 

Our goal is to extend the results of the previous section to the M/M/1/2$^*$ to analyze the impact of the preemption on the ratio of the AAoI of the FGFS discipline over the AAoI of the PS discipline. 

We now present the expression of the AAoI of the M/M/1/2$^*$. The proof of this result is postponed to Appendix~\ref{app:prop:mm12-star-ps}.

\begin{proposition}
The AAoI of the M/M/1/2$^*$-PS queue is
\begin{multline}
\Delta_{M/M/1/2^*-PS}= \\\frac{3\lambda^5+11\lambda^4\mu+15\lambda^3\mu^2+14\lambda^2\mu^3+8\lambda\mu^4+2\mu^5}
{2\lambda\mu(\lambda+\mu)^2(\lambda^2+\mu^2+\lambda\mu)}.
\label{eq:mm12-ps-star}
\end{multline}
\label{prop:mm12-star-ps}
\end{proposition}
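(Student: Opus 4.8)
The plan is to analyze the M/M/1/2$^*$-PS queue with the Stochastic Hybrid Systems (SHS) machinery, exactly as in the proof of Proposition~\ref{prop:mm12-ps}, adjusting only the transition structure to reflect that an arrival to a full system now preempts the waiting packet rather than being discarded. First I would set up the discrete state space of the SHS: the queue can hold $0$, $1$, or $2$ packets, so I would use states $q \in \{0,1,2\}$. The continuous state is the age vector; state $0$ needs only the age $\Delta(t)$ currently tracked at the monitor, while states $1$ and $2$ additionally need to track the age that each in-queue packet will carry if it is eventually delivered. In state $2$ we must distinguish the age of the ``older'' (first-generated) packet from that of the ``newer'' one, since under PS both are in service simultaneously at rate $\mu/2$ and either may complete first; this is the feature that makes PS different from FGFS and is the reason the continuous state in state $2$ is genuinely two-dimensional beyond $\Delta(t)$.

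Next I would enumerate the transitions. From state $0$, a Poisson arrival of rate $\lambda$ moves to state $1$ and initializes the fresh packet's age to $0$. From state $1$: an arrival (rate $\lambda$) moves to state $2$, carrying over the age of the packet already in service and initializing the new one to $0$; a service completion (rate $\mu$, since a lone packet is served at full rate) moves to state $0$ and resets $\Delta$ to the age of the delivered packet. From state $2$: each of the two packets completes at rate $\mu/2$, so there are two service transitions back to state $1$ — in one the older packet is delivered (so $\Delta$ is reset to its age and the surviving newer packet continues), in the other the newer packet is delivered (so $\Delta$ is reset to the newer packet's age and the older one continues) — and, crucially, an arrival of rate $\lambda$ keeps the system in state $2$ but resets the ``newer'' packet's age to $0$ while leaving the ``older'' packet and $\Delta$ untouched. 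For each transition I would write down the linear reset map $A_\ell$ acting on the age vector. Then I would form the stationary equations of the SHS: first solve for the stationary probabilities $\bar\pi_q$ of the discrete chain (a birth–death-type chain on $\{0,1,2\}$ whose balance equations are elementary), and then solve the linear system for the first-order age moments $\mathbf{v}_q = \lim_{t\to\infty}\mathbb{E}[\mathbf{x}_q(t)\mathbf{1}_{q(t)=q}]$, using $\dot{\mathbf v}_q = \mathbf 1 \bar\pi_q + \sum_\ell (\text{in}) - \sum_\ell (\text{out})$ set to zero. The AAoI is then the sum over $q$ of the first component of $\mathbf v_q$.

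The main obstacle will be bookkeeping rather than conceptual difficulty: the preemption self-loop at state $2$ and the two competing PS service completions make the linear system for the $\mathbf v_q$ larger and more delicate than in the non-preemptive case, and one must be careful that the age coordinate which gets ``reset to $0$'' on a preempting arrival is consistently the newer packet's coordinate in every transition that touches state $2$. Once the linear system is assembled, the remaining work is a routine (if tedious) symbolic solve; I would expect the denominator $2\lambda\mu(\lambda+\mu)^2(\lambda^2+\lambda\mu+\mu^2)$ to emerge from the determinant of that system, with the extra factor of $(\lambda+\mu)$ compared to \eqref{eq:mm12-ps} reflecting the self-loop at state $2$. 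After simplification one should recover \eqref{eq:mm12-ps-star}; a useful sanity check along the way is to verify the limits $\lambda\to 0$ (AAoI $\to \infty$ like $1/\lambda$, coefficient $2\mu^5/(2\mu\cdot\mu\cdot\mu^2\cdot\mu^2)=1/\mu\cdot\ldots$, i.e.\ the pure $1/\lambda$ behavior) and $\lambda\to\infty$ (AAoI $\to 3/(2\mu)$ from the leading terms $3\lambda^5/(2\mu\lambda^4)$), and to confirm that \eqref{eq:mm12-ps-star} is termwise no larger than \eqref{eq:mm12-ps}, consistent with the expected benefit of preemption.
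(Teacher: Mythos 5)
Your overall plan is exactly the paper's: the same three-state SHS as in Proposition~\ref{prop:mm12-ps}, with the only modification being the arrival-to-a-full-system transition, which now resets the newer waiting packet's age to zero (reset map $[x_0,x_1,x_2]\to[x_0,x_1,0]$) instead of leaving the continuous state untouched; one then solves the linear system from (35a) of \cite{yates2018age} and sums $v_{00}+v_{10}+v_{20}$. The discrete chain, its stationary distribution, and five of the six transitions are literally identical to the non-preemptive case, so the proof is a routine re-solve once the table is written correctly.

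The one place where your description would lead you astray is the service completion in state $2$ in which the \emph{newer} packet finishes first. You write that ``$\Delta$ is reset to the newer packet's age and the older one continues.'' If ``continues'' means $x_1'=x_1$, the resulting system is wrong: once the newer packet has been delivered, the surviving older packet is obsolete --- its eventual delivery cannot rejuvenate the monitor, whose age will simply keep growing from $x_2$ at rate one. The SHS bookkeeping must therefore replace the survivor by a fake update carrying the age of the packet just delivered, i.e.\ $x_1'=x_2$, so that the incoming contribution to the state-$1$ equation is $\tfrac{\mu}{2}[v_{22},v_{22},0]$ rather than $\tfrac{\mu}{2}[v_{22},v_{21},0]$. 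Since $v_{21}\neq v_{22}$, getting this wrong changes $v_{11}$, hence $v_{00}$, hence the final expression. (A minor aside: the extra factor $(\lambda+\mu)$ in the denominator is not explained by ``the self-loop at state $2$'' --- the non-preemptive model has the same self-loop; it arises because the preempting arrival now zeroes the third age coordinate, so the $\lambda v_{22}$ term no longer cancels in the state-$2$ balance equation.)
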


%\begin{proof}

%\end{proof}

We now aim to compare the AAoI of the M/M/1/2$^*$ queue under the PS discipline with the AAoI under the FGFS discipline. The expression of the former system has been shown in \cite{costa2016age}, and it is
\begin{multline}
\Delta_{M/M/1/2^*-FGFS}= \\\frac{2\lambda^5+7\lambda^4\mu+8\lambda^3\mu^2+7\lambda^2\mu^3+4\lambda\mu^4+\mu^5}
{\lambda\mu(\lambda+\mu)^2(\lambda^2+\mu^2+\lambda\mu)}.
\label{eq:mm12-fgfs-star}
\end{multline}

We focus on the ratio $\frac{\Delta_{M/M/1/2^*-FGFS}}{\Delta_{M/M/1/2^*-PS}}$. In the following result, we provide a lower-bound and an upper-bound of this ratio. The proof of this result is provided in Appendix~\ref{app:prop:mm12-star-comparison}.

\begin{proposition}
We have that
\begin{equation}
1\leq \frac{\Delta_{M/M/1/2^*-FGFS}}{\Delta_{M/M/1/2^*-PS}}\leq \frac{4}{3}.
\label{eq:mm12-star-comparison}
\end{equation}
\label{prop:mm12-star-comparison}
\end{proposition}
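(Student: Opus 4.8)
\emph{Proof sketch.} The key observation is that the two AAoI expressions in \eqref{eq:mm12-fgfs-star} and \eqref{eq:mm12-ps-star} share the denominator $\lambda\mu(\lambda+\mu)^2(\lambda^2+\mu^2+\lambda\mu)$: it occurs once in the FGFS expression and twice (alongside the leading constant $2$) in the PS expression. Consequently, after this cancellation the ratio $R:=\Delta_{M/M/1/2^*-FGFS}/\Delta_{M/M/1/2^*-PS}$ reduces to a quotient of the two quintic numerators, with an extra factor $2$ in front of the FGFS one. Writing $N_F$ and $N_P$ for the numerators of \eqref{eq:mm12-fgfs-star} and \eqref{eq:mm12-ps-star} respectively, we thus have $R=2N_F/N_P$, a ratio of two homogeneous degree-$5$ polynomials in $(\lambda,\mu)$ with nonnegative coefficients. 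This is exactly what makes the proof elementary: both bounds become polynomial inequalities on $\{\lambda\ge 0,\mu\ge 0\}$.

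First I would prove the lower bound. Since $N_P>0$, $R\ge 1$ is equivalent to $2N_F-N_P\ge 0$; expanding term by term, this difference equals $\lambda^5+3\lambda^4\mu+\lambda^3\mu^2=\lambda^3(\lambda^2+3\lambda\mu+\mu^2)$, which is plainly nonnegative, with equality only in the light-traffic limit $\lambda\to 0$. Next I would prove the upper bound: $R\le 4/3$ is equivalent to $4N_P-6N_F\ge 0$, and expanding gives $4N_P-6N_F=2\mu(\lambda^4+6\lambda^3\mu+7\lambda^2\mu^2+4\lambda\mu^3+\mu^4)$, again a polynomial with nonnegative coefficients, hence nonnegative. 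Combining the two estimates yields \eqref{eq:mm12-star-comparison}. One can also remark that both bounds are sharp: $R\to 1$ as $\lambda\to 0$ and $R\to 4/3$ as $\rho=\lambda/\mu\to\infty$, so $4/3$ is the best possible constant.

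There is no genuine obstacle here — once the shared denominator is cancelled the argument is entirely routine. The only point needing care is the bookkeeping: correctly tracking the single-versus-double occurrence of $(\lambda+\mu)^2(\lambda^2+\mu^2+\lambda\mu)$ and the constant $2$, and then expanding $2N_F-N_P$ and $4N_P-6N_F$ without error. Equivalently, one may normalise by dividing numerator and denominator by $\mu^5$ and work with the single variable $\rho=\lambda/\mu\ge 0$, so that $R(\rho)=\frac{4\rho^5+14\rho^4+16\rho^3+14\rho^2+8\rho+2}{3\rho^5+11\rho^4+15\rho^3+14\rho^2+8\rho+2}$; the same two polynomial inequalities (now in $\rho$ alone) close the argument and make the tightness at $\rho=0$ and $\rho\to\infty$ transparent.
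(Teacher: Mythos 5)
Your proof is correct, and your computations check out: with $N_F=2\lambda^5+7\lambda^4\mu+8\lambda^3\mu^2+7\lambda^2\mu^3+4\lambda\mu^4+\mu^5$ and $N_P=3\lambda^5+11\lambda^4\mu+15\lambda^3\mu^2+14\lambda^2\mu^3+8\lambda\mu^4+2\mu^5$ one indeed has $2N_F-N_P=\lambda^3(\lambda^2+3\lambda\mu+\mu^2)\geq 0$ and $4N_P-6N_F=2\mu(\lambda^4+6\lambda^3\mu+7\lambda^2\mu^2+4\lambda\mu^3+\mu^4)\geq 0$, which give the two bounds directly. Your route differs from the paper's: the paper first proves that the ratio $2N_F/N_P$ is monotonically increasing in $\lambda$ (by showing the sign of its derivative reduces to a degree-$9$ polynomial inequality with nonnegative coefficients) and then reads off the two bounds as the limits at $\lambda\to 0$ and $\lambda\to\infty$. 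Your argument skips the monotonicity lemma entirely and verifies each bound as a degree-$5$ polynomial inequality, which is shorter and involves far less expansion; the trade-off is that the paper's approach additionally establishes that the ratio increases monotonically with the load, a structural fact your version does not record (though you do recover sharpness of both constants from the limits at $\rho\to 0$ and $\rho\to\infty$). Either argument is a complete proof of the proposition.
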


From this result, we conclude that the AAoI of the M/M/1/2 queue with preemption and under PS discipline is always smaller than the
AAoI of the M/M/1/2 queue with preemption and under FGFS discipline. Besides, we also conclude that $\Delta_{M/M/1/2^*-FGFS}$ is, 
at most, $4/3$ times worse than $\Delta_{M/M/1/2^*-PS}$.

The authors in \cite{costa2016age} showed that the AAoI of the M/M/1/2 queue with preemption and under FGFS is smaller than
the AAoI without preemption and under FGFS. This implies that, for the FGFS, when the maximum number of packets in the queue is two, preemption of packets leads to a performance improvement. In the following result, we study if such a performance improvement is also achieved when we consider the PS queue instead of the FGFS queue. Its proof is presented in Appendix~\ref{app:prop:mm12-mm12-star-comparison}.

\begin{proposition}
We have that
\begin{equation}
1\leq \frac{\Delta_{M/M/1/2-PS}}{\Delta_{M/M/1/2^*-PS}}\leq \frac{5}{3}.
\label{eq:mm12-star-comparison}
\end{equation}
\label{prop:mm12-mm12-star-comparison}
\end{proposition}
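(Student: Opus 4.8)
The plan is to reduce the claim to two elementary polynomial inequalities. Both \eqref{eq:mm12-ps} and \eqref{eq:mm12-ps-star} are homogeneous of degree $-1$ in $(\lambda,\mu)$, so the ratio in question is homogeneous of degree $0$ and depends only on the load $\rho=\lambda/\mu$; more usefully, the common factor $2\lambda\mu(\lambda+\mu)(\lambda^2+\lambda\mu+\mu^2)$ of the two denominators cancels, leaving
\begin{equation}
\frac{\Delta_{M/M/1/2-PS}}{\Delta_{M/M/1/2^*-PS}}
=\frac{(5\lambda^4+9\lambda^3\mu+8\lambda^2\mu^2+6\lambda\mu^3+2\mu^4)(\lambda+\mu)}
{3\lambda^5+11\lambda^4\mu+15\lambda^3\mu^2+14\lambda^2\mu^3+8\lambda\mu^4+2\mu^5}.
\label{eq:plan-ratio}
\end{equation}
Write $P(\lambda,\mu)$ for the numerator after expansion, namely $P(\lambda,\mu)=5\lambda^5+14\lambda^4\mu+17\lambda^3\mu^2+14\lambda^2\mu^3+8\lambda\mu^4+2\mu^5$, and $Q(\lambda,\mu)$ for the denominator; note $Q>0$ for all $\lambda,\mu>0$.

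For the lower bound I would subtract directly: $P-Q=2\lambda^5+3\lambda^4\mu+2\lambda^3\mu^2=\lambda^3(2\lambda^2+3\lambda\mu+2\mu^2)>0$ for $\lambda,\mu>0$, so $\Delta_{M/M/1/2-PS}\ge\Delta_{M/M/1/2^*-PS}$. For the upper bound I would instead form $5Q-3P$; carrying out the subtraction coefficient by coefficient gives $5Q-3P=13\lambda^4\mu+24\lambda^3\mu^2+28\lambda^2\mu^3+16\lambda\mu^4+4\mu^5$, a polynomial with only positive coefficients, hence strictly positive for $\lambda,\mu>0$. Dividing by $3Q>0$ then yields $\Delta_{M/M/1/2-PS}/\Delta_{M/M/1/2^*-PS}\le 5/3$, which completes the proof.

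To record that the two constants cannot be improved, I would inspect the extremes of \eqref{eq:plan-ratio}: as $\lambda\to 0$ with $\mu$ fixed the ratio tends to $1$, and as $\rho=\lambda/\mu\to\infty$ it tends to $5/3$. There is no genuine conceptual obstacle here; the only risk is an arithmetic slip when expanding $(5\lambda^4+\cdots)(\lambda+\mu)$ or when forming $P-Q$ and $5Q-3P$, so I would double-check by evaluating \eqref{eq:plan-ratio} at a couple of values of $\rho$ — for instance $\rho=1$ gives $60/53$, comfortably inside $[1,5/3]$.
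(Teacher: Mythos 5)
Your proof is correct: the reduction to the ratio $P/Q$ with $P=5\lambda^5+14\lambda^4\mu+17\lambda^3\mu^2+14\lambda^2\mu^3+8\lambda\mu^4+2\mu^5$ and $Q=3\lambda^5+11\lambda^4\mu+15\lambda^3\mu^2+14\lambda^2\mu^3+8\lambda\mu^4+2\mu^5$ matches the paper's starting point, and your expansions check out ($P-Q=\lambda^3(2\lambda^2+3\lambda\mu+2\mu^2)$ and $5Q-3P=13\lambda^4\mu+24\lambda^3\mu^2+28\lambda^2\mu^3+16\lambda\mu^4+4\mu^5$, both manifestly positive; the value $60/53$ at $\rho=1$ is also right). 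Where you diverge from the paper is in the key step: the paper proves the bounds by showing that $P/Q$ is monotonically increasing in $\lambda$ (via an explicit derivative computation and a sign analysis of a degree-ten polynomial identity) and then reading off the limits $1$ as $\lambda\to 0$ and $5/3$ as $\lambda\to\infty$, whereas you prove the two bounds directly from the positivity of $P-Q$ and $5Q-3P$. Your route is shorter and more elementary — it replaces a derivative and a large polynomial expansion with two coefficient-by-coefficient subtractions — and it additionally makes explicit that both inequalities are strict for all $\lambda,\mu>0$. The paper's route yields slightly more information (monotonicity of the ratio in the load), but that is not needed for the stated bounds; your limit computation at the end already establishes that the constants $1$ and $5/3$ are sharp.
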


From this result, we derive that the aforementioned property shown in \cite{costa2016age} about the preemption of the FGFS when the maximum 
number of packets is two also holds when we consider the PS discipline.

\subsection{The M/M/1/2$^{**}$ Queue}
\label{sec:mm12star2}
We now consider the M/M/1/2 queue with preemption to the oldest packet. In this system,  when a new packet arrives to the system and there are two packets in the queue, unlike in the previous section, the oldest packet in the queue is replaced by the incoming one. We denote this queueing model as the M/M/1/2$^{**}$ queue.

We now present the expression of the AAoI of the M/M/1/2$^{**}$. The proof is presented in Appendix~\ref{app:prop:mm12-star2-ps}.

\begin{proposition}
The AAoI of the M/M/1/2$^{**}$-PS queue is
\begin{multline}
\Delta_{M/M/1/2^{**}-PS}= \\\frac{2\lambda^6+11\lambda^5\mu+25\lambda^4\mu^2+29\lambda^3\mu^3+22\lambda^2\mu^4+10\lambda\mu^5+2\mu^6}
{2\lambda\mu(\lambda+\mu)^3(\lambda^2+\mu^2+\lambda\mu)}.
\label{eq:mm12-ps-star2}
\end{multline}
\label{prop:mm12-star2-ps}
\end{proposition}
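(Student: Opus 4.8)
The plan is to reuse the Stochastic Hybrid Systems (SHS) machinery exactly as in the proofs of Propositions~\ref{prop:mm12-ps} and~\ref{prop:mm12-star-ps}, changing only the transition and reset structure to reflect the preempt-the-oldest rule. First I would model the discrete mode process as a finite continuous-time Markov chain whose states record the number of packets present together with whatever freshness labels are needed: a mode for the empty system, a mode for one packet in service at rate $\mu$, and the two-packet mode in which both packets are served at rate $\mu/2$ (refining it if necessary into a ``fresh'' and a ``stale'' copy, according to whether the next delivery can still lower the age). To each mode $q$ I attach the continuous age vector $\mathbf{x}=(x_0,x_1,\dots)$, where $x_0$ is the AoI at the monitor and the other entries are the ages the AoI would jump to upon delivery of the corresponding in-service packet; in every mode the age vector grows at unit rate, so the drift is $b_q=\mathbf{1}$ and the linear part of the flow vanishes (hence no extra term appears on the left of the SHS fixed-point equation).

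Next I would list all transitions with their rates and reset maps $A_l$, mirroring Section~\ref{sec:mm12star}. Arrivals (rate $\lambda$) take the empty mode to the one-packet mode and the one-packet mode to the two-packet mode, appending a new age entry reset to $0$. Service completions move the one-packet mode back to the empty mode (rate $\mu$) and the two-packet mode back to the one-packet mode (rate $\mu/2$ per packet); for the one-packet completion, and for the completion of the fresher packet in the two-packet mode, the reset copies that packet's age into $x_0$, while the completion of the staler packet is handled by the same convention already adopted in Section~\ref{sec:mm12}. The only genuinely new transition is preemption in the full two-packet mode (rate $\lambda$): because it is now the \emph{oldest} in-service packet that is dropped, the reset overwrites the age entry of that oldest packet by the age entry of the remaining (previously fresher) in-service packet and sets the newcomer's age entry to $0$ --- this is precisely where the M/M/1/2$^{**}$ reset differs from the M/M/1/2$^{*}$ reset, in which it was the newest in-service packet that got discarded.

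Once the SHS is specified I would first solve $\bar\pi Q = 0$ for the stationary distribution of the (small, always ergodic because the buffer is finite) mode chain --- a routine computation yielding rational functions of $\lambda,\mu$ whose denominator is a power of $(\lambda+\mu)$ times $(\lambda^2+\lambda\mu+\mu^2)$ --- and then solve the fixed-point system
\begin{equation*}
\bar v_q \Bigl(\sum_{l:\,q_l=q}\lambda_l\Bigr) \;=\; b_q\,\bar\pi_q \;+\; \sum_{l:\,q_l'=q}\lambda_l\,\bar v_{q_l}\,A_l
\end{equation*}
for the vectors $\bar v_q=\lim_{t\to\infty}\mathbb{E}\bigl[\mathbf{x}(t)\,\mathbf{1}\{q(t)=q\}\bigr]$. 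The AAoI is then $\Delta_{M/M/1/2^{**}-PS}=\sum_q \bar v_{q0}$, and after clearing denominators and collecting powers of $\lambda$ and $\mu$ this should collapse to the claimed expression~\eqref{eq:mm12-ps-star2}.

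The hard part is not the linear algebra itself but getting the reset maps right: in the two-packet mode one must be careful about which of the two simultaneously served packets completes, whether the survivor is still age-relevant, and how the preempt-the-oldest rule permutes and overwrites the age entries, since a single bookkeeping slip produces a different but entirely plausible-looking rational function. I would guard against this with sanity checks: the limit $\mu\to\infty$ (with $\lambda$ fixed) must give $1/\lambda$, the mean age of a Poisson stream delivered instantaneously (one verifies from~\eqref{eq:mm12-ps-star2} that it does); the formula should satisfy the expected ordering $\Delta_{M/M/1/2^{**}-PS}\le\Delta_{M/M/1/2^{*}-PS}$, since discarding the oldest keeps the two freshest packets; and a short discrete-event simulation should agree. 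The remaining symbolic simplification is then mechanical (easily done with a computer algebra system), and the appearance of the extra factor $(\lambda+\mu)$ in the denominator of~\eqref{eq:mm12-ps-star2} compared with~\eqref{eq:mm12-ps-star} can be traced back to the additional age-entry overwrite introduced by the new preemption reset.
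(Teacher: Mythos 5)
Your proposal is correct and follows essentially the same route as the paper: the identical three-state SHS used for the M/M/1/2$^{*}$ case, with the only change being the preempt-the-oldest reset $[x_0,x_1,x_2]\to[x_0,x_2,0]$ at rate $\lambda$ in the full state, after which one solves the linear fixed-point system from (35a) of \cite{yates2018age} and sums $v_{00}+v_{10}+v_{20}$. The one cosmetic difference is that the paper lets $b_q$ grow only the age entries of packets actually present (e.g.\ $b_0=[1,0,0]$) rather than $b_q=\mathbf{1}$; here this does not affect the components that enter the final sum.
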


We aim to compare the AAoI of the M/M/1/2$^{**}$-PS queue with the AAoI of the M/M/1/2$^{**}$-FGFS queue. We present the expression of the latter in the following proposition. The proof is available in Appendix~\ref{app:prop:mm12-star2-fgfs}.

\begin{proposition}
The AAoI of the M/M/1/2$^{**}$-FGFS queue is
\begin{multline}
\Delta_{M/M/1/2^{**}-FGFS}= \\\frac{\lambda^6+6\lambda^5\mu+14\lambda^4\mu^2+15\lambda^3\mu^3+11\lambda^2\mu^4+5\lambda\mu^5+\mu^6}
{\lambda\mu(\lambda+\mu)^3(\lambda^2+\mu^2+\lambda\mu)}.
\label{eq:mm12-fgfs-star2}
\end{multline}
\label{prop:mm12-star2-fgfs}
\end{proposition}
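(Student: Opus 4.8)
The plan is to apply the Stochastic Hybrid Systems (SHS) technique, following the same template as the proofs of Propositions~\ref{prop:mm12-ps}, \ref{prop:mm12-star-ps} and \ref{prop:mm12-star2-ps} (the last being the PS counterpart of the very same queue). I take the discrete state $q(t)\in\{0,1,2\}$ to be the number of packets present; it is the birth--death chain of the M/M/1/2 queue (up-rate $\lambda$, down-rate $\mu$) augmented with a self-transition at $q=2$ of rate $\lambda$ that models a preemptive arrival and does not change the occupancy. The continuous state is an age vector $\mathbf x(t)$ whose active coordinates in state $q$ are $x_0,\dots,x_q$: here $x_0$ is the current AoI at the monitor and, for $j\ge1$, $x_j$ is the age the monitor would jump to if the $j$-th oldest packet present were delivered at time $t$. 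Because FGFS never reorders packets and preemption only deletes the oldest (in-service) one, one checks that $0<x_q(t)<\cdots<x_1(t)<x_0(t)$ whenever $q\ge1$, so every delivery genuinely refreshes the age and three discrete states suffice. Each active coordinate drifts at unit rate, so $\mathbf b_q=\mathbf 1$ on the active block and the continuous-dynamics matrix is zero.

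Next I would record the transitions with their reset maps ($\ast$ marking an inactive coordinate): an arrival to the empty system acts by $(x_0)\mapsto(x_0,0)$; an arrival with one packet present by $(x_0,x_1)\mapsto(x_0,x_1,0)$; a departure from the one-packet state by $(x_0,x_1)\mapsto(x_1,\ast)$; a preemptive arrival with two packets present by $(x_0,x_1,x_2)\mapsto(x_0,x_2,0)$, since the oldest (in-service) packet is discarded, the waiting one enters service and the newcomer waits; and a departure from the two-packet state by $(x_0,x_1,x_2)\mapsto(x_1,x_2,\ast)$, the in-service packet being delivered and the waiting one taking over. The $2\to2$ self-transition is precisely what distinguishes this model from the M/M/1/2$^{*}$ queue.

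I would then solve the two standard SHS fixed-point systems. The stationary occupancy is the usual one (the self-transition cancels on both sides), $\bar\pi_0=\mu^2/D$, $\bar\pi_1=\lambda\mu/D$, $\bar\pi_2=\lambda^2/D$ with $D=\lambda^2+\lambda\mu+\mu^2$. For the correlation vectors $\mathbf v_q=\lim_{t\to\infty}\mathbb E[\mathbf x(t)\,\mathbf 1\{q(t)=q\}]$, the stationary balance in state $q$ is
\[
\mathbf v_q\Big(\sum_{l}\lambda_{q\to l}\Big)=\mathbf b_q\,\bar\pi_q+\sum_{l}\lambda_{l\to q}\,\mathbf v_l\,\hat{\mathbf A}_{l\to q},
\]
where $\hat{\mathbf A}_{l\to q}$ is the binary matrix of the reset map above and, for $q=2$, the sums over $l$ include the self-term $l=2$ on both sides. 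After dropping inactive coordinates this is a linear system in the six scalars $v_{00};\,v_{10},v_{11};\,v_{20},v_{21},v_{22}$, and $\Delta_{M/M/1/2^{**}-FGFS}=v_{00}+v_{10}+v_{20}$. Solving it and collecting the result over the common denominator $\lambda\mu(\lambda+\mu)^3(\lambda^2+\lambda\mu+\mu^2)$ should give \eqref{eq:mm12-fgfs-star2}.

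The only real difficulty is bookkeeping: getting the two nontrivial reset maps right --- the preemptive self-transition, and the $2\to1$ departure, where the age $x_1$ of the packet actually delivered (not a stale coordinate) must be fed into $x_0$ while the waiting packet's age $x_2$ is promoted --- and then carrying out the symbolic elimination and simplification. As sanity checks I would verify that deleting the self-transition recovers the no-preemption value \eqref{eq:mm12-fgfs}, that the expression tends to $1/\mu$ as $\lambda\to0$ and stays bounded as $\lambda\to\infty$, and that it matches a direct numerical solution of the SHS equations.
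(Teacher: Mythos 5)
Your proposal is essentially identical to the paper's proof: the same three-state SHS with the same transition rates and reset maps as in Table~\ref{tab:mm12-fgfs-star2}, the same stationary distribution, the same balance equations, and the AAoI obtained as $v_{00}+v_{10}+v_{20}$ (the paper likewise only states the solution of the linear system without the intermediate algebra). One small slip in your proposed sanity checks: the limits are reversed --- \eqref{eq:mm12-fgfs-star2} behaves like $1/\lambda$, hence is unbounded, as $\lambda\to 0$, and tends to $1/\mu$ as $\lambda\to\infty$.
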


We focus on the ratio $\frac{\Delta_{M/M/1/2^{**}-FGFS}}{\Delta_{M/M/1/2^{**}-PS}}$.  In the following result, as in the previous section, we provide a lower-bound and an upper-bound of this ratio. The proof is given in Appendix~\ref{app:prop:mm12-star2-comparison}.

\begin{proposition}
We have that
\begin{equation}
1\leq \frac{\Delta_{M/M/1/2^{**}-FGFS}}{\Delta_{M/M/1/2^{**}-PS}}\leq 1.0731.
\label{eq:mm12-star2-comparison}
\end{equation}
\label{prop:mm12-star2-comparison}
\end{proposition}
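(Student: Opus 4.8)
The plan is to reduce the ratio to a rational function of the single variable $\rho=\lambda/\mu$ and then treat the two inequalities separately. First note that in Propositions~\ref{prop:mm12-star2-fgfs} and~\ref{prop:mm12-star2-ps} the denominators of $\Delta_{M/M/1/2^{**}-FGFS}$ and $\Delta_{M/M/1/2^{**}-PS}$ agree up to the factor $2$---both are a multiple of $\lambda\mu(\lambda+\mu)^3(\lambda^2+\mu^2+\lambda\mu)$---so the quotient collapses to $\Delta_{M/M/1/2^{**}-FGFS}/\Delta_{M/M/1/2^{**}-PS}=2N_{FGFS}/N_{PS}$, where $N_{FGFS}$ and $N_{PS}$ denote the numerator polynomials in \eqref{eq:mm12-fgfs-star2} and \eqref{eq:mm12-ps-star2}. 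This quotient is homogeneous of degree $0$ in $(\lambda,\mu)$, so after dividing by $\mu^6$ it becomes a function $f(\rho)$ on $(0,\infty)$ with $f(\rho)\to 1$ both as $\rho\to 0^+$ and as $\rho\to\infty$.

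The lower bound is immediate: expanding $2N_{FGFS}-N_{PS}$, all terms cancel except those of degrees $3$, $4$, $5$ in $\lambda$, leaving $2N_{FGFS}-N_{PS}=\lambda^3\mu(\lambda^2+3\lambda\mu+\mu^2)\ge 0$. Hence $f(\rho)\ge 1$ for all $\rho>0$, with equality only in the two limits, which is the left inequality.

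For the upper bound, since $N_{PS}>0$ the claim $f(\rho)\le 1.0731=10731/10000$ is equivalent to the sextic inequality $P(\rho)\ge 0$ on $(0,\infty)$, where $P(\rho)=10731\,N_{PS}(\rho,1)-20000\,N_{FGFS}(\rho,1)$; using the identity from the lower-bound step this can be rewritten as $P(\rho)=731\,N_{PS}(\rho,1)-10000\,\rho^3(\rho^2+3\rho+1)$, that is, $P(\rho)=1462\rho^6-1959\rho^5-11725\rho^4+11199\rho^3+16082\rho^2+7310\rho+1462$. I expect this to be the heart of the proof and harder than the analogous step in Propositions~\ref{prop:mm12-comparison}--\ref{prop:mm12-mm12-star-comparison}: there the upper bounds $1.2$, $4/3$, $5/3$ are attained only as $\rho\to\infty$, so the corresponding polynomial difference has nonnegative coefficients and is positive term by term, whereas here $1.0731$ sits essentially at the interior maximum of $f$ (near $\rho\approx 2.4$), so $P$ has several negative coefficients and dips very close to zero on the positive axis---the constant is practically tight.

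To establish $P(\rho)\ge 0$ I would split $(0,\infty)$ at a short interval $[\rho_1,\rho_2]$ bracketing the critical region: for $\rho\le\rho_1$ the positive low-order terms dominate $1959\rho^5+11725\rho^4$, and for $\rho\ge\rho_2$ the term $1462\rho^6$ dominates, both by elementary grouping; on the remaining compact interval I would certify $P>0$ either by a Sturm / Descartes-type sign count showing $P$ has no real root there, or by exhibiting an explicit decomposition $P(\rho)=\sigma_0(\rho)+\rho\,\sigma_1(\rho)$ with $\sigma_0,\sigma_1$ sums of squares (such a representation exists by the Positivstellensatz for the half-line, since $\inf_{\rho>0}P(\rho)>0$). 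A leaner alternative is to argue on $f$ directly: show that the numerator of $f'$, namely $N_{FGFS}'\,N_{PS}-N_{FGFS}\,N_{PS}'$, changes sign exactly once on $(0,\infty)$, so that $f$ is unimodal, then localise the unique maximizer $\rho^*$ in a short interval and bound $f(\rho^*)$ from above using monotonicity of $f$ on each side together with the endpoint values. In every route the \emph{only} real obstacle is the tightness of the constant $1.0731$: the sign analysis---respectively the localisation of $\rho^*$---must be carried out with enough numerical precision to leave no gap, while the underlying algebra is routine.
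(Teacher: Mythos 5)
Your reduction to the one-variable ratio $f(\rho)=2N_{FGFS}/N_{PS}$ and your lower-bound step are correct, and the explicit factorization $2N_{FGFS}-N_{PS}=\lambda^3\mu(\lambda^2+3\lambda\mu+\mu^2)$ is actually sharper than what the paper writes (it only asserts the inequality is ``easily seen''). Your ``leaner alternative'' for the upper bound is precisely the paper's proof: the paper differentiates $f$ in $\rho$, finds that the numerator of $f'$ vanishes at a single positive point $\rho^*=2.3943$, uses $f\to 1$ at both ends of $(0,\infty)$ together with $f>1$ in the interior to conclude that $\rho^*$ is the unique maximizer, and evaluates $f(\rho^*)=1.0731$. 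The one ingredient neither you nor the paper makes rigorous is the uniqueness of the positive critical point; it comes for free from Descartes' rule of signs. Writing $2N_{FGFS}=N_{PS}+g$ with $g=\rho^5+3\rho^4+\rho^3$ (your identity), the numerator of $f'$ equals $g'N_{PS}-gN_{PS}'=-\rho^2\bigl(2\rho^8+12\rho^7+14\rho^6-36\rho^5-128\rho^4-172\rho^3-122\rho^2-44\rho-6\bigr)$, and the bracketed polynomial has exactly one sign change in its coefficients, hence exactly one positive root; since the numerator of $f'$ is positive near $0$ and negative for large $\rho$, $f$ is unimodal and the proof reduces to evaluating $f(\rho^*)$.

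Your warning about the tightness of the constant is accurate and is the only real caveat: $\sup_{\rho>0}f(\rho)\approx 1.07307$, so $1.0731$ is the maximum rounded up, and your polynomial $P(\rho)=731N_{PS}-10000\rho^3(\rho^2+3\rho+1)$ bottoms out around $8\times 10^2$ near $\rho\approx 2.39$ against coefficients of order $10^4$. Any of the certificates you list (Sturm sequence, sum-of-squares on a bracketing interval, or the unimodality-plus-evaluation route) would close the argument, but none is actually carried out in your write-up; as it stands the upper bound is a viable plan rather than a proof, and the shortest completion is the unimodality route above, which is what the paper does.
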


From this result, we conclude that the AAoI of the M/M/1/2$^{**}$-PS queue is slightly smaller than the AAoI of the   M/M/1/2$^{**}$-FGFS queue.  Furthermore, we also show that $\Delta_{M/M/1/2^{**}-FGFS}$ is, at most, $1.0731$ worse than  $\Delta_{M/M/1/2^{**}-PS}$.

We now want to compare the two different systems with preemption and under the PS discipline.  As a matter of fact, we aim to compare the AAoI of the M/M/1/2$^*$ queue and the AAoI of the M/M/1/2$^{**}$ queue under the PS discipline.  In the following result, we study if discarding the oldest packet agaisnt the newest one leads to a performance improvement. We present the proof in Appendix~\ref{app:prop:mm12-star1-2-ps-comparison}.

\begin{proposition}
We have that
\begin{equation}
1\leq \frac{\Delta_{M/M/1/2^{*}-PS}}{\Delta_{M/M/1/2^{**}-PS}}\leq \frac{3}{2}.
\label{eq:mm12-star1-2-ps-comparison}
\end{equation}
\label{prop:mm12-star1-2-ps-comparison}
\end{proposition}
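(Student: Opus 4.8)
The plan is to reduce the two claimed inequalities to elementary polynomial comparisons in $\lambda$ and $\mu$. Substituting the closed forms from Propositions~\ref{prop:mm12-star-ps} and~\ref{prop:mm12-star2-ps}, the two denominators share the common factor $2\lambda\mu(\lambda^2+\lambda\mu+\mu^2)$ and differ only by one power of $(\lambda+\mu)$, so after cancellation
\[
\frac{\Delta_{M/M/1/2^{*}-PS}}{\Delta_{M/M/1/2^{**}-PS}}
=\frac{(\lambda+\mu)\bigl(3\lambda^5+11\lambda^4\mu+15\lambda^3\mu^2+14\lambda^2\mu^3+8\lambda\mu^4+2\mu^5\bigr)}
{2\lambda^6+11\lambda^5\mu+25\lambda^4\mu^2+29\lambda^3\mu^3+22\lambda^2\mu^4+10\lambda\mu^5+2\mu^6}.
\]
Writing $N(\lambda,\mu)$ for the (expanded) numerator and $D(\lambda,\mu)$ for the denominator, and noting that $D(\lambda,\mu)>0$ for $\lambda,\mu>0$, the statement is equivalent to the pair of inequalities $D\le N\le \tfrac32 D$.

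First I would expand the numerator, obtaining $N=3\lambda^6+14\lambda^5\mu+26\lambda^4\mu^2+29\lambda^3\mu^3+22\lambda^2\mu^4+10\lambda\mu^5+2\mu^6$. For the lower bound the difference telescopes nicely: $N-D=\lambda^6+3\lambda^5\mu+\lambda^4\mu^2=\lambda^4(\lambda^2+3\lambda\mu+\mu^2)\ge 0$, which gives $\Delta_{M/M/1/2^{*}-PS}\ge\Delta_{M/M/1/2^{**}-PS}$; this is the expected direction, since the $^{*}$ policy discards the freshest buffered packet whereas the $^{**}$ policy discards the stalest, so the former keeps older information in the system. For the upper bound I would form $3D-2N=5\lambda^5\mu+23\lambda^4\mu^2+29\lambda^3\mu^3+22\lambda^2\mu^4+10\lambda\mu^5+2\mu^6$, all of whose coefficients are nonnegative, hence $3D-2N\ge 0$ and $N\le\tfrac32 D$. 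Dividing both inequalities by $D$ yields exactly~\eqref{eq:mm12-star1-2-ps-comparison}.

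There is no genuine conceptual obstacle here; the only delicate point is the algebraic bookkeeping in expanding $N$ and in forming $N-D$ and $3D-2N$, so that the announced signs of all coefficients are indeed correct. I would also record that both bounds are tight in the limit: as $\rho=\lambda/\mu\to 0$ the $\mu^6$ terms dominate and $N/D\to 1$, while as $\rho\to\infty$ the leading coefficients $3$ and $2$ dominate and $N/D\to \tfrac32$, so neither constant in the statement can be improved.
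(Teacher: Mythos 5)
Your proof is correct: I verified the expansion $N=3\lambda^6+14\lambda^5\mu+26\lambda^4\mu^2+29\lambda^3\mu^3+22\lambda^2\mu^4+10\lambda\mu^5+2\mu^6$, the difference $N-D=\lambda^4(\lambda^2+3\lambda\mu+\mu^2)$, and $3D-2N=5\lambda^5\mu+23\lambda^4\mu^2+29\lambda^3\mu^3+22\lambda^2\mu^4+10\lambda\mu^5+2\mu^6$, all of which check out, and your tightness remarks at $\rho\to 0$ and $\rho\to\infty$ are also right. However, your route differs from the paper's. The paper forms the same ratio $N/D$ but then proves that it is a monotonically increasing function of $\lambda$ (by differentiating, cross-multiplying, and reducing to the positivity of a degree-$11$ polynomial with nonnegative coefficients), after which the bounds $1$ and $3/2$ follow as the limits at $\lambda\to 0$ and $\lambda\to\infty$. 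Your argument replaces that with the two direct coefficient-wise comparisons $D\le N$ and $2N\le 3D$, which is more elementary and involves far less bookkeeping, at the cost of not establishing monotonicity of the ratio in the load (a slightly stronger structural fact the paper's proof delivers as a by-product, though the proposition itself does not claim it). Either argument suffices for the stated result.
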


From this result, we see that, indeed, preemption replacing the oldest packet in the queue leads to a performance improvement against preemption replacing the newest packet. And that $\Delta_{M/M/1/2^{*}-PS}$ is, at most, $\frac{3}{2}$ worse than $\Delta_{M/M/1/2^{**}-PS}$.  Moreover, from Proposition~\ref{prop:mm12-mm12-star-comparison} and Proposition~\ref{prop:mm12-star1-2-ps-comparison} we derive the following result.  %\textcolor{blue}{THE PROOF!!}

\begin{corollary}
We have that
\begin{equation}
1\leq \frac{\Delta_{M/M/1/2-PS}}{\Delta_{M/M/1/2^{**}-PS}}\leq \frac{5}{2}.
\label{eq:mm12-ps-mm12-star2-ps-comparison}
\end{equation}
\label{cor:mm12-ps-mm12-star2-ps-comparison}
\end{corollary}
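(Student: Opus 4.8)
The plan is to obtain this corollary without touching the explicit rational expressions \eqref{eq:mm12-ps}, \eqref{eq:mm12-ps-star} and \eqref{eq:mm12-ps-star2} at all, simply by composing the two ratio bounds already proved in Proposition~\ref{prop:mm12-mm12-star-comparison} and Proposition~\ref{prop:mm12-star1-2-ps-comparison}. First I would record that for every $\lambda,\mu>0$ each of the three quantities $\Delta_{M/M/1/2-PS}$, $\Delta_{M/M/1/2^{*}-PS}$ and $\Delta_{M/M/1/2^{**}-PS}$ is strictly positive, being a ratio of a polynomial with nonnegative coefficients over a strictly positive polynomial; this makes the following telescoping identity meaningful:
\begin{equation*}
\frac{\Delta_{M/M/1/2-PS}}{\Delta_{M/M/1/2^{**}-PS}}
=\frac{\Delta_{M/M/1/2-PS}}{\Delta_{M/M/1/2^{*}-PS}}\cdot\frac{\Delta_{M/M/1/2^{*}-PS}}{\Delta_{M/M/1/2^{**}-PS}}.
\end{equation*}

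Next I would bound each factor separately. By Proposition~\ref{prop:mm12-mm12-star-comparison} the first factor lies in $[1,5/3]$, and by Proposition~\ref{prop:mm12-star1-2-ps-comparison} the second factor lies in $[1,3/2]$. For the lower bound, the product of two numbers that are each at least $1$ is at least $1$. For the upper bound, the product is at most $\tfrac{5}{3}\cdot\tfrac{3}{2}=\tfrac{5}{2}$. Together these two estimates are exactly \eqref{eq:mm12-ps-mm12-star2-ps-comparison}, which completes the argument.

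The only point worth flagging — and the closest thing to an obstacle — is tightness: chaining two worst-case bounds yields a constant ($5/2$) that is not claimed to be attained, since the supremum of the first ratio and the supremum of the second ratio may be approached on different portions of the parameter space (for instance at different limiting values of the load $\rho=\lambda/\mu$). If one wanted the sharp constant for $\Delta_{M/M/1/2-PS}/\Delta_{M/M/1/2^{**}-PS}$, the telescoping trick would not suffice and one would instead have to run the same kind of direct analysis used for the earlier propositions, namely reduce the inequality to the sign of a single univariate polynomial in $\rho$ after clearing denominators, and verify that sign on $\rho>0$. For the statement as written, however, the two-line composition above is all that is required, so I would present only that.
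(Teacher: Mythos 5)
Your proposal is correct and is exactly the paper's argument: the paper states that the corollary is derived from Proposition~\ref{prop:mm12-mm12-star-comparison} and Proposition~\ref{prop:mm12-star1-2-ps-comparison}, i.e., by multiplying the two ratio bounds $[1,5/3]$ and $[1,3/2]$ to obtain $[1,5/2]$. Your additional remark about the constant $5/2$ not being claimed to be sharp is a fair observation but does not affect the validity of the proof.
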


We conclude that the AAoI of the M/M/1/2 queue under the PS discipline and with optimal preemption can be, at most, 2.5 better than the one without preemption. 

\subsection{Comparison with the M/M/1/1 Queue}
\label{sec:comparison-mm11}
In this section our goal will be comparing the M/M/1/1 queue with the M/M/1/2$^{**}$ queue under the PS discipline.  In the M/M/1/1 queue system, the maximum number of packets that can be stored in the queue is one. Besides, in the M/M/1/1 queue, when a new packet arrives and there is already a packet in the system, the incoming packet is discarded.

It is shown in \cite{costa2016age} the following result that characterizes the AAoI of the M/M/1/1 queue:
%\begin{proposition}
%The AAoI of the M/M/1/1 queue is
\begin{equation}
\Delta_{M/M/1/1}= \frac{2\lambda^2+2\lambda\mu+\mu^2}{\lambda\mu(\lambda+\mu)}.
\label{eq:mm11}
\end{equation}
%\label{prop:mm11}
%\end{proposition}

We now focus on the ratio $\frac{\Delta_{M/M/1/1}}{\Delta_{M/M/1/2^{*}-PS}}$. We give a lower and an upper bound for the ratio in the following result.  The proof is available in Appendix~\ref{app:mm11-mm12-star-ps-comparison}.

\begin{proposition}
We have that
\begin{equation}
1\leq \frac{\Delta_{M/M/1/1}}{\Delta_{M/M/1/2^{*}-PS}}\leq \frac{4}{3}.
\label{eq:mm11-mm12-star-ps-comparison}
\end{equation}
\label{prop:mm11-mm12-star-ps-comparison}
\end{proposition}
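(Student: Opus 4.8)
The plan is to substitute the closed-form expressions \eqref{eq:mm11} and \eqref{eq:mm12-ps-star} into the ratio $\Delta_{M/M/1/1}/\Delta_{M/M/1/2^{*}-PS}$ and cancel the common factor $\lambda\mu(\lambda+\mu)$ that appears in both denominators. Expanding the product $2(2\lambda^2+2\lambda\mu+\mu^2)(\lambda+\mu)(\lambda^2+\lambda\mu+\mu^2)$ in the numerator, the ratio reduces to the quotient of two homogeneous polynomials of degree five,
\begin{equation}
\frac{\Delta_{M/M/1/1}}{\Delta_{M/M/1/2^{*}-PS}}=\frac{4\lambda^5+12\lambda^4\mu+18\lambda^3\mu^2+16\lambda^2\mu^3+8\lambda\mu^4+2\mu^5}{3\lambda^5+11\lambda^4\mu+15\lambda^3\mu^2+14\lambda^2\mu^3+8\lambda\mu^4+2\mu^5}.
\label{eq:plan-ratio}
\end{equation}
Since both polynomials have strictly positive coefficients, the quotient is well defined and positive for all $\lambda,\mu>0$; by homogeneity one may equivalently set $\mu=1$ and view it as a function of $\rho=\lambda/\mu$ alone.

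For the lower bound I would subtract the denominator of \eqref{eq:plan-ratio} from its numerator; the difference equals $\lambda^5+\lambda^4\mu+3\lambda^3\mu^2+2\lambda^2\mu^3$, a sum of monomials with nonnegative coefficients, hence $\ge 0$ for $\lambda,\mu\ge 0$. Therefore the numerator dominates the denominator and the ratio is at least $1$.

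For the upper bound I would compare $3$ times the numerator of \eqref{eq:plan-ratio} with $4$ times its denominator: a direct expansion gives $4\cdot(\text{denominator})-3\cdot(\text{numerator})=8\lambda^4\mu+6\lambda^3\mu^2+8\lambda^2\mu^3+8\lambda\mu^4+2\mu^5$, again a polynomial with nonnegative coefficients and therefore $\ge 0$. Hence $3\cdot(\text{numerator})\le 4\cdot(\text{denominator})$, so the ratio does not exceed $4/3$. Combining the two estimates yields \eqref{eq:mm11-mm12-star-ps-comparison}.

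The only genuine effort is the polynomial bookkeeping needed to bring the ratio to the form \eqref{eq:plan-ratio}; once the two degree-five polynomials are in hand, both inequalities are immediate, because the two relevant differences happen to have no negative coefficients, so no case analysis or calculus in $\rho$ is required. The step most prone to error is the expansion of $2(2\lambda^2+2\lambda\mu+\mu^2)(\lambda+\mu)(\lambda^2+\lambda\mu+\mu^2)$, which should be carried out carefully.
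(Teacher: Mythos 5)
Your proposal is correct: the algebraic reduction to the ratio of the two degree-five homogeneous polynomials matches the paper's equation exactly (I verified the expansion of $2(2\lambda^2+2\lambda\mu+\mu^2)(\lambda+\mu)(\lambda^2+\lambda\mu+\mu^2)$ and both of your coefficient differences), but you finish the argument differently and more simply. The paper proves that the ratio is an increasing function of $\lambda$ by computing its derivative, expanding a product of degree-four and degree-five polynomials on each side of an inequality, and simplifying to a sum of positive monomials; the bounds $1$ and $4/3$ are then obtained as the limits when $\lambda\to 0$ and $\lambda\to\infty$. You instead check directly that $(\text{numerator})-(\text{denominator})=\lambda^5+\lambda^4\mu+3\lambda^3\mu^2+2\lambda^2\mu^3\ge 0$ and that $4(\text{denominator})-3(\text{numerator})=8\lambda^4\mu+6\lambda^3\mu^2+8\lambda^2\mu^3+8\lambda\mu^4+2\mu^5\ge 0$, which yields both inequalities with no calculus and far less bookkeeping. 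The trade-off is that the paper's monotonicity argument additionally shows the bounds are sharp (attained in the limits $\lambda\to 0$ and $\lambda\to\infty$), whereas your argument establishes only the stated inequalities; since the proposition claims nothing about tightness, your proof fully suffices, and you could recover sharpness in one extra line by evaluating the two limits of the polynomial ratio.
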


From this result, we see that the AAoI of the M/M/1/2$^{*}$ queue under the PS discipline is smaller than the AAoI of the M/M/1/1 queue (and, as a consequence, AAoI of the M/M/1/2$^{**}$ queue under the PS discipline is smaller than the AAoI of the M/M/1/1 queue). Moreover, we conclude that $\Delta_{M/M/1/1}$ will be, at most, 2 times worse than $\Delta_{M/M/1/2^{**}-PS}$, i.e., 
$$
1\leq \frac{\Delta_{M/M/1/1}}{\Delta_{M/M/1/2^{**}-PS}}\leq 2.
$$

We now compare the AAoI of the M/M/1/1 queue with the AAoI of the M/M/1/2 queue. We present the proof in Appendix~\ref{app:prop:mm11-mm12-star2-ps-comparison2}.

\begin{proposition}
We have that
\begin{equation}
0.9641\leq \frac{\Delta_{M/M/1/2-PS}}{\Delta_{M/M/1/1}}\leq \frac{5}{4}.
\label{eq:mm11-mm12-star2-ps-comparison2}
\end{equation}
\label{prop:mm11-mm12-star2-ps-comparison2}
\end{proposition}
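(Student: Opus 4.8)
The plan is to turn the two-parameter ratio into a one-variable problem in the load $\rho=\lambda/\mu$. Taking the expression \eqref{eq:mm12-ps} for $\Delta_{M/M/1/2-PS}$ and the expression \eqref{eq:mm11} for $\Delta_{M/M/1/1}$, the common factor $\lambda\mu(\lambda+\mu)$ cancels and, after dividing numerator and denominator by $\mu^{4}$, one obtains
\begin{equation*}
\frac{\Delta_{M/M/1/2-PS}}{\Delta_{M/M/1/1}}
= f(\rho):=\frac{5\rho^{4}+9\rho^{3}+8\rho^{2}+6\rho+2}{4\rho^{4}+8\rho^{3}+10\rho^{2}+6\rho+2},
\qquad \rho>0 .
\end{equation*}
Thus it suffices to show $0.9641\le f(\rho)\le \tfrac54$ on $(0,\infty)$. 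Write $N(\rho)$ and $D(\rho)$ for the numerator and denominator.

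The upper bound is elementary:
\begin{equation*}
\tfrac54-f(\rho)=\frac{5D(\rho)-4N(\rho)}{4D(\rho)}=\frac{4\rho^{3}+18\rho^{2}+6\rho+2}{4D(\rho)}>0
\end{equation*}
for every $\rho>0$, since numerator and denominator are sums of positive monomials; hence $f(\rho)<\tfrac54$. Letting $\rho\to\infty$ shows $f(\rho)\to\tfrac54$, so the constant $\tfrac54$ is sharp and cannot be improved.

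For the lower bound a single sign computation does not suffice, because
\begin{equation*}
f(\rho)-1=\frac{N(\rho)-D(\rho)}{D(\rho)}=\frac{\rho^{2}(\rho-1)(\rho+2)}{D(\rho)},
\end{equation*}
so $f<1$ on $(0,1)$, $f(1)=1$, $f>1$ on $(1,\infty)$, while $f(0)=1$; hence the infimum of $f$ is attained at an interior point of $(0,1)$. I would locate it by differentiating: $f'$ has the same sign as $N'(\rho)D(\rho)-N(\rho)D'(\rho)$, whose degree-seven terms cancel, leaving a polynomial of degree six; an elementary sign analysis (sign pattern of its coefficients together with evaluation at a few points) shows it has exactly one positive root $\rho^{\star}$, with $f'<0$ to its left and $f'>0$ to its right, so $\rho^{\star}$ is the global minimizer. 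Numerically $\rho^{\star}\approx 0.47$ and $f(\rho^{\star})\approx 0.96412>0.9641$, which yields the stated bound. An alternative that packages the numerics differently is to verify the quartic inequality $N(\rho)\ge 0.9641\,D(\rho)$ on $[0,\infty)$ directly: the polynomial $N-0.9641\,D$ has positive leading coefficient and positive constant term, and one checks (e.g.\ via a Sturm sequence, or by showing its unique nonnegative critical point has positive value) that it has no root in $[0,\infty)$.

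The step I expect to be the real obstacle is the lower bound: the minimizing load $\rho^{\star}$ is a root of a degree-six polynomial with no usable closed form, so $0.9641$ is necessarily a truncated numerical value, and any rigorous argument must combine the exact factorizations above with a controlled numerical estimate of $f(\rho^{\star})$ (or a certificate of positivity for the quartic $N-0.9641\,D$). The upper bound, by contrast, is a one-line polynomial positivity check.
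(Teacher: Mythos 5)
Your reduction to the single-variable ratio $f(\rho)=\frac{5\rho^4+9\rho^3+8\rho^2+6\rho+2}{4\rho^4+8\rho^3+10\rho^2+6\rho+2}$ is exactly the paper's starting point, and your lower-bound argument (factor $N-D=\rho^2(\rho-1)(\rho+2)$, observe $f(0^+)=1$ and $f<1$ on $(0,1)$, locate the unique positive root of $N'D-ND'$, which after cancellation of the degree-seven terms is $\rho\,(4\rho^5+36\rho^4+44\rho^3+20\rho^2-6\rho-8)$ with a single sign change, hence a unique positive critical point near $\rho^\star\approx 0.47$) is the same as the paper's. Your upper bound is handled differently and more cleanly: you show $5D-4N=4\rho^3+18\rho^2+6\rho+2>0$ directly, whereas the paper splits into $\rho\le 1$ and $\rho\ge 1$ and argues monotonicity of $f$ on $[1,\infty)$ from the positivity of the derivative polynomial there; your one-line check is preferable and also immediately shows the constant $5/4$ is sharp.

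The one point where your argument (and, to be fair, the paper's) does not actually close is the final numerical verification. The minimum of $f$ is attained at $\rho^\star\approx 0.4698$ and equals approximately $0.964098$, not $0.96412$: for instance $f(0.47)=7.765591\ldots/8.054771\ldots\approx 0.9640983<0.9641$, so the inequality $N(\rho)\ge 0.9641\,D(\rho)$ you propose to certify is in fact false on a small interval around $\rho=0.47$ (by about $1.7\times 10^{-6}$). The constant $0.9641$ in the statement is the minimum rounded to four decimals, and the literal inequality would require replacing it by, say, $0.9640$ (or stating the bound as an approximation). Your claimed check $f(\rho^\star)\approx 0.96412>0.9641$ rests on an arithmetic slip in the fifth decimal; as written, neither a Sturm-sequence certificate nor a positivity certificate for $N-0.9641\,D$ can succeed, because that quartic genuinely has positive roots. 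Everything else in your proposal is correct.
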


From this result, we conclude that, when $\lambda\in[0,\mu]$, then we have that $\Delta_{M/M/1/1}\geq \Delta_{M/M/1/2-PS}$, whereas when $\lambda\in[\mu,\infty)$, we have that $\Delta_{M/M/1/2-PS}\geq \Delta_{M/M/1/1}$. Besides, the AAoI of the M/M/1/1 queue can be, at most, 5/4 times better than the AAoI of the M/M/1/2-PS queue and  the AAoI of the M/M/1/2-PS queue can be, at most, $1/0.9641\approx1.0372$ times better than the AAoI of the M/M/1/1 queue.

\subsection{Comparison with the M/M/1/1$^*$ Queue}
\label{sec:comparison-mm11star}
We now want to extend the results of the previous section to the M/M/1/1$^*$ queue.  In the M/M/1/1$^*$ system we have preemption, when a new packet arrives while there is a packet in the queue, the packet will be replaced by the incoming one. 

In \cite{yates2018age} it is shown that the expression of $\Delta_{M/M/1/1^*}$ is
\begin{equation}
\Delta_{M/M/1/1^*}= \frac{\lambda+\mu}{\lambda\mu}.
\label{eq:mm11-star}
\end{equation}

Now, we compare the AAoI of the M/M/1/1$^*$ queue with the AAoI of the M/M/1/2$^{**}$-PS queue. In order to that we give the following result. The proof is available in Appendix~\ref{app:prop:mm11-star-mm12-star2-ps-comparison}.

\begin{proposition}
We have that
\begin{equation}
1\leq \frac{\Delta_{M/M/1/2^{**}-PS}}{\Delta_{M/M/1/1^*}}\leq 1.0788.
\label{eq:mm11-star-mm12-star2-ps-comparison}
\end{equation}
\label{prop:mm11-star-mm12-star2-ps-comparison}
\end{proposition}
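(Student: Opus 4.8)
The plan is to reduce the two‑variable statement to a one‑variable rational inequality by homogeneity. Both $\Delta_{M/M/1/2^{**}-PS}$ in \eqref{eq:mm12-ps-star2} and $\Delta_{M/M/1/1^*}=\frac{\lambda+\mu}{\lambda\mu}$ in \eqref{eq:mm11-star} are quotients of homogeneous polynomials in $(\lambda,\mu)$ of the same total degree, so their ratio depends only on $\rho=\lambda/\mu$. Setting $\mu=1$ and $\rho=\lambda$, the factor $\lambda\mu$ in the denominator of \eqref{eq:mm12-ps-star2} together with one power of $(\lambda+\mu)$ cancels against $\Delta_{M/M/1/1^*}=\frac{\rho+1}{\rho}$, leaving
\begin{equation}
R(\rho):=\frac{\Delta_{M/M/1/2^{**}-PS}}{\Delta_{M/M/1/1^*}}
=\frac{2\rho^6+11\rho^5+25\rho^4+29\rho^3+22\rho^2+10\rho+2}{2(\rho+1)^4(\rho^2+\rho+1)},
\label{eq:Rrho-plan}
\end{equation}
and it then suffices to prove $1\le R(\rho)\le 1.0788$ for every $\rho>0$.

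For the lower bound I would expand the denominator, $2(\rho+1)^4(\rho^2+\rho+1)=2\rho^6+10\rho^5+22\rho^4+28\rho^3+22\rho^2+10\rho+2$, and subtract it from the numerator of \eqref{eq:Rrho-plan}; the difference equals $\rho^5+3\rho^4+\rho^3=\rho^3(\rho^2+3\rho+1)$, which is nonnegative for $\rho\ge0$. Hence
\[
R(\rho)-1=\frac{\rho^3(\rho^2+3\rho+1)}{2(\rho+1)^4(\rho^2+\rho+1)}\ge 0,
\]
with equality only as $\rho\to0$, which settles the left inequality (and is consistent with a larger buffer never hurting in this regime).

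For the upper bound, note that $R(\rho)\to1$ both as $\rho\to0^+$ and as $\rho\to\infty$ while $R>1$ on $(0,\infty)$, so $R$ attains a finite maximum at an interior critical point. Clearing denominators in $R'(\rho)=0$ produces a polynomial equation $Q(\rho)=0$ whose positive roots are the candidate maximizers; a sign count (Descartes' rule, refined by a Sturm sequence) shows there is exactly one, located numerically near $\rho^\star\approx2.5$, where $R(\rho^\star)\approx1.0787<1.0788$. Equivalently — and this is the form best suited to the write-up — the claim $R(\rho)\le1.0788$ is, for all $\rho\ge0$, the polynomial inequality
\begin{multline}
P(\rho):=1.0788\bigl(2\rho^6+10\rho^5+22\rho^4+28\rho^3+22\rho^2+10\rho+2\bigr)\\
-\bigl(2\rho^6+11\rho^5+25\rho^4+29\rho^3+22\rho^2+10\rho+2\bigr)\ge 0,
\end{multline}
a degree‑$6$ polynomial with $P(0)=0.1576>0$; one finishes by certifying that $P$ has no positive real root (Sturm's theorem, or interval arithmetic).

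The step I expect to be the real obstacle is exactly this last one: unlike the lower bound, the upper bound is essentially tight and the maximizer is not a nice algebraic number, so $1.0788$ is a rounded‑up value of the true supremum and the argument must rest on a polynomial‑positivity or numerical certificate rather than a clean factorization. A simplification worth trying first is the palindromic substitution $u=\rho+\rho^{-1}\ge2$: the denominator collapses to $2\rho^3(u+1)(u+2)^2$ and the numerator becomes $\rho^3(2u+1)(u^2+5u+7)-\rho(3\rho+1)$, which makes the structure of $R$ more transparent and may shorten the optimization, though I do not expect it to remove the need for the final numerical check.
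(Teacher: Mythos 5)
Your route is essentially the paper's: reduce to a one-variable ratio in $\rho=\lambda/\mu$ by homogeneity, obtain the lower bound from a comparison of numerator and denominator, and obtain the upper bound by locating the unique positive critical point of the ratio. Your reduction is correct, and your identity $N(\rho)-D(\rho)=\rho^3(\rho^2+3\rho+1)$ with $D(\rho)=2(\rho+1)^4(\rho^2+\rho+1)$ gives a slightly cleaner lower-bound argument than the paper's ``easily seen'' coefficientwise comparison. The paper carries out the upper bound exactly as you sketch: it clears denominators in $R'(\rho)=0$, obtains
\begin{equation*}
-2\rho^{10}-12\rho^9-14\rho^8+36\rho^7+128\rho^6+172\rho^5+122\rho^4+44\rho^3+6\rho^2=0,
\end{equation*}
argues that this equation has a unique positive root $\rho\approx 2.3943$, and evaluates the ratio there.

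Two numerical points need correcting. First, the maximizer is at $\rho\approx 2.394$ (the paper's $2.3943$), not near $2.5$, and the maximum is $R(2.3943)\approx 1.078828$, not $\approx 1.0787$. Second, and more importantly for your finishing step: because the true supremum is about $1.07883>1.0788$, the polynomial $P(\rho)=1.0788\,D(\rho)-N(\rho)$ \emph{does} have positive real roots (indeed $P(2.3943)\approx-0.067<0$), so the certificate ``$P$ has no positive real root'' cannot be established --- it is false as stated. This defect is inherited from the constant in the proposition itself, which is a rounded value of the exact algebraic maximum rather than a true upper bound; both the statement and your certificate become correct if the constant is enlarged to any value at least $1.07883$, e.g.\ $1.0789$. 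With that repair, your Descartes/Sturm argument for a single interior maximum goes through and coincides with the paper's proof.
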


\section{The M/M/1 Queue}
\label{sec:mm1}

We now analyze the AAoI of  the M/M/1 queue with the PS discipline. For this case, we assume that $\rho<1$ to ensure stability. 
Our first result of this section consists of establishing a lower-bound of $\Delta_{M/M/1-PS}$. 
Its proof is available in Appendix~\ref{app:lem:mm1-lowerbound}. 

\begin{lemma}
%\begin{equation}
$$
\Delta_{M/M/1-PS}> \frac{\mu-\lambda}{\lambda\mu}.
$$
%\end{equation}
\label{lem:mm1-lowerbound}
\end{lemma}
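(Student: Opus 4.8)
The plan is to bound $\Delta_{M/M/1-PS}$ from below by discarding all the aging contributions that the monitor actually experiences and keeping only an unavoidable term. Recall the standard decomposition of the average AoI as $\Delta = \lambda\, \mathbb{E}[Y\,(Y/2 + T)]$ (up to the usual normalization), or, in the SHS language used in the rest of the paper, $\Delta$ equals a weighted sum of the stationary age-components $\bar v_q$ over the discrete states $q$ of the SHS Markov chain, divided by the total stationary probability. I would start from whichever of these two expressions the paper has already set up for the M/M/1-PS queue; since the lemma is proved in the appendix, I will assume the SHS transition map and the fixed-point equations for the M/M/1-PS chain are available there.

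First I would identify an idealized lower-bounding system. No matter what the service discipline does to already-queued packets, a fresh delivery cannot reach the monitor faster than a system in which every generated packet is delivered instantaneously the moment it is generated — but even then the age grows linearly between the Poisson generation epochs, whose inter-arrival times are exponential with rate $\lambda$, giving a contribution of order $1/\lambda$; and on top of that the PS server cannot serve faster than rate $\mu$ even when alone in the system, so the residual service time of the packet in service contributes a nonnegative amount on the order of $1/\mu$. Combining these two effects, the age at a delivery instant is stochastically at least the residual time structure of an M/M/1-type system, and carrying the inequality through the time-average integral yields $\Delta_{M/M/1-PS} > 1/\lambda - 1/\mu = (\mu-\lambda)/(\lambda\mu)$. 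Concretely, I would realize this by either (i) a sample-path/coupling argument comparing PS with the zero-wait FCFS idealization, or (ii) directly from the SHS fixed-point equations: sum the age equations, drop the strictly positive terms coming from states with two or more packets in the system, and solve the reduced linear system, whose solution is exactly $(\mu-\lambda)/(\lambda\mu)$; the dropped terms being strictly positive gives the strict inequality.

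The main obstacle I anticipate is making the strictness rigorous and clean: one must verify that the terms being discarded are genuinely strictly positive (the stationary probability of having $\geq 2$ packets in an M/M/1 queue is positive for all $\rho\in(0,1)$, and the corresponding age-components are positive), so that the bound is strict rather than merely $\geq$. A secondary technical point, if the coupling route is taken, is handling the PS time-dilation: under PS a packet's service slows down when others arrive, so the "residual service" seen at a delivery is not simply $\mathrm{Exp}(\mu)$; I would sidestep this by not tracking individual residuals and instead bounding the entire inter-delivery gap plus in-service age below by the corresponding quantities in the faster, never-slowed idealization. Either way the algebra is light — the whole content is the comparison inequality, and $(\mu-\lambda)/(\lambda\mu)$ falls out as the age of the degenerate "instantaneous-generation, rate-$\mu$ residual" reference system.
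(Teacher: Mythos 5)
Your route (ii) is in essence the paper's argument, and it works; the coupling route (i) and the surrounding heuristics are where you go astray. The paper's proof is even more minimal than your description: it sets up the SHS on the full birth--death chain $\{0,1,2,\dots\}$ with $\pi_i=(1-\rho)\rho^i$, writes down \emph{only} the fixed-point equation for state $0$, namely $\lambda v_{00}=\pi_0+\mu v_{11}\geq \pi_0$, so that $v_{00}\geq \pi_0/\lambda=(1-\rho)/\lambda=(\mu-\lambda)/(\lambda\mu)$, and then uses $\Delta_{M/M/1-PS}=\sum_q v_{q0}>v_{00}$, the strictness coming from the positivity of the discarded $v_{q0}$, $q\geq 1$. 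Two details in your write-up deserve correction. First, your claim that the reduced linear system obtained by discarding the states with two or more packets ``solves to exactly $(\mu-\lambda)/(\lambda\mu)$'' is not right: that truncation yields $v_{11}=\pi_1/(\lambda+\mu)$ and hence $v_{00}=\bigl(\pi_0+\mu\pi_1/(\lambda+\mu)\bigr)/\lambda$, which is strictly larger than $(\mu-\lambda)/(\lambda\mu)$; this is harmless for the lemma (dropping nonnegative inflows in this monotone system can only decrease the solution), but the constant does not ``fall out'' of that reduction. Second, the heuristic by which a $1/\lambda$ inter-generation contribution and a $1/\mu$ residual-service contribution ``combine'' to give $1/\lambda-1/\mu$ does not parse --- adding two nonnegative delays cannot produce a difference, and the zero-wait idealization you invoke has average age $1/\lambda$, not $1/\lambda-1/\mu$. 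The constant in the lemma is best read as $\pi_0/\lambda$, the stationary probability of an empty system divided by the arrival rate, which is exactly what the state-$0$ balance equation delivers.
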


It is shown in \cite{kaul2012real} that 
\begin{equation}
\Delta_{M/M/1-FGFS}=\frac{1}{\mu}\left(1+\frac{1}{\rho}+\frac{\rho^2}{1-\rho}\right),
\label{eq:mm1-fgfs-aaoi}
 \end{equation}
 which is clearly unbounded from above when $\lambda\to 0$. This result implies that, when we consider the M/M/1-FGFS model, the arrival rate that minimizes the mean number of customers does not minimize the AAoI. 
Using Lemma~\ref{lem:mm1-lowerbound}, we now show that this property also holds when we consider the M/M/1-PS model.

\begin{proposition}
When $\lambda\to 0$, $ \Delta_{M/M/1-PS}$ is unbounded from above.
\label{prop:mm1-unbounded-lambda-zero}
\end{proposition}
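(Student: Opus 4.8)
The plan is to deduce the statement directly from Lemma~\ref{lem:mm1-lowerbound}, which has already done the substantive work. That lemma gives, for every admissible $\lambda$ (i.e. $\rho<1$), the bound
\begin{equation}
\Delta_{M/M/1-PS} > \frac{\mu-\lambda}{\lambda\mu} = \frac{1}{\lambda} - \frac{1}{\mu}.
\label{eq:lb-rewrite}
\end{equation}
So the first step is simply to rewrite the right-hand side of Lemma~\ref{lem:mm1-lowerbound} in the form $\tfrac1\lambda - \tfrac1\mu$, which makes the $\lambda\to 0$ behaviour transparent.

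The second step is to take the limit: as $\lambda \to 0^+$ with $\mu$ fixed, the term $1/\lambda$ diverges to $+\infty$ while $1/\mu$ stays constant, so the right-hand side of \eqref{eq:lb-rewrite} tends to $+\infty$. More precisely, for any $M>0$ one can choose $\lambda$ small enough (e.g. $\lambda < \big(M + \tfrac1\mu\big)^{-1}$, which also satisfies $\rho<1$ once $M$ is large) so that $\tfrac1\lambda - \tfrac1\mu > M$, and then \eqref{eq:lb-rewrite} forces $\Delta_{M/M/1-PS} > M$. Hence $\Delta_{M/M/1-PS}$ is unbounded from above as $\lambda\to 0$.

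There is essentially no obstacle in this step; the only care needed is the bookkeeping that the chosen $\lambda$ still lies in the stability region $\rho<1$, which is automatic for small $\lambda$. All the difficulty has been front-loaded into Lemma~\ref{lem:mm1-lowerbound} itself (bounding the AAoI of the PS M/M/1 queue from below by comparing it with a suitable tractable quantity); given that lemma, the present proposition is an immediate corollary. As a remark, one may also note that \eqref{eq:lb-rewrite} reproduces the same qualitative phenomenon already visible in \eqref{eq:mm1-fgfs-aaoi} for the FGFS discipline, so the conclusion that the throughput-optimal (or mean-queue-length-optimal) arrival rate does not minimise the AAoI carries over verbatim to the PS discipline.
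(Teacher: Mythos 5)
Your proposal is correct and takes exactly the same route as the paper: both deduce the result immediately from Lemma~\ref{lem:mm1-lowerbound} by observing that the lower bound $\frac{\mu-\lambda}{\lambda\mu}=\frac{1}{\lambda}-\frac{1}{\mu}$ diverges as $\lambda\to 0$. Your additional bookkeeping (the explicit choice of $\lambda$ for a given $M$ and the check that $\rho<1$ is preserved) is fine but not needed beyond what the paper states.
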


\begin{proof}
From Lemma~\ref{lem:mm1-lowerbound}, the desired result follows noting that, when $\lambda\to 0$, $\frac{\mu-\lambda}{\lambda\mu}$ tends to infinity. 
\end{proof}

We have tried to provide an 
explicit expression of $\Delta_{M/M/1-PS}$ using the Stochastic Hybrid System (SHS) technique. Unfortunately, the
derived expression are extremely difficult and, therefore, we did not succeed in characterizing $\Delta_{M/M/1-PS}$. After extensive numerical experiments, we conjecture that the AAoI of the M/M/1-PS queue has a similar form as  $\Delta_{M/M/1-FGFS}$. To be more precise, we now present our conjecture.
\begin{conjecture}
\begin{equation}
\Delta_{M/M/1-PS}=\frac{1}{\mu}\left(\frac{1}{\rho}+1+C(\rho)\right),
%\tag{MM1-PS-CONJEC}
\label{eq:mm1ps-conjec}
\end{equation}
where $\lim_{\rho\to 0}C(\rho)=0$, $\lim_{\rho\to 1}C(\rho)=+\infty$ and $0 \leq C(\rho)\leq \frac{\rho^2}{1-\rho}$ for all $\rho\in (0,1)$. %$\frac{\Delta_{M/M/1-FGFS}}{\Delta_{M/M/1-PS}}$ is increasing with $\lambda$ (JD: What is this condition?). 
Moreover, when $\rho$ is large enough,
\begin{equation}
\frac{(\rho-0.5)^3}{1- \rho} \leq C(\rho)\leq \frac{0.75\rho}{(1- \rho)^{\frac{1}{2}}}.
\label{eq:c-lb-ub}
\end{equation}
\label{conject:mm1ps}
\end{conjecture}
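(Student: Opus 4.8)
The plan is to apply the Stochastic Hybrid System machinery of the previous sections, now with a countably infinite set of discrete modes: mode $q\in\{0,1,2,\dots\}$ records the number of packets in the system, and in mode $q$ the continuous state is $(x_0,x_1,\dots,x_q)$, with $x_0=\Delta(t)$ the current AoI and $x_1,\dots,x_q$ the ages of the packets in service, all of unit drift. The transitions are an arrival ($q\to q+1$ at rate $\lambda$, appending a component $0$) and a service completion ($q\to q-1$ at total rate $\mu$ for $q\geq1$); under PS the latter splits into $q$ equally likely sub-transitions, one per packet that may finish, and in the one where the freshest packet finishes the reset $x_0\leftarrow\min_j x_j$ occurs. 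The stationary balance equations for the expected-age vectors $\bar v_q$ then read, schematically, $\bar v_q(\lambda+\mu\mathbf 1\{q\geq1\})=\pi_q\mathbf 1+\lambda\,\bar v_{q-1}A^{\mathrm{arr}}+\mu\sum_{\mathrm{sub}}\bar v_{q+1}A^{\mathrm{dep}}$, with $\Delta_{M/M/1-PS}=\sum_{q\geq0}\bar v_{q,0}$; this sum converges for $\rho<1$ because $\pi_q=(1-\rho)\rho^q$ and the mode-$q$ contribution to the age is bounded by $q$ times a sojourn-time moment, so $C(\rho):=\mu\,\Delta_{M/M/1-PS}-\tfrac1\rho-1$ is well defined on $(0,1)$. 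I would not attempt to solve the resulting three-term recurrence in $q$: its reset maps do not act diagonally on the ordered age vector, because the departing packet is \emph{uniform} among those present, and this is precisely the source of the intractability mentioned above. Instead I would prove the conjecture as a list of inequalities for $C(\rho)$.

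For the lower bound $C(\rho)\geq0$: since preemptive LCFS is known to be age-optimal for the M/M/1 queue and its AAoI coincides with $\Delta_{M/M/1/1^*}=\tfrac1\lambda+\tfrac1\mu$ from \eqref{eq:mm11-star}, we get $\Delta_{M/M/1-PS}\geq\tfrac1\mu\big(\tfrac1\rho+1\big)$, i.e. $C(\rho)\geq0$ (Lemma~\ref{lem:mm1-lowerbound} already gives the weaker $\Delta_{M/M/1-PS}>\tfrac1\lambda-\tfrac1\mu$). For the upper bound $C(\rho)\leq\frac{\rho^2}{1-\rho}$, equivalently $\Delta_{M/M/1-PS}\leq\Delta_{M/M/1-FGFS}$ with the right side given by \eqref{eq:mm1-fgfs-aaoi}: I would couple the two queues on a common Poisson arrival stream and common cumulative-work process — the M/M/1-PS and M/M/1-FCFS sojourn-time laws coincide — and run a forward induction over the successive departure epochs, the mechanism being that under PS a fresh arrival immediately receives a share of the server and can thus advance $t-\Delta(t)$ sooner than under FCFS, while the ``wasted'' departures of stale stragglers that PS produces never cost more than this saving. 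Combining the two bounds already gives $\lim_{\rho\to0}C(\rho)=0$ by squeezing, so that part is free.

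It remains to prove $\lim_{\rho\to1}C(\rho)=+\infty$ and the two-sided heavy-traffic estimates \eqref{eq:c-lb-ub}, and this is the main obstacle. Here I would pass to the heavy-traffic regime, in which the scaled workload of the M/M/1-PS queue converges to a reflected Brownian motion and the age process inherits a limiting stationary description: the lower bound should follow from relating $\Delta_{M/M/1-PS}$ to the stationary number in system — geometric with mean $\rho/(1-\rho)$ — through the record structure of $t-\Delta(t)$ (a departure advances this quantity only when it delivers a packet fresher than every previously delivered one), and the upper bound from bounding how frequently such record departures occur. The difficulty is that under PS the age is neither a clean functional of the workload nor a clean renewal reward, because of the records/stragglers interplay, so establishing even the qualitative divergence — and a fortiori the numerical constants $0.75$ and the cubic $(\rho-0.5)^3$ in \eqref{eq:c-lb-ub} — requires a careful second-order heavy-traffic expansion, which is exactly the computation that defeats the closed-form route. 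Overall I would expect $C\geq0$ and the $\rho\to0$ limit to be routine, the coupling for $C\leq\rho^2/(1-\rho)$ to need care, and the heavy-traffic constants to be the genuinely open piece, to be settled last.
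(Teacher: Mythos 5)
First, a point of context: the statement you are trying to prove is stated in the paper as a \emph{conjecture}, supported only by extensive numerical experiments and by the lower bound of Lemma~\ref{lem:mm1-lowerbound}; the paper explicitly reports that the SHS route leads to an intractable system and offers no proof. So there is no paper proof to match, and your submission should be judged as a standalone attempt to settle an open statement. Judged that way, it is a research plan rather than a proof, and you say so yourself: the $\rho\to 1$ divergence and the two-sided heavy-traffic estimate \eqref{eq:c-lb-ub} --- including the specific constants $0.75$ and $(\rho-0.5)^3$ --- are deferred to an unperformed ``second-order heavy-traffic expansion.'' Those are precisely the parts that make the statement a conjecture, so the main content remains open in your write-up. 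The one piece that is genuinely close to a proof is $C(\rho)\geq 0$: reducing it to the known age-optimality of preemptive last-generated-first-served with exponential service, whose AAoI is \eqref{eq:mm11-star}, is a sound and citable idea (it strictly improves on Lemma~\ref{lem:mm1-lowerbound}), provided you verify that the comparison class in the optimality theorem actually contains PS, which splits the server across packets rather than dedicating it to one.

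The step with a concrete error is the upper bound $C(\rho)\leq\rho^2/(1-\rho)$, i.e.\ $\Delta_{M/M/1-PS}\leq\Delta_{M/M/1-FGFS}$ with the right-hand side given by \eqref{eq:mm1-fgfs-aaoi}. You assert that ``the M/M/1-PS and M/M/1-FCFS sojourn-time laws coincide.'' They do not: only the \emph{means} coincide (both equal $1/(\mu-\lambda)$); the FCFS sojourn time is exponential with rate $\mu-\lambda$, whereas the PS sojourn time has a different, non-exponential law with larger variance. Moreover, even equality of sojourn-time laws would not settle an AoI comparison, since the average AoI depends on the joint law of interarrival times and system times (and, under PS, on out-of-order deliveries that render some departures uninformative). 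The subsequent claim that the stale-straggler departures ``never cost more than this saving'' is exactly the inequality to be proved, asserted rather than argued; a pathwise coupling on a common arrival stream and workload does not obviously order the two age processes. So as it stands, none of the four assertions in Conjecture~\ref{conject:mm1ps} beyond $C(\rho)\geq 0$ is established, and the $\rho\to 0$ squeeze you call ``free'' is conditional on the unproven upper bound.
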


We remark that, if Conjecture~\ref{conject:mm1ps} holds, then it follows that, when $\lambda$ is large enough,
$$
\Delta_{M/M/1-PS}\geq \frac{1}{\mu}\left(\frac{1}{\rho}+1+\frac{(\rho-0.5)^3}{1- \rho}\right).
$$

Let us note that the rhs of the above expression tends to infinity when $\rho\to 1$.
Therefore, we conclude that,
% \begin{proposition}
if Conjecture~\ref{conject:mm1ps} holds, when $\rho\to 1$, $\Delta_{M/M/1-PS}$ tends to infinity.
%\label{prop:mm1-unbounded-rho-1}
%\end{proposition}

We know from \eqref{eq:mm1-fgfs-aaoi} that, when $\rho\to 1$, the AAoI of the M/M/1-FGFS queue tends to infinity, therefore
the load that maximizes the throughput does not optimize the AAoI for this model. Now, we conclude that, if Conjecture~\ref{conject:mm1ps} holds, then the aforementioned property is verified also for the M/M/1-PS queue. 

In the following result, we study the value of the ratio $\frac{\Delta_{M/M/1-FGFS}}{\Delta_{M/M/1-PS}}$. The proof is available in Appendix~\ref{app:prop:mm1-ratio}.

\begin{proposition}
If Conjecture~\ref{conject:mm1ps} holds, then
$$
1\leq \frac{\Delta_{M/M/1-FGFS}}{\Delta_{M/M/1-PS}} \leq +\infty.
$$
\label{prop:mm1-ratio}
\end{proposition}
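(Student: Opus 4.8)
The plan is to prove the two bounds separately, both as short consequences of the conjectured shape of $\Delta_{M/M/1-PS}$ in \eqref{eq:mm1ps-conjec} together with the estimates on $C(\rho)$ assumed in Conjecture~\ref{conject:mm1ps}.

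For the inequality $\Delta_{M/M/1-FGFS}/\Delta_{M/M/1-PS}\ge 1$, I would argue directly. By Conjecture~\ref{conject:mm1ps} we have $0\le C(\rho)\le \rho^2/(1-\rho)$ for all $\rho\in(0,1)$, so from \eqref{eq:mm1ps-conjec} and \eqref{eq:mm1-fgfs-aaoi}, $\Delta_{M/M/1-PS}=\frac1\mu(\frac1\rho+1+C(\rho))\le \frac1\mu(1+\frac1\rho+\frac{\rho^2}{1-\rho})=\Delta_{M/M/1-FGFS}$. Since $C(\rho)\ge 0$ forces $\Delta_{M/M/1-PS}\ge \frac1\mu(1+\frac1\rho)>0$, the quotient is well defined and at least $1$.

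For the upper bound, the content is that the ratio is \emph{unbounded} over $\rho\in(0,1)$, i.e. for every $M>0$ there is a $\rho$ with $\Delta_{M/M/1-FGFS}/\Delta_{M/M/1-PS}>M$; I would take $\rho\uparrow 1$. Fix $\rho_0\in[\tfrac12,1)$ large enough that the right-hand inequality in \eqref{eq:c-lb-ub} holds for all $\rho\in[\rho_0,1)$. Then for such $\rho$ one has $C(\rho)\le 0.75\rho/(1-\rho)^{1/2}\le 0.75(1-\rho)^{-1/2}$ and $\frac1\rho+1\le 3$, hence by \eqref{eq:mm1ps-conjec}, $\Delta_{M/M/1-PS}\le \frac1\mu\bigl(3+0.75(1-\rho)^{-1/2}\bigr)$; on the other hand, by \eqref{eq:mm1-fgfs-aaoi}, $\Delta_{M/M/1-FGFS}\ge \frac1\mu\cdot\frac{\rho^2}{1-\rho}\ge \frac1\mu\cdot\frac{\rho_0^2}{1-\rho}$. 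Setting $x=(1-\rho)^{1/2}$, the ratio is therefore bounded below by $\frac{\rho_0^2/x^2}{3+0.75/x}=\frac{\rho_0^2}{x(3x+0.75)}$, which tends to $+\infty$ as $x\downarrow 0$; so for $\rho$ close enough to $1$ the ratio exceeds any prescribed $M$, and the ratio is unbounded from above. Equivalently, in the exact quotient of \eqref{eq:mm1-fgfs-aaoi} over \eqref{eq:mm1ps-conjec} the numerator blows up like $(1-\rho)^{-1}$ while the denominator grows at most like $(1-\rho)^{-1/2}$, so the quotient diverges.

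The argument is elementary once Conjecture~\ref{conject:mm1ps} is granted, so there is no genuine obstacle; the only points needing care are (i) that the sharper estimate $C(\rho)\le 0.75\rho/(1-\rho)^{1/2}$ is postulated only for $\rho$ sufficiently close to $1$, which is exactly the regime used here, so this is harmless — whereas the global bound $C(\rho)\le \rho^2/(1-\rho)$ alone would not suffice, since a priori it could keep the ratio equal to $1$ — and (ii) reading the upper "$\le +\infty$" as the assertion that the supremum of the ratio over $\rho\in(0,1)$ is infinite, which is precisely what the $\rho\uparrow 1$ estimate delivers.
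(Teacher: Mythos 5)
Your proof is correct and follows essentially the same route as the paper's: the lower bound $1$ comes from $0\le C(\rho)\le \rho^2/(1-\rho)$, and the unboundedness comes from the sharper estimate $C(\rho)\le 0.75\rho/(1-\rho)^{1/2}$ near $\rho=1$, which makes $\Delta_{M/M/1-PS}$ grow like $(1-\rho)^{-1/2}$ while $\Delta_{M/M/1-FGFS}$ grows like $(1-\rho)^{-1}$. Your version is in fact slightly more careful than the paper's in restricting the use of \eqref{eq:c-lb-ub} to $\rho$ near $1$, where the conjecture actually postulates it.
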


\section{Multiple Sources}
\label{sec:multiple-sources}

In this section we will expand our analysis from single-source systems to multiple source systems.  In fact, we consider that there are two sources sending update packets through the transmission channel to the monitor following Poisson processes. The rate at which updates of source 1 are sent is $\lambda_1$ and of source 2 is $\lambda_2$.  The service rate is exponentially distributed with rate $\mu$ for the updates from any source. 

We aim to analyze the impact of $\lambda_2$ on the AAoI of the updates of source one under the different queueing disciplines. In our numerical analysis we consider that $\mu=1$ and we represent with a solid line the AAoI of the M/M/1-PS queue, with a dashed line the AAoI of the M/M/1-FGFS and with a dotted line the AAoI of the M/M/1/1$^*$ queue. In Figure~\ref{fig:multiple-lambda1small}, we consider that $\lambda_1=0.1$ and $\lambda_2$ varying from $0.001$ to $0.05$.
We observe that the influence of $\lambda_2$ is very similar for PS and FGFS (and that PS outperforms FGFS for all $\lambda_2$), but the AAoI of the M/M/1/1$^*$ queue increases dramatically with $\lambda_2$. Indeed, when $\lambda_2$ is close to zero, the AAoI of the M/M/1/1$^*$ queue is the smallest one, but further numerical experiments show that it tends to infinity when $\lambda_2$ grows large much faster than FGFS and PS. Therefore, we conclude that, when $\lambda_1$ is small, the presence of other sources has a very negative impact in the AAoI of the M/M/1/1$^*$ compared to the AAoI of PS and FGFS. 

\begin{figure}[t!]
\centering
\includegraphics[width=\columnwidth,clip=true,trim=0pt 200pt 0pt 220pt]{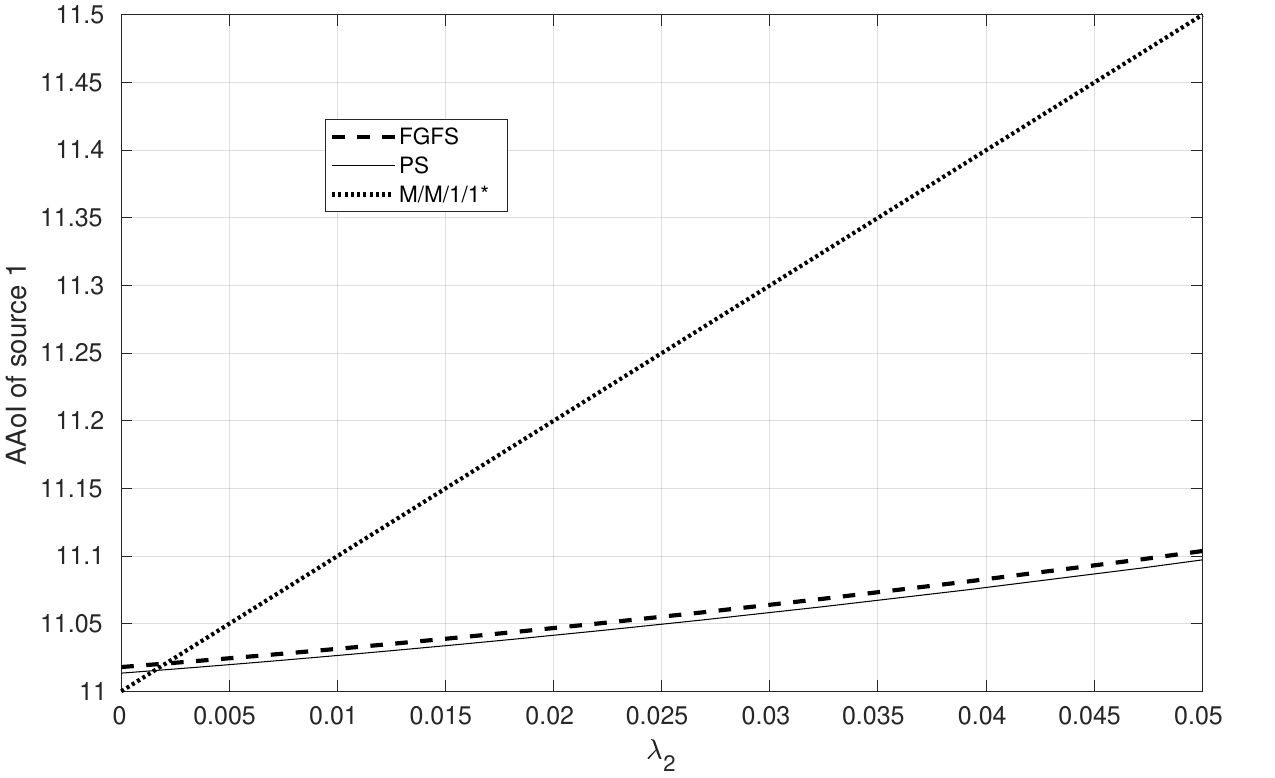}
\caption{Comparison of the AAoI of source 1 in a M/M/1 queue with divers queueing disciplines when $\lambda_2$ changes from 0.001 to 0.05 and $ \lambda_1=0.1$. }
\label{fig:multiple-lambda1small}
\end{figure}

We now aim to study whether the conclusions obtained for $\lambda_1$ small extend to instances where $\lambda_1$ is 
large. To this end, we consider $\lambda_1=5$ and $\lambda_2$ varying from 0.001 to $10^3$. In Figure~\ref{fig:multiple-lambda1large}, we represent the values of the AAoI we have obtained in our numerical analysis. We observe that the influence of $\lambda_2$ is again very similar for PS and FGFS and when $\lambda_2$ is close to zero the AAoI of the M/M/1/1$^*$ queue is the smallest one. However, when $\lambda_2$ is large, the AAoI of the M/M/1/1$^*$ queue is not worse than that of PS and FGFS; they equal, in fact, the same value. As a result, we conclude that, when $\lambda_1$ is large, the presence of a different source does not have a very negative impact on the performance of the M/M/1/1$^*$ queue compared to the PS and FGFS. 

\begin{figure}[t!]
\centering
\includegraphics[width=\columnwidth,clip=true,trim=0pt 200pt 0pt 220pt]{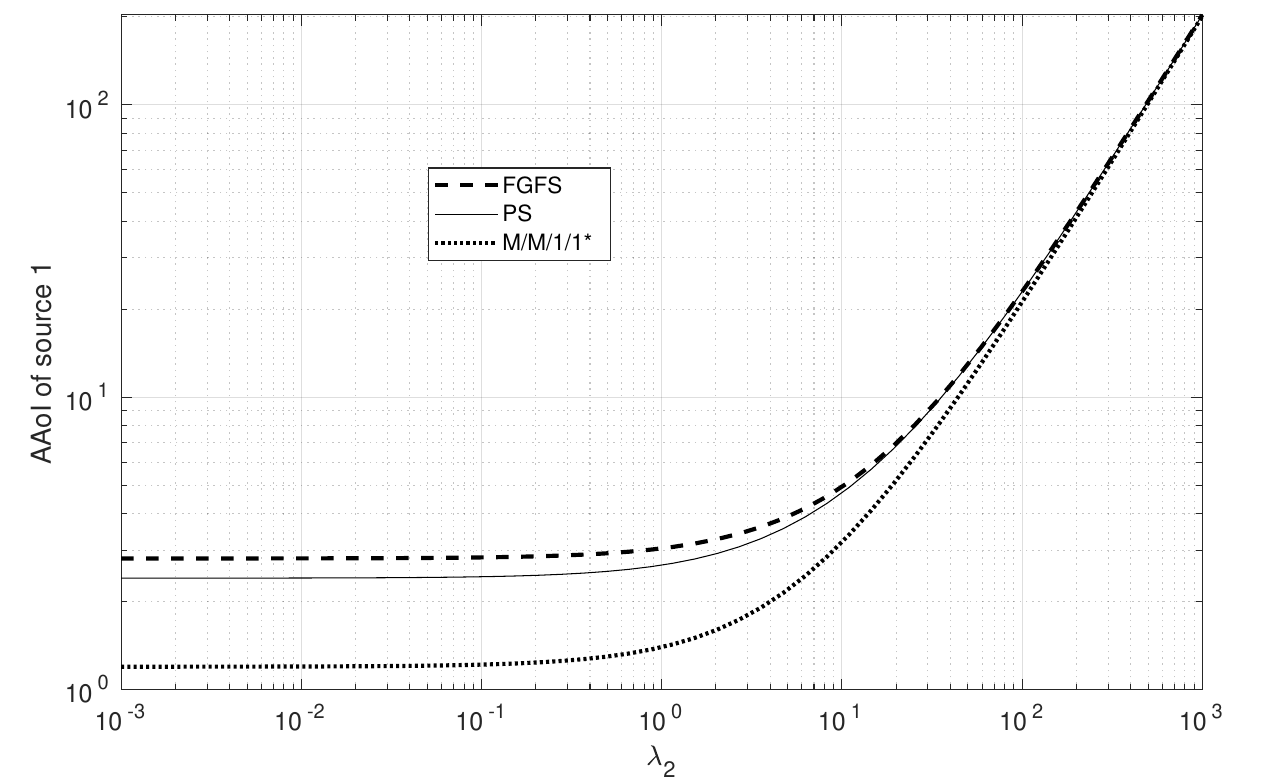}
\caption{Comparison of the AAoI of the source 1 in a M/M/1 queue with divers queueing disciplines when $\lambda_2$ changes from 0.001 to $10^3$ and $\lambda_1=5$.}
\label{fig:multiple-lambda1large}
\end{figure}

We now focus on the AAoI of both sources. To this aim, we analyze the evolution of the sum of the AAoI of both sources as a function of the arrival rate of one of them. We consider $\mu=1$ in these experiments and same queueing disciplines as in Figure~\ref{fig:multiple-lambda1small} and Figure~\ref{fig:multiple-lambda1large}. In Figure~\ref{fig:multiple2-lambda1large}, we set $\lambda_1=0.1$ and we plot the AAoI of both sources when $\lambda_2$ changes from 0.001 to 30. We observe that, in this case, the AAoI of the M/M/1/1$^*$ queue is larger than the AAoI of PS and FGFS. Therefore, we conclude that, when the arrival rate of one of the source is low, PS and FGFS outperform M/M/1/1$^* $. However, in Figure~\ref{fig:multiple2-lambda1small}, we consider $\lambda_1=5$ to analyze whether the aforementioned conclusions extend to the instance where $\lambda_1$ is large. We observe that, when $\lambda_2$ is small, the AAoI of the M/M/1/1$^*$ queue is again larger than the AAoI of PS and FGFS, whereas when $\lambda_2$ is large, the AAoI of the M/M/1/1$^*$ queue is smaller. We conclude that, when the arrival rate of both sources is large, it is preferable from the perspective of the AAoI the M/M/1/$^*$ queue and, in the rest of the cases, PS or FGFS are preferable. % than the AAoI of PS and FGFS

\begin{figure}[t!]
\centering
\includegraphics[width=\columnwidth,clip=true,trim=0pt 200pt 0pt 220pt]{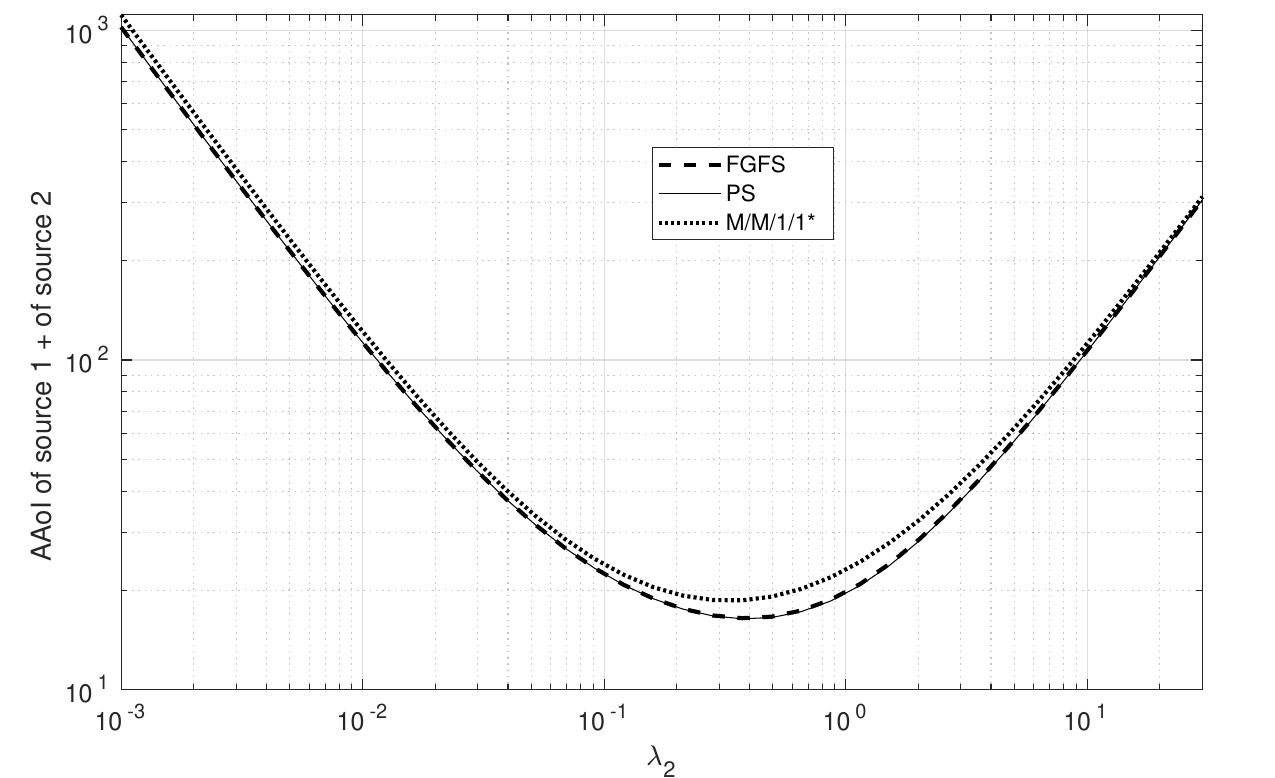}
\caption{Comparison of the AAoI of source 1 plus the AAoI of source 2 in a M/M/1 queue with divers queueing disciplines when $\lambda_2$ changes from 0.001 to 30 and $ \lambda_1=0.1$. }
\label{fig:multiple2-lambda1small}
\end{figure}

\begin{figure}[t!]
\centering
\includegraphics[width=\columnwidth,clip=true,trim=0pt 200pt 0pt 220pt]{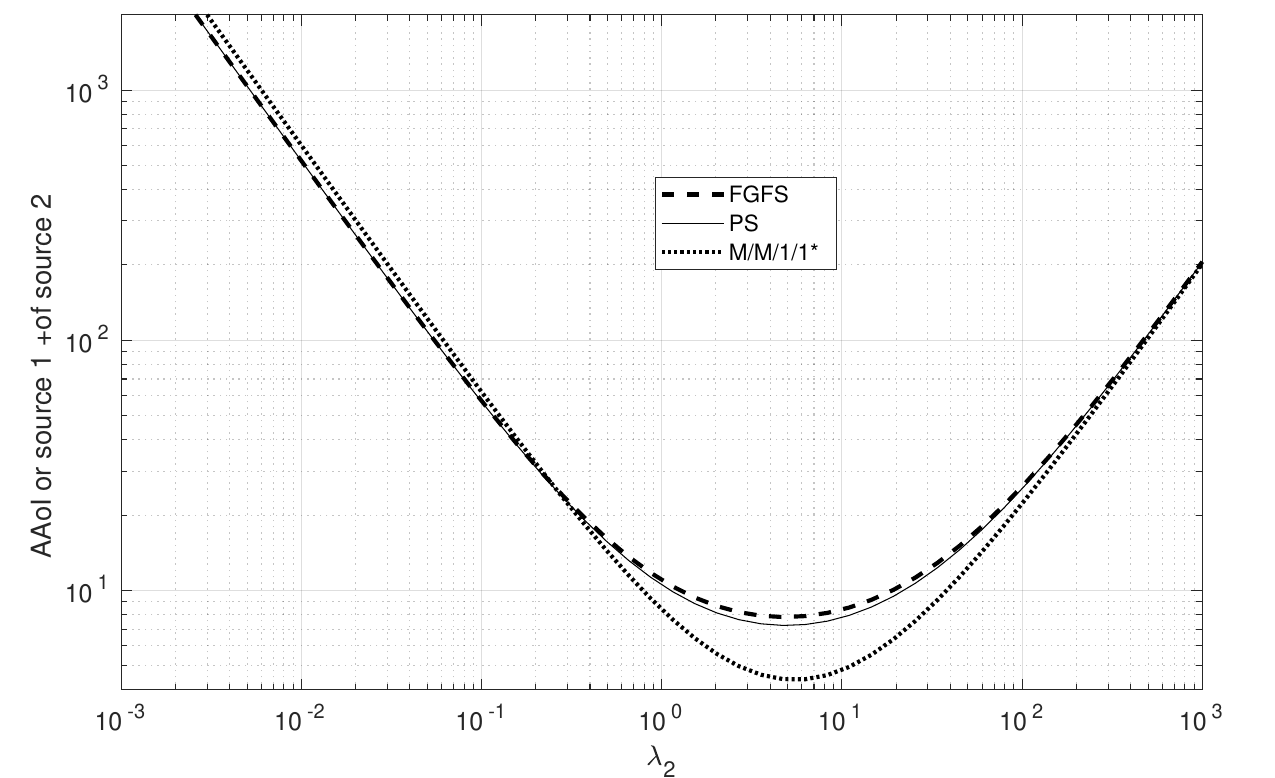}
\caption{Comparison of the AAoI of source 1 plus the AAoI of source 2 in a M/M/1 queue with divers queueing disciplines when $\lambda_2$ changes from 0.001 to $10^3$ and $\lambda_1=5$.}
\label{fig:multiple2-lambda1large}
\end{figure}

\section{Conclusions}
\label{sec:conclusions}

In this paper, we investigated the average AoI in a system composed of sources sending status updates to a monitor through a Processor Sharing (PS) queue. We considered the  single source M/M//2 queue with and without preemption, and derived a closed-form expression of the average AoI by  making use of  SHS tool. We compared analytically our results to the FGFS discipline. The results of this work is consistent with \cite{bedewy2016optimizing} since we provide analytical results that show that disciplines without queueing have good AoI performance.

We then extended the analysis to the M/M/1 queue with one and two sources. We solved numerically the  equations resulting from the SHS framework and compared the obtained results with the FGFS and M/M/1/1*, which is known to have good AoI  performance. Our results show that the PS discipline can outperform the M/M/1/1* queue in some cases.

%%%%%%%%%%%%%%%%%%%%%%%%%%%%%%%%%%%%%%%%%%%%%%%%%%%%%%%%%%%%
%
% APPENDIX
%
%
%
%%%%%%%%%%%%%%%%%%%%%%%%%%%%%%%%%%%%%%%%%%%%%%%%%%%%%%%%%%%%

% if have a single appendix:
%\appendix[Proof of the Zonklar Equations]
% or
%\appendix  % for no appendix heading
% do not use \section anymore after \appendix, only \section*
% is possibly needed

% use appendices with more than one appendix
% then use \section to start each appendix
% you must declare a \section before using any
% \subsection or using \label (\appendices by itself
% starts a section numbered zero.)
%
\appendices
%\section{Proof of the First Zonklar Equation}
\section{Proof of Proposition~\ref{prop:mm12-ps}}
\label{app:prop:mm12-ps}

We use the SHS methodology \cite{yates2018age} to characterize the AAoI of the M/M/1/2-PS queue. 
The SHS technique is formed by a couple $(x,q)$ where $x$ is a continuous state and $q$ a discrete state. For this model, the discrete state belongs to the continuous time Markov chain illustrated in Figure~\ref{fig:mm12}, where each state represents the number of packets in the queue; the continuous state is a vector $\mathbf x(t)=[x_0(t)\ x_1(t)\ x_2(t)]$ where $x_0(t)$ is the Age of Information at time $t$ and $x_i(t)$ is the age of the i-th packet in the queue. We also define $b_0=[1,0,0],$ $b_1=[1,1,0]$ and $b_2=[1,1,1]$ that represent which are the packets whose age increases at rate one for each of the states of the Markov chain of Figure~\ref{fig:mm12}.

The second column of Table~\ref{tab:mm12-sh} represents the rate at which transitions of the Markov chain occur. The steady-state distribution of this Markov chain is clearly %$\pi_i=\rho^i(1+\rho+\rho^2)^{-1}$
$$
\pi_{i}=\frac{\rho^i}{1+\rho+\rho^2}, i=0,1,2.
$$
We now describe each of the transitions of Table~\ref{tab:mm12-sh}.\\
\begin{itemize}
\item[$l=0$] A new packet arrives when the queue is empty. This occurs with rate $ \lambda$. For this case, the age of the monitor does not change and the age of the first packet in the queue is equal to zero, i.e., $x_1'=0$.
\item[$l=1$] There is one packet in the queue and it is served, which occurs with rate $\mu$. For this case, the age of the monitor is replace by the age of the packet in service, i.e., $x_0'=x_1$.
\item[$l=2$] A new packet arrives when there is another packet in the queue. This occurs with rate $\lambda$. For this case, the age of the monitor and of the packet that was being served in the queue do not change. However, the age of the second packet is equal to zero.
\item[$l=3$] There are two packets in the system and the packet that arrived first is served. This occurs with rate $\mu/2$. For this
case, the age of the monitor changes to the age of the packet that has been served, i.e., $x_0'=x_1$. Besides, the packet that stays in the queue has become the freshest of the packets in the queue and, therefore, $x_1'=x_2$.
\item[$l=4$] There are two packets in the system and the packet that last arrived is served. This occurs with rate $ \mu/2$. For this case, the age of the monitor is replaced by the age of the last arrived packet, i.e., $x_0'=x_2$. Besides, the packet that stays in the queue is obsolete and, therefore, we replace it by a fake update with the same age of the served packet, i.e., $x_1'=x_2$.
\item[$l=5$] There are two packets in the system and a new packet arrives. This occurs with rate $\lambda$. For this case, the new incoming packet is discarded, therefore the age of the monitor and of the packets in the queue does not change. 
\end{itemize}

\begin{figure}[t!]
	\centering
	\begin{tikzpicture}[]
	\node[style={circle,draw}] at (0,0) (1) {$0$};
	\node[style={circle,draw}] at (3,0) (2) {$1$};
	\node[style={circle,draw}] at (6,0) (3) {$2$};
	%\node[style={circle,draw}] at (6,0) (4) {$3$};
	\node[] at (7.2,0) (X) {$l=5$};
	%\node[] at (6.75,0.75) (X) {$13$};
	% \draw[->] (2) to [out=340,in=20,looseness=8] (2) ;
	% \draw[->] (2) to [out=30,in=50,looseness=18] (2) ;
	%\draw[->] (1) edge [out=90,in=100] node[above] {$0$} (2);
	\draw[->] (1) edge [bend left] node[above] {$l=0$} (2);
	\draw[<-] (1) edge [bend right] node[below] {$l=1$} (2);
	%\draw[<-] (1) edge [out=280, in=260] node[below] {$3$} (2);
	%     \node[] at (3,0) (X) {$4$};
	%   \node[] at (2.75,0.75) (X) {$5$};
	\draw[->] (3) to [out=340,in=20,looseness=8] (3) ;
	%\draw[->] (4) to [out=30,in=50,looseness=18] (4) ;
	%\draw[->] (2) edge [out=80,in=100] node[above] {$4$} (3);
	\draw[->] (2) edge [bend left] node[above] {$l=2$} (3);
	\draw[<-] (2) edge [bend right] node[below] {$l=3$} (3);
	\draw[<-] (2) edge [out=280, in=260] node[below] {$l=4$} (3);
	%\draw[->] (3) edge [out=80,in=100] node[above] {$8$} (4);
	%\draw[->] (3) edge [bend left] node[above] {$9$} (4);
	%\draw[<-] (3) edge [bend right] node[below] {$10$} (4);
	%\draw[<-] (3) edge [out=280, in=260] node[below] {$11$} (4);
	\end{tikzpicture}  
	\caption{The Markov chain under consideration in the proof of Proposition~\ref{prop:mm12-ps}.}
	\label{fig:mm12}
\end{figure}
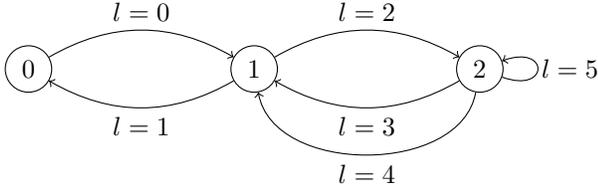

\begin{table}[t!]
\begin{center}
\begin{tabular}{|c c c c |c|} 
 \hline
 $l$ & $\lambda^{(l)}$ & $q\longrightarrow q'$ & $x\longrightarrow x'=A_lx$ & $v_{ql}A_l$ \\ \hline
 0 & $\lambda$ & $0\longrightarrow 1$ & $[x_0,0,0]\longrightarrow [x_0,0,0]$  & $[v_{00},0,0]$\\ 
 \hline
 1 & $\mu$ & $1\longrightarrow 0$ & $[x_0,x_1,0]\longrightarrow [x_1,0,0]$ & $[v_{11},0,0]$ \\
 \hline
 2 & $\lambda$ & $1\longrightarrow 2$ & $[x_0,x_1,0]\longrightarrow [x_0,x_1,0]$ & $[v_{10},v_{11},0]$\\
 \hline
 3 & $\frac{\mu}{2}$ & $2\longrightarrow 1$ & $[x_0,x_1,x_2]\longrightarrow [x_1,x_2,0]$ & $[v_{21},v_{22},0]$\\
 \hline
 4 & $\frac{\mu}{2}$ & $2\longrightarrow 1$ & $[x_0,x_1,x_2]\longrightarrow [x_2,x_2,0]$ & $[v_{22},v_{22},0]$\\  
 \hline
  5 & $\lambda$ & $2\longrightarrow 2$ & $[x_0,x_1,x_2]\longrightarrow [x_0,x_1,x_2]$ & $[v_{20},v_{21},v_{22}]$\\ [1ex] 
 \hline
\end{tabular}
\end{center}
\caption{The SHS table under consideration in the proof of Proposition~\ref{prop:mm12-ps}.}
\label{tab:mm12-sh}
\end{table}
%Taula hau azaltzeko, nahikoa dugu \textbf{l=3} eta \textbf{l=3'} trantsizioak azaltzea, gainontzekoek 3. kapituluan azaldutakoen esanahi berbera baitute. 

%Gogora dezagun, \textit{Processor Sharing} diziplinarekin, sisteman bi pakete daudenean, biek aldi berean jasotzen dutela zerbitzua, hau da, ez dago ilararik. Baina, jasotzen duten zerbitzua bi aldiz motelagoa da. Horregatik, trantsizioa horien parametroa $\frac{\mu}{2}$ dugu. 

%$x$ bektorearen eraldaketa azaltzeko, demagun 1 eta 2 paketeak ditugula zerbitzarian, esan bezala, biek aldi berean jasoko dute zerbitzua, baina 1 paketeak 2-k baino informazio zaharragoa dauka eta ez dakigu bietako zeinek bukatuko duen lehenago. Orduan, \textbf{l=3} trantsizioak kontuan hartzen du 1 paketeak lehenago bukatzen badu, ondoren 2 paketea zerbitzatu behar dela, baina aldiz, \textbf{l=3'}-k kontuan hartzen du, 2 paketea 1 baino lehenago zerbitzatzen bada, ondoren ez dela beharrezkoa 1 paketea zerbitzatzea, ez baitu zentzurik, 2 paketeak informazio eguneratuagoa badauka. Beraz, $x'$-n $x_2$ elementu "imaginari" bat sartzen dugu hori adierazteko.

%\vspace{5mm}
%rogapenarekin jarraituz, ekuazio sistema 3. kapituluan bezala planteatuz, ondorengoa izango dugu

We apply (35a) of \cite{yates2018age} for our case and we obtain:
\begin{align*}
[v_{00},v_{01},v_{02}]\lambda=&b_0\pi_0 + \mu [v_{11},0,0]\\
[v_{10},v_{11},v_{12}](\lambda + \mu)=&b_1\pi_1 + \lambda [v_{00},0,0]+\frac{\mu}{2} [v_{21},v_{22},0]\\&+\frac{\mu}{2}[v_{22},v_{22},0]\\
[v_{20},v_{21},v_{22}](\lambda + \mu)=&b_2\pi_2 + \lambda [v_{10},v_{11},0]+\lambda [v_{20},v_{21},v_{22}].\\
\end{align*}

From Theorem 4 of  \cite{yates2018age}, we know that, if there exists a non negative solution of the above system of equations, then the  AAoI of this model is given by $v_{00}+v_{10}+v_{20}$. 
%Lehen bezala bederatzi ezezagun eta bederatzi ekuazioko sistema bat izango dugu, baina berriz $v_{01}, v_{02}$ eta $v_{12}$ $0$ izango dira. Beraz, nahikoa da sei ekuazioko sistema ebaztea. Sistema B.2 eranskinean dago ebatzita. Orduan ondoko ekuazioak lortzen ditugu

We solve the above system of equations and we get
\begin{align*} 
v_{00}&=\frac{\mu (\lambda + \mu)}{\lambda (\lambda^2+\mu^2+\lambda\mu)}\\
v_{10}&=\frac{3\lambda^2+2\mu^2+4\lambda\mu}{2(\lambda +\mu)(\lambda^2+\mu^2+\lambda\mu)}\\
v_{11}&=\frac{\lambda}{\lambda^2+\lambda \mu+\mu^2}\\
v_{20}&=\frac{5\lambda^3+6\lambda^2\mu+2\lambda\mu^2}{2\mu(\lambda +\mu)(\lambda^2+\mu^2+\lambda\mu)}\\
v_{21}&=\frac{2\lambda^2}{\mu(\lambda^2+\lambda \mu+\mu^2)}\\
v_{22}&=\frac{\lambda^2}{\mu(\lambda^2+\lambda \mu+\mu^2)}
 \end{align*}

Therefore,
\begin{align*}
\Delta_{M/M/1/2-PS}&=v_{00}+v_{10}+v_{20}%\\&=\frac{\mu (\lambda + \mu)}{\lambda (\lambda^2+\mu^2+\lambda\mu)}+
%\frac{3\lambda^2+2\mu^2+4\lambda\mu}{2(\lambda +\mu)(\lambda^2+\mu^2+\lambda\mu)}+
%\frac{5\lambda^3+6\lambda^2\mu+2\lambda\mu^2}{2\mu(\lambda +\mu)(\lambda^2+\mu^2+\lambda\mu)}
\\&=\frac{5\lambda^4+9\lambda^3\mu+8\lambda^2\mu^2+6\lambda\mu^3+2\mu^4}{2\lambda\mu(\lambda +\mu)(\lambda^2+\mu^2+\lambda\mu)},
\end{align*}
where the last equality has been obtained simplifying the derived expression. 
And the desired result follows.

\section{Proof of Proposition~\ref{prop:mm12-comparison}}
\label{app:prop:mm12-comparison}

We first note that \eqref{eq:mm12-fgfs} can be written as follows
$$
\frac{3\lambda^4+5\lambda^3\mu+4\lambda^2\mu^2+3\lambda\mu^3+\mu^4}{\lambda\mu(\lambda+\mu)(\lambda^2+\lambda\mu+\mu^2)}.
$$
As a result,
\begin{equation}
\frac{\Delta_{M/M/1/2-FGFS}}{\Delta_{M/M/1/2-PS}}=
2\frac{3\lambda^4+5\lambda^3\mu+4\lambda^2\mu^2+3\lambda\mu^3+\mu^4}{5\lambda^4+9\lambda^3\mu+8\lambda^2\mu^2+6\lambda\mu^3+2\mu^4}.
\label{eq:mm12-comparison-aux1}
\end{equation}

Thus, taking into account that the rhs of \eqref{eq:mm12-comparison-aux1} tends to 1 when $\lambda\to 0$ and to 1.2 when $\lambda\to \infty$, 
the desired result follows if we show that the rhs of \eqref{eq:mm12-comparison-aux1} is increasing with $\lambda$, which we proof in the following result.
\begin{lemma}
The rhs of \eqref{eq:mm12-comparison-aux1} is an increasing function of $\lambda$ for all $\mu>0$.
\label{lem:mm12-comparison-increasing-lambda}
\end{lemma}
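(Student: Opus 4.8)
The plan is to reduce the claim to the positivity of a single explicit polynomial. Since the numerator $N(\lambda,\mu)=3\lambda^4+5\lambda^3\mu+4\lambda^2\mu^2+3\lambda\mu^3+\mu^4$ and the denominator $D(\lambda,\mu)=5\lambda^4+9\lambda^3\mu+8\lambda^2\mu^2+6\lambda\mu^3+2\mu^4$ of the right-hand side of \eqref{eq:mm12-comparison-aux1} (written as $2N/D$) are both homogeneous of degree four in $(\lambda,\mu)$, that right-hand side depends only on $\rho=\lambda/\mu$; equivalently, one may simply fix $\mu>0$ and differentiate with respect to $\lambda$. As $D>0$ for $\lambda,\mu>0$, the sign of $\partial_\lambda(N/D)$ equals the sign of $P:=(\partial_\lambda N)\,D-N\,(\partial_\lambda D)$, so it suffices to show $P>0$ for all $\lambda,\mu>0$.

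Next I would carry out the (elementary but slightly lengthy) expansion, using $\partial_\lambda N=12\lambda^3+15\lambda^2\mu+8\lambda\mu^2+3\mu^3$ and $\partial_\lambda D=20\lambda^3+27\lambda^2\mu+16\lambda\mu^2+6\mu^3$. After multiplying out the two quartic-times-cubic products, the degree-$7$ and the $\mu^7$ terms cancel, and — this is the crucial point that makes the argument go through rather than merely restate the difficulty — every remaining coefficient is non-negative; one obtains
\[
P=2\lambda^6\mu+8\lambda^5\mu^2+13\lambda^4\mu^3+10\lambda^3\mu^4+3\lambda^2\mu^5=\lambda^2\mu\,(2\lambda^4+8\lambda^3\mu+13\lambda^2\mu^2+10\lambda\mu^3+3\mu^4).
\]
Since all coefficients of this polynomial are positive, $P>0$ whenever $\lambda>0$ and $\mu>0$, hence $\partial_\lambda(N/D)>0$ and the right-hand side of \eqref{eq:mm12-comparison-aux1} is strictly increasing in $\lambda$, as claimed.

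The only real obstacle is bookkeeping: one must track roughly a dozen coefficients through the two products and verify the cancellation of the top- and bottom-degree terms; there is no conceptual difficulty. As a consistency check I would confirm the endpoint behaviour: at $\lambda=0$ the ratio equals $2\cdot\tfrac{1}{2}=1$, and as $\lambda\to\infty$ it tends to $2\cdot\tfrac{3}{5}=\tfrac{6}{5}$, which matches the bounds in Proposition~\ref{prop:mm12-comparison} and is exactly what is needed once monotonicity in $\lambda$ is established.
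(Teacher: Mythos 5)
Your proof is correct and follows essentially the same route as the paper: apply the quotient rule, reduce the claim to the positivity of $P=(\partial_\lambda N)\,D-N\,(\partial_\lambda D)$, and expand to obtain a polynomial all of whose coefficients are nonnegative. Your expansion is in fact slightly more accurate than the paper's: the coefficient of $\lambda^4\mu^3$ in $P$ is indeed $13$ (the paper prints $12$), a harmless arithmetic slip that does not affect the conclusion.
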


 \begin{proof}
We compute the derivative of $\frac{3\lambda^4+5\lambda^3\mu+4\lambda^2\mu^2+3\lambda\mu^3+\mu^4}{5\lambda^4+9\lambda^3\mu+8\lambda^2\mu^2+6\lambda\mu^3+2\mu^4}$ with respect to $\lambda$ and it results
   
    \begin{multline*}\frac{12\lambda^3+15\lambda^2\mu+8\lambda\mu^2+3\mu^3}{5\lambda^4+9\lambda^3\mu+8\lambda^2\mu^2+6\lambda\mu^3+2\mu^4}\\-(20\lambda^3+27\lambda^2\mu+16\lambda\mu^2+6\mu^3)\\\frac{(3\lambda^4+5\lambda^3\mu+4\lambda^2\mu^2+3\lambda\mu^3+\mu^4)}{(5\lambda^4+9\lambda^3\mu+8\lambda^2\mu^2+6\lambda\mu^3+2\mu^4)^2}.
\end{multline*}
    The above expression is positive if and only if
    \begin{multline*}(12\lambda^3+15\lambda^2\mu+8\lambda\mu^2+3\mu^3)\\(5\lambda^4+9\lambda^3\mu+8\lambda^2\mu^2+6\lambda\mu^3+2\mu^4)\\>(20\lambda^3+27\lambda^2\mu+16\lambda\mu^2+6\mu^3)\\(3\lambda^4+5\lambda^3\mu+4\lambda^2\mu^2+3\lambda\mu^3+\mu^4),\end{multline*}
which simplifying we obtain that
    $$2\lambda^6\mu+8\lambda^5\mu^2+12\lambda^4\mu^3+10\lambda^3\mu^4+3\lambda^2\mu^5 > 0,$$
which is true for all $\mu>0$. And the desired result follows.
\end{proof}

\section{Proof of Proposition~\ref{prop:mm12-star-ps}}
\label{app:prop:mm12-star-ps}
The proof is similar to that of Proposition~\ref{prop:mm12-ps}. In fact, we formulate the same SHS approach with the exception of transition $l=5$, which is described in Table~\ref{tab:mm12-sh-star}.

We again apply (35a) of \cite{yates2018age} and we get the following system of equations:
\begin{align*}
[v_{00},v_{01},v_{02}]\lambda=&b_0\pi_0 + \mu [v_{11},0,0]\\
[v_{10},v_{11},v_{12}](\lambda + \mu)=&b_1\pi_1 + \lambda [v_{00},0,0]+\frac{\mu}{2} [v_{21},v_{22},0]\\&+\frac{\mu}{2}[v_{22},v_{22},0]\\
[v_{20},v_{21},v_{22}](\lambda + \mu)=&b_2\pi_2 + \lambda [v_{10},v_{11},0]+\lambda [v_{20},v_{21},0].\\
\end{align*}

The solution of this system of linear equations is
\begin{align*} 
v_{00}&=\frac{3\lambda^2\mu^2+3\lambda\mu^3+\mu^4}{\lambda(\lambda+\mu)^2(\lambda^2+\mu^2+\lambda\mu)}\\
v_{10}&=\frac{\lambda^4+7\lambda^3\mu+13\lambda^2\mu^2+8\lambda\mu^3+2\mu^4}{2(\lambda+\mu)^3(\lambda^2+\mu^2+\lambda\mu)}\\
v_{11}&=\frac{2\lambda^2\mu+\lambda\mu^2}{(\lambda+\mu)^2(\lambda^2+\mu^2+\lambda\mu)}\\
v_{20}&=\frac{3\lambda^5+11\lambda^4\mu+15\lambda^3\mu^2+14 \lambda^2\mu^3+8\lambda\mu^4+2\mu^5}
{2\lambda\mu(\lambda+\mu)^2(\lambda^2+\mu^2+\lambda\mu)}\\
v_{21}&=\frac{\lambda^4+4\lambda^3\mu+2\lambda^2\mu^2}{\mu(\lambda+\mu)^2(\lambda^2+\mu^2+\lambda\mu)}\\
v_{22}&=\frac{\lambda^2}{(\lambda+\mu)(\lambda^2+\mu^2+\lambda\mu)}.
 \end{align*}

According to Theorem 4 of  \cite{yates2018age}, the desired value is obtained by summing $v_{00}$, $v_{10}$ and $v_{20}$. 
And the desired result follows.

\begin{table}[t!]
\begin{center}
\begin{tabular}{|c c c c |c|} 
 \hline
 $l$ & $\lambda^{(l)}$ & $q\longrightarrow q'$ & $x\longrightarrow x'=A_lx$ & $v_{ql}A_l$ \\ \hline
 0 & $\lambda$ & $0\longrightarrow 1$ & $[x_0,0,0]\longrightarrow [x_0,0,0]$  & $[v_{00},0,0]$\\ 
 \hline
 1 & $\mu$ & $1\longrightarrow 0$ & $[x_0,x_1,0]\longrightarrow [x_1,0,0]$ & $[v_{11},0,0]$ \\
 \hline
 2 & $\lambda$ & $1\longrightarrow 2$ & $[x_0,x_1,0]\longrightarrow [x_0,x_1,0]$ & $[v_{10},v_{11},0]$\\
 \hline
 3 & $\frac{\mu}{2}$ & $2\longrightarrow 1$ & $[x_0,x_1,x_2]\longrightarrow [x_1,x_2,0]$ & $[v_{21},v_{22},0]$\\
 \hline
 4 & $\frac{\mu}{2}$ & $2\longrightarrow 1$ & $[x_0,x_1,x_2]\longrightarrow [x_2,x_2,0]$ & $[v_{22},v_{22},0]$\\  
 \hline
  5 & $\lambda$ & $2\longrightarrow 2$ & $[x_0,x_1,x_2]\longrightarrow [x_0,x_1,0]$ & $[v_{20},v_{21},0]$\\ [1ex] 
 \hline
\end{tabular}
\end{center}
\caption{The SHS table under consideration in the proof of Proposition~\ref{prop:mm12-star-ps}}
\label{tab:mm12-sh-star}
\end{table}

\section{Proof of Proposition~\ref{prop:mm12-star-comparison}}
\label{app:prop:mm12-star-comparison}

As in the proof of Proposition~\ref{prop:mm12-comparison}, we first show that the ratio under study is monotonically increasing
with $\lambda$. First, we note that the ratio $\frac{\Delta_{M/M/1/2^*-FGFS}}
{\Delta_{M/M/1/2^*-PS}}
$
can be written as follows
\begin{equation}
\frac{4\lambda^5+14\lambda^4\mu+16\lambda^3\mu^2+14\lambda^2\mu^3+8\lambda\mu^4+2\mu^5}
{3\lambda^5+11\lambda^4\mu+15\lambda^3\mu^2+14\lambda^2\mu^3+8\lambda\mu^4+2\mu^5}.
\label{eq:app:prop:mm12-star-comparison-aux1}
\end{equation}

\begin{lemma}
The ratio $$
\frac{\Delta_{M/M/1/2^*-FGFS}}
{\Delta_{M/M/1/2^*-PS}}
$$
is an increasing function of $\lambda$.
\end{lemma}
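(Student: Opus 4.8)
The plan is to mimic verbatim the strategy used for Lemma~\ref{lem:mm12-comparison-increasing-lambda}. Denote by $N(\lambda)=4\lambda^5+14\lambda^4\mu+16\lambda^3\mu^2+14\lambda^2\mu^3+8\lambda\mu^4+2\mu^5$ the numerator of \eqref{eq:app:prop:mm12-star-comparison-aux1} and by $D(\lambda)=3\lambda^5+11\lambda^4\mu+15\lambda^3\mu^2+14\lambda^2\mu^3+8\lambda\mu^4+2\mu^5$ its denominator. Since $D(\lambda)>0$ for all $\lambda,\mu>0$, the ratio $N/D$ is differentiable in $\lambda$ and $\frac{d}{d\lambda}\bigl(N/D\bigr)$ has the same sign as $N'(\lambda)D(\lambda)-N(\lambda)D'(\lambda)$, where primes denote $\partial/\partial\lambda$. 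Hence it suffices to show that $N'D-ND'>0$ on $\lambda,\mu>0$. (Equivalently, since $N$ and $D$ are homogeneous of degree $5$, the ratio depends only on $\rho=\lambda/\mu$, so one could instead argue that the resulting one-variable rational function is strictly increasing on $(0,\infty)$.)

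The key simplification is to subtract $D$ from $N$ before differentiating. One checks immediately that $N(\lambda)-D(\lambda)=\lambda^5+3\lambda^4\mu+\lambda^3\mu^2=\lambda^3(\lambda^2+3\lambda\mu+\mu^2)=:P(\lambda)$. Writing $N=D+P$ gives $N'D-ND'=(D'+P')D-(D+P)D'=P'D-PD'$, so it is enough to prove $P'(\lambda)D(\lambda)-P(\lambda)D'(\lambda)>0$. Because $P$ has only three monomials, this product expansion is short and routine.

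Carrying out the expansion produces a polynomial in $\lambda$ and $\mu$, homogeneous of degree $9$, all of whose coefficients are strictly positive; in the normalization $\mu=1$ it equals $2\lambda^8+24\lambda^7+76\lambda^6+116\lambda^5+96\lambda^4+40\lambda^3+6\lambda^2$, which is manifestly positive for $\lambda>0$, and restoring $\mu$ merely distributes the remaining powers of $\mu$ across these monomials. Therefore $N'D-ND'>0$ for all $\lambda,\mu>0$, the ratio \eqref{eq:app:prop:mm12-star-comparison-aux1} is strictly increasing in $\lambda$, and the lemma follows. The only labor here is the polynomial bookkeeping in the expansion, and the reduction $N=D+P$ is precisely what keeps it trivial: no positivity subtlety (sum of squares, sign case analysis, etc.) arises, since every coefficient comes out nonnegative outright, exactly as in the proof of Lemma~\ref{lem:mm12-comparison-increasing-lambda}.
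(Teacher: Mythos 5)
Your proof is correct and follows essentially the same route as the paper's: both reduce the claim to showing that $N'D-ND'$ is a polynomial with only nonnegative coefficients, and your final expression $2\lambda^8\mu+24\lambda^7\mu^2+76\lambda^6\mu^3+116\lambda^5\mu^4+96\lambda^4\mu^5+40\lambda^3\mu^6+6\lambda^2\mu^7$ agrees exactly with the one the paper obtains. The only difference is your reduction $N=D+P$ with $P=\lambda^3(\lambda^2+3\lambda\mu+\mu^2)$, which shortens the expansion considerably compared with the paper's direct computation (phrased there as a proof by contradiction) but does not change the substance of the argument.
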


\begin{proof}
The derivative of \eqref{eq:app:prop:mm12-star-comparison-aux1} with respect to $\lambda$ is
\begin{multline*}
\frac{20\lambda^4+56\lambda^3\mu+48\lambda^2\mu^2+28\lambda\mu^3+8\mu^4}
{3\lambda^5+11\lambda^4\mu+15\lambda^3\mu^2+14\lambda^2\mu^3+8\lambda\mu^4+2\mu^5}-\\
(15\lambda^4+44\lambda^3\mu+45\lambda^2\mu^2+28\lambda\mu^3+8\mu^4)\\\frac{4\lambda^5+14\lambda^4\mu+16\lambda^3\mu^2+14\lambda^2\mu^3+8\lambda\mu^4+2\mu^5}
{(3\lambda^5+11\lambda^4\mu+15\lambda^3\mu^2+14\lambda^2\mu^3+8\lambda\mu^4+2\mu^5)^2}.
\end{multline*}
We assume that the above expression is negative and we will see that it is an absurd. Thus, the derivative of 
\eqref{eq:app:prop:mm12-star-comparison-aux1} with respect to $\lambda$ is negative if and only if
\begin{multline*}
(20\lambda^4+56\lambda^3\mu+48\lambda^2\mu^2+28\lambda\mu^3+8\mu^4)\\
(3\lambda^5+11\lambda^4\mu+15\lambda^3\mu^2+14\lambda^2\mu^3+8\lambda\mu^4+2\mu^5)
<\\
(15\lambda^4+44\lambda^3\mu+45\lambda^2\mu^2+28\lambda\mu^3+8\mu^4)\\
(4\lambda^5+14\lambda^4\mu+16\lambda^3\mu^2+14\lambda^2\mu^3+8\lambda\mu^4+2\mu^5),
\end{multline*}
which expanding the polynomials it results
\begin{align*}
&60\lambda^9+388  \lambda^8\mu+1060\lambda^7\mu^2+1732\lambda^6\mu^3+1996\lambda^5\mu^4\\&+
1668\lambda^4\mu^5+1008\lambda^3\mu^6+432\lambda^2\mu^7+120\lambda\mu^8+16\mu^9 \\&<
60\lambda^9+386  \lambda^8\mu+1036\lambda^7\mu^2+1656\lambda^6\mu^3+1880\lambda^5\mu^4\\&+
1572\lambda^4\mu^5+968\lambda^3\mu^6+426\lambda^2\mu^7+120\lambda\mu^8+16\mu^9.
\end{align*}
We now simplify this expression and we obtain
\begin{multline*}
2  \lambda^8\mu+24\lambda^7\mu^2+76\lambda^6\mu^3+116\lambda^5\mu^4+
96\lambda^4\mu^5+40\lambda^3\mu^6\\+6\lambda^2\mu^7<0,
 \end{multline*}
which is clearly false since $\lambda$ and $\mu$ are positive. Therefore, the desired result follows.

\end{proof}

We now prove Proposition~\ref{prop:mm12-star-comparison} by studying the limit of the ratio
$\frac{\Delta_{M/M/1/2^*-FGFS}}
{\Delta_{M/M/1/2^*-PS}}$ when $\lambda$ tends to zero and to infinity. For the later limit, we get one, whereas for the former, we get 4/3. And the desired result follows. 

\section{Proof of Proposition~\ref{prop:mm12-mm12-star-comparison}}
\label{app:prop:mm12-mm12-star-comparison}

We show that the ratio $\frac{\Delta_{M/M/1/2-PS}}
{\Delta_{M/M/1/2^*-PS}}$ is increasing with $ \lambda$. We first provide the expression of the ratio under analysis:
\begin{multline*}
\frac{\Delta_{M/M/1/2-PS}}
{\Delta_{M/M/1/2^*-PS}}=\\
\frac{5\lambda^5+14\lambda^4\mu+17\lambda^3\mu^2+14\lambda^2\mu^3+8\lambda\mu^4+2\mu^5}{3\lambda^5+11\lambda^4\mu+15\lambda^3\mu^2+14\lambda^2\mu^3+8\lambda\mu^4+2\mu^5}.
\end{multline*}

We observe that the limit when $ \lambda\to 0$ (resp. when $\lambda\to \infty$) of the above expression 
is one (resp. is 5/3). Therefore, the proof ends by showing that the ratio 
$\frac{\Delta_{M/M/1/2^*-FGFS}}{\Delta_{M/M/1/2^*-PS}}$ is increasing with $ \lambda$. 

\begin{lemma}
$
\frac{\Delta_{M/M/1/2^-PS}}
{\Delta_{M/M/1/2^*-PS}}
$
is an increasing function of $\lambda$.
\label{lem:mm12-star-comparison-increasing-lambda}
\end{lemma}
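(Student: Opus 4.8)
The plan is to reuse, essentially verbatim, the strategy of Lemma~\ref{lem:mm12-comparison-increasing-lambda} and of the monotonicity lemma in Appendix~\ref{app:prop:mm12-star-comparison}: I would show that the derivative of the ratio with respect to $\lambda$ is positive by reducing it to the positivity of an explicit polynomial with nonnegative coefficients. Write the ratio as $N/D$ with
\begin{align*}
N&=5\lambda^5+14\lambda^4\mu+17\lambda^3\mu^2+14\lambda^2\mu^3+8\lambda\mu^4+2\mu^5,\\
D&=3\lambda^5+11\lambda^4\mu+15\lambda^3\mu^2+14\lambda^2\mu^3+8\lambda\mu^4+2\mu^5,
\end{align*}
as in the expression given just above the statement. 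Since $N>0$ and $D>0$ for all $\lambda,\mu>0$, the quotient rule shows that $\frac{\partial}{\partial\lambda}(N/D)$ has the same sign as $P:=N'D-ND'$, where $N'$ and $D'$ denote the derivatives with respect to $\lambda$ (with $\mu$ fixed). Hence it suffices to prove that $P>0$ for all $\lambda,\mu>0$.

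Next I would compute $N'=25\lambda^4+56\lambda^3\mu+51\lambda^2\mu^2+28\lambda\mu^3+8\mu^4$ and $D'=15\lambda^4+44\lambda^3\mu+45\lambda^2\mu^2+28\lambda\mu^3+8\mu^4$, expand the two degree-$9$ homogeneous products $N'D$ and $ND'$, and subtract. The $\lambda^9$, $\lambda\mu^8$ and $\mu^9$ coefficients of $N'D$ and $ND'$ coincide, so they cancel, and collecting the remaining terms should yield
$$P=13\lambda^8\mu+48\lambda^7\mu^2+107\lambda^6\mu^3+148\lambda^5\mu^4+120\lambda^4\mu^5+56\lambda^3\mu^6+12\lambda^2\mu^7.$$
Every coefficient of $P$ is nonnegative, so $P>0$ whenever $\lambda,\mu>0$; this proves that $N/D$ is strictly increasing in $\lambda$, which is the assertion of the lemma.

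The only genuine obstacle is the bookkeeping involved in expanding and collecting the coefficients of the degree-$9$ bivariate polynomial $N'D-ND'$ — precisely the kind of finite, mechanical computation already carried out in Appendices~\ref{app:prop:mm12-comparison} and \ref{app:prop:mm12-star-comparison}. The one thing that could derail the argument, a cancellation leaving a negative coefficient, does not occur, as the displayed formula for $P$ shows. As a sanity check, one notes that $N/D\to 2\mu^5/2\mu^5=1$ as $\lambda\to 0^+$ and $N/D\to 5/3$ as $\lambda\to\infty$; combined with the monotonicity just established, this yields $1\le \Delta_{M/M/1/2-PS}/\Delta_{M/M/1/2^*-PS}<5/3$, which is exactly what is needed to finish the proof of Proposition~\ref{prop:mm12-mm12-star-comparison}.
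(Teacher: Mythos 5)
Your proposal is correct and follows essentially the same route as the paper: both write the ratio as $N/D$, reduce monotonicity in $\lambda$ to the positivity of $N'D-ND'$, and expand to obtain exactly the polynomial $13\lambda^8\mu+48\lambda^7\mu^2+107\lambda^6\mu^3+148\lambda^5\mu^4+120\lambda^4\mu^5+56\lambda^3\mu^6+12\lambda^2\mu^7$, whose coefficients are all nonnegative. The only difference is cosmetic (you phrase it via the quotient-rule numerator directly rather than the paper's ``positive if and only if'' cross-multiplication), and your displayed polynomial matches the paper's.
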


\begin{proof}
The derivative of $\frac{\Delta_{M/M/1/2-PS}}
{\Delta_{M/M/1/2^*-PS}}$ with respect to $\lambda$ is
\begin{multline*}
\frac{25\lambda^4+56\lambda^3\mu+51\lambda^2\mu^2+28\lambda^2\mu^3+8\mu^4}{3\lambda^5+11\lambda^4\mu+15\lambda^3\mu^2+14\lambda^2\mu^3+8\lambda\mu^4+2\mu^5}\\
-(15\lambda^4+44\lambda^3\mu+45\lambda^2\mu^2+28\lambda\mu^3+8\mu^4)\\
\frac{5\lambda^5+14\lambda^4\mu+17\lambda^3\mu^2+14\lambda^2\mu^3+8\lambda\mu^4+2\mu^5}{(3\lambda^5+11\lambda^4\mu+15\lambda^3\mu^2+14\lambda^2\mu^3+8\lambda\mu^4+2\mu^5)^2}. 
\end{multline*}
This expression is positive if and only if
\begin{multline*}
(25\lambda^4+56\lambda^3\mu+51\lambda^2\mu^2+28\lambda^2\mu^3+8\mu^4)
\\(3\lambda^5+11\lambda^4\mu+15\lambda^3\mu^2+14\lambda^2\mu^3+8\lambda\mu^4+2\mu^5)\\
>(15\lambda^4+44\lambda^3\mu+45\lambda^2\mu^2+28\lambda\mu^3+8\mu^4)
\\(5\lambda^5+14\lambda^4\mu+17\lambda^3\mu^2+14\lambda^2\mu^3+8\lambda\mu^4+2\mu^5). 
\end{multline*}
Expanding the polynomials and simplifying, we get the following expression:
\begin{multline*}
13\lambda^8 \mu+48\lambda^7\mu^2+107\lambda^6\mu^3+148\lambda^5\mu^4
+120 \lambda^4\mu^5+56 \lambda^3\mu^6\\+12\lambda^2\mu^7>0,
\end{multline*}
which is clearly positive since $\lambda$ and $\mu$ are positive. Thus, the desired result follows. 
\end{proof}

%%%%%%%%%%%%%%%%%%%%%%%%%%%%%%%%%%%%%%%%%%%%%%%%%%%%%%%
%
% PROOFS FOR BEÑAT
%
%%%%%%%%%%%%%%%%%%%%%%%%%%%%%%%%%%%%%%%%%%%%%%%%%%%%%%%
\section{Proof of Proposition~\ref{prop:mm12-star2-ps}}
\label{app:prop:mm12-star2-ps}
The proof is very similar to that of Proposition~\ref{prop:mm12-star-ps}. We formulate the same SHS approach with the exception of transition $l=5$, whis is described in Table~\ref{tab:mm12-sh-star2}.

We again apply (35a) of \cite{yates2018age} and we get the following system of equations:

\begin{align*}
[v_{00},v_{01},v_{02}]\lambda=&b_0\pi_0 + \mu [v_{11},0,0]\\
[v_{10},v_{11},v_{12}](\lambda + \mu)=&b_1\pi_1 + \lambda [v_{00},0,0]+\frac{\mu}{2} [v_{21},v_{22},0]\\&+\frac{\mu}{2}[v_{22},v_{22},0]\\
[v_{20},v_{21},v_{22}](\lambda + \mu)=&b_2\pi_2 + \lambda [v_{10},v_{11},0]+\lambda [v_{20},v_{22},0].\\
\end{align*}

\begin{table}[t!]
\begin{center}
\begin{tabular}{|c c c c |c|} 
 \hline
 $l$ & $\lambda^{(l)}$ & $q\longrightarrow q'$ & $x\longrightarrow x'=A_lx$ & $v_{ql}A_l$ \\ \hline
 0 & $\lambda$ & $0\longrightarrow 1$ & $[x_0,0,0]\longrightarrow [x_0,0,0]$  & $[v_{00},0,0]$\\ 
 \hline
 1 & $\mu$ & $1\longrightarrow 0$ & $[x_0,x_1,0]\longrightarrow [x_1,0,0]$ & $[v_{11},0,0]$ \\
 \hline
 2 & $\lambda$ & $1\longrightarrow 2$ & $[x_0,x_1,0]\longrightarrow [x_0,x_1,0]$ & $[v_{10},v_{11},0]$\\
 \hline
 3 & $\frac{\mu}{2}$ & $2\longrightarrow 1$ & $[x_0,x_1,x_2]\longrightarrow [x_1,x_2,0]$ & $[v_{21},v_{22},0]$\\
 \hline
 4 & $\frac{\mu}{2}$ & $2\longrightarrow 1$ & $[x_0,x_1,x_2]\longrightarrow [x_2,x_2,0]$ & $[v_{22},v_{22},0]$\\  
 \hline
  5 & $\lambda$ & $2\longrightarrow 2$ & $[x_0,x_1,x_2]\longrightarrow [x_0,x_2,0]$ & $[v_{20},v_{22},0]$\\ [1ex] 
 \hline
\end{tabular}
\end{center}
\caption{The SHS table under consideration in the proof of Proposition~\ref{prop:mm12-star2-ps}}
\label{tab:mm12-sh-star2}
\end{table}

The solution of this system of linear equations is
\begin{align*} 
v_{00}&=\frac{3\lambda^2\mu^2+3\lambda\mu^3+\mu^4}{\lambda(\lambda+\mu)^2(\lambda^2+\mu^2+\lambda\mu)}\\
v_{10}&=\frac{5\lambda^4\mu+19\lambda^3\mu^2+21\lambda^2\mu^3+10\lambda\mu^4+2\mu^5}{2(\lambda+\mu)^4(\lambda^2+\mu^2+\lambda\mu)}\\
v_{11}&=\frac{2\lambda^2\mu+\lambda\mu^2}{(\lambda+\mu)^2(\lambda^2+\mu^2+\lambda\mu)}\\
v_{20}&=\frac{2\lambda^6+13\lambda^5\mu+31\lambda^4\mu^2+29 \lambda^3\mu^3+12\lambda^2\mu^4+2\lambda\mu^5}
{2\mu(\lambda+\mu)^4(\lambda^2+\mu^2+\lambda\mu)}\\
v_{21}&=\frac{2\lambda^4+5\lambda^3\mu+2\lambda^2\mu^2}{(\lambda+\mu)^3(\lambda^2+\mu^2+\lambda\mu)}\\
v_{22}&=\frac{\lambda^2}{(\lambda+\mu)(\lambda^2+\mu^2+\lambda\mu)}.
 \end{align*}

According to Theorem 4 of  \cite{yates2018age}, the desired value is obtained by summing $v_{00}$, $v_{10}$ and $v_{20}$. 
And the desired result follows. 

\section{Proof of Proposition~\ref{prop:mm12-star2-fgfs}}
\label{app:prop:mm12-star2-fgfs}
The proof is very similar to that of Proposition~\ref{prop:mm12-star-ps} and Proposition~\ref{prop:mm12-star2-ps}. We formulate the same SHS approach, but now as we are operating unde FGFS discipline some transitions will be different, whis are described in Table~\ref{tab:mm12-fgfs-star2}.

We apply (35a) of \cite{yates2018age} and we get the following system of equations:

\begin{align*}
[v_{00},v_{01},v_{02}]\lambda=&b_0\pi_0 + \mu [v_{11},0,0]\\
[v_{10},v_{11},v_{12}](\lambda + \mu)=&b_1\pi_1 + \lambda [v_{00},0,0]+\mu[v_{21},v_{22},0]\\
[v_{20},v_{21},v_{22}](\lambda + \mu)=&b_2\pi_2 + \lambda [v_{10},v_{11},0]+\lambda [v_{20},v_{22},0].\\
\end{align*}

\begin{table}[t!]
\begin{center}
\begin{tabular}{|c c c c |c|} 
 \hline
 $l$ & $\lambda^{(l)}$ & $q\longrightarrow q'$ & $x\longrightarrow x'=A_lx$ & $v_{ql}A_l$ \\ \hline
 0 & $\lambda$ & $0\longrightarrow 1$ & $[x_0,0,0]\longrightarrow [x_0,0,0]$  & $[v_{00},0,0]$\\ 
 \hline
 1 & $\mu$ & $1\longrightarrow 0$ & $[x_0,x_1,0]\longrightarrow [x_1,0,0]$ & $[v_{11},0,0]$ \\
 \hline
 2 & $\lambda$ & $1\longrightarrow 2$ & $[x_0,x_1,0]\longrightarrow [x_0,x_1,0]$ & $[v_{10},v_{11},0]$\\
 \hline
 3 & $\mu$ & $2\longrightarrow 1$ & $[x_0,x_1,x_2]\longrightarrow [x_1,x_2,0]$ & $[v_{21},v_{22},0]$\\
 \hline
 4 & $\lambda$ & $2\longrightarrow 2$ & $[x_0,x_1,x_2]\longrightarrow [x_0,x_2,0]$ & $[v_{20},v_{22},0]$\\  
 \hline
\end{tabular}
\end{center}
\caption{The SHS table under consideration in the proof of Proposition~\ref{prop:mm12-star2-fgfs}}
\label{tab:mm12-fgfs-star2}
\end{table}

The solution of this system of linear equations is
\begin{align*} 
v_{00}&=\frac{3\lambda^2\mu^2+3\lambda\mu^3+\mu^4}{\lambda(\lambda+\mu)^2(\lambda^2+\mu^2+\lambda\mu)}\\
v_{10}&=\frac{3\lambda^4\mu+11\lambda^3\mu^2+11\lambda^2\mu^3+5\lambda\mu^4+\mu^5}{(\lambda+\mu)^4(\lambda^2+\mu^2+\lambda\mu)}\\
v_{11}&=\frac{2\lambda^2\mu+\lambda\mu^2}{(\lambda+\mu)^2(\lambda^2+\mu^2+\lambda\mu)}\\
v_{20}&=\frac{\lambda^6+7\lambda^5\mu+17\lambda^4\mu^2+15\lambda^3\mu^3+6\lambda^2\mu^4+\lambda\mu^5}
{\mu(\lambda+\mu)^4(\lambda^2+\mu^2+\lambda\mu)}\\
v_{21}&=\frac{2\lambda^4+5\lambda^3\mu+2\lambda^2\mu^2}{(\lambda+\mu)^3(\lambda^2+\mu^2+\lambda\mu)}\\
v_{22}&=\frac{\lambda^2}{(\lambda+\mu)(\lambda^2+\mu^2+\lambda\mu)}.
 \end{align*}

According to Theorem 4 of  \cite{yates2018age}, the desired value is obtained by summing $v_{00}$, $v_{10}$ and $v_{20}$. 
And the desired result follows. 

\section{Proof of Proposition~\ref{prop:mm12-star2-comparison}}
\label{app:prop:mm12-star2-comparison}

We are first going to write the expression of the ratio.

\begin{multline*}
\frac{\Delta_{M/M/1/2^{**}-FGFS}}{\Delta_{M/M/1/2^{**}-PS}}=\\
\frac{2\lambda^6+12\lambda^5\mu+28\lambda^4\mu^2+30\lambda^3\mu^3+22\lambda^2\mu^4+10\lambda\mu^5+2\mu^6}{2\lambda^6+11\lambda^5\mu+25\lambda^4\mu^2+29\lambda^3\mu^3+22\lambda^2\mu^4+10\lambda\mu^5+2\mu^6}.
\label{eq:mm12-star2-comparison-fgfs-ps}
\end{multline*}

Since $\lambda>0$ and $\mu>0$ it is easily seen that $$\Delta_{M/M/1/2^{**}-PS}\leq \Delta_{M/M/1/2^{**}-FGFS}.$$  Therefore, we know that 
\begin{equation}
1\leq \frac{\Delta_{M/M/1/2^{**}-FGFS}}{\Delta_{M/M/1/2^{**}-PS}}.
\label{eq:mm12star2-fgfs-ps-aux}
\end{equation}

Since this ratio is not an increasing function of $\lambda$, we now want to find the maximum value of it, that way we will have proven Proposition~\ref{prop:mm12-star2-comparison}. In order to that, we present the following result.

\begin{lemma}
The ratio  $\frac{\Delta_{M/M/1/2^{**}-FGFS}}{\Delta_{M/M/1/2^{**}-PS}}$ takes its maximum value at $\rho=2.3943$ and it is 
\begin{equation}
\frac{\Delta_{M/M/1/2^{**}-FGFS}}{\Delta_{M/M/1/2^{**}-PS}}=1.0731
\end{equation}
\end{lemma}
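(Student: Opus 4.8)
The plan is to reduce to a single variable via homogeneity. The displayed ratio is homogeneous of degree zero in $(\lambda,\mu)$, so setting $\mu=1$ and writing $\rho=\lambda$ it suffices to locate the global maximum on $(0,\infty)$ of
\[
f(\rho)=\frac{2\rho^6+12\rho^5+28\rho^4+30\rho^3+22\rho^2+10\rho+2}{2\rho^6+11\rho^5+25\rho^4+29\rho^3+22\rho^2+10\rho+2}.
\]
Write $h(\rho)$ for the denominator (which has positive coefficients, hence $h>0$ on $[0,\infty)$). Subtracting the two polynomials gives $f(\rho)=1+\frac{\rho^3(\rho^2+3\rho+1)}{h(\rho)}$, so $f(\rho)>1$ for every $\rho>0$, while $\lim_{\rho\to 0^+}f(\rho)=\lim_{\rho\to\infty}f(\rho)=1$. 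Hence $f$ attains its supremum over $(0,\infty)$, and necessarily at an interior critical point.

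I would then compute $f'$. With $g(\rho)=\rho^3(\rho^2+3\rho+1)$ we have $g'(\rho)=\rho^2(5\rho^2+12\rho+3)$, so
\[
f'(\rho)=\frac{g'(\rho)h(\rho)-g(\rho)h'(\rho)}{h(\rho)^2}=\frac{\rho^2\,P(\rho)}{h(\rho)^2},
\]
where $P(\rho)=(5\rho^2+12\rho+3)h(\rho)-\rho(\rho^2+3\rho+1)h'(\rho)$. Expanding the products one finds
\[
P(\rho)=-2\rho^8-12\rho^7-14\rho^6+36\rho^5+128\rho^4+172\rho^3+122\rho^2+44\rho+6 .
\]
Since $\rho^2/h(\rho)^2>0$ on $(0,\infty)$, the sign of $f'$ equals that of $P$. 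The coefficient sequence of $P$, namely $-2,-12,-14,+36,+128,+172,+122,+44,+6$, has exactly one sign change, so by Descartes' rule of signs $P$ has exactly one positive real root $\rho^{\ast}$. As $P(0)=6>0$ and $P(\rho)\to-\infty$ when $\rho\to\infty$, we get $P>0$ on $(0,\rho^{\ast})$ and $P<0$ on $(\rho^{\ast},\infty)$; hence $f$ increases strictly on $(0,\rho^{\ast})$ and decreases strictly afterwards, so $\rho^{\ast}$ is the unique global maximizer of $f$.

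It remains to identify the constants. Solving the degree-eight equation $P(\rho)=0$ numerically gives $\rho^{\ast}\approx 2.3943$ — one can certify the location by checking, for instance, $P(2.39)>0>P(2.40)$ — and substituting this value into $f$ yields $f(\rho^{\ast})\approx 1.0731$, the claimed maximum. The only mildly laborious step is the polynomial bookkeeping that puts $P$ in the compact form above and exposes its single sign change; after that, Descartes' rule does the structural work and the two stated figures follow from a routine root-finding and one evaluation of $f$.
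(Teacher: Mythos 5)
Your proof is correct and follows essentially the same route as the paper: reduce to the single variable $\rho$, differentiate, locate the unique positive critical point of the same polynomial (your $P(\rho)$ is exactly the paper's critical-point equation divided by $\rho^2$), and finish with a numerical root-find and one evaluation. The one genuine improvement is your use of Descartes' rule of signs to certify that $P$ has exactly one positive root --- the paper merely asserts that $\rho=2.3943$ is the only positive root --- so your version closes a small gap in the published argument.
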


\begin{proof}
First we rewrite the ratio by dividing $\mu^6$ in the numerator and the denominator and we get the following
\begin{multline*}
\frac{\Delta_{M/M/1/2^{**}-FGFS}}{\Delta_{M/M/1/2^{**}-PS}}=\\
\frac{2\rho^6+12\rho^5+28\rho^4+30\rho^3+22\rho^2+10\rho+2}{2\rho^6+11\rho^5+25\rho^4+29\rho^3+22\rho^2+10\rho+2}.
\end{multline*}
The derivative of $\frac{\Delta_{M/M/1/2^{**}-FGFS}}{\Delta_{M/M/1/2^{**}-PS}}$ with respect to $\rho$ is
\begin{multline*}
\frac{12\rho^5+60\rho^4+112\rho^3+90\rho^2+44\rho+10}{2\rho^6+11\rho^5+25\rho^4+29\rho^3+22\rho^2+10\rho+2}\\
-(12\rho^5+55\rho^4+100\rho^3+87\rho^2+44\rho+10)\\
\frac{2\rho^6+12\rho^5+28\rho^4+30\rho^3+22\rho^2+10\rho+2}{(2\rho^6+11\rho^5+25\rho^4+29\rho^3+22\rho^2+10\rho+2)^2}. 
\end{multline*}

We set the derivative equal to zero and we get the following result
\begin{multline*}
(12\rho^5+60\rho^4+112\rho^3+90\rho^2+44\rho+10)\\(2\rho^6+11\rho^5+25\rho^4+29\rho^3+22\rho^2+10\rho+2)
\\-(12\rho^5+55\rho^4+100\rho^3+87\rho^2+44\rho+10)\\(2\rho^6+12\rho^5+28\rho^4+30\rho^3+22\rho^2+10\rho+2)=0
\end{multline*}
Expanding that expression we get
\begin{multline}
-2\rho^{10}-12\rho^9-14\rho^8+36\rho^7+128\rho^6\\+172\rho^5+122\rho^4+44\rho^3+6\rho^2=0
\label{eq:aux-func}
\end{multline}

Since $\lambda>0$ and $\mu>0$ then $\rho$ must be positive, and the only postive root of that expression is $\rho=2.3943$. Therefore, this ratio is larger than one from \eqref{eq:mm12star2-fgfs-ps-aux} and it is equal to one when $\rho\to 0$ and $\rho\to \infty$. Therefore, it has a unique maximum when $\rho$ is positive, which is achieved for $\rho=2.3943$.%Besides, clearly the derivative of \eqref{eq:aux-func} evaluated in that root  is negative. So we already have the point where our ratio gets its maximum value.  
We evaluate $\rho=2.3943$ on our ratio and the desired result follows.
\end{proof}

\section{Proof of Proposition~\ref{prop:mm12-star1-2-ps-comparison}}
\label{app:prop:mm12-star1-2-ps-comparison}
We have 
\begin{multline}\frac{\Delta_{M/M/1/2^*-PS}}{\Delta_{M/M/1/2^{**}-PS}}=\\
\frac{3\lambda^6+14\lambda^5\mu+26\lambda^4\mu^2+29\lambda^3\mu^3+22\lambda^2\mu^4+10\lambda\mu^5+2\mu^6}{2\lambda^6+11\lambda^5\mu+25\lambda^4\mu^2+29\lambda^3\mu^3+22\lambda^2\mu^4+10\lambda\mu^5+2\mu^6}
\label{eq:mm12-comparison-star2-aux}
\end{multline}

Thus, taking into account that the rhs of \eqref{eq:mm12-comparison-star2-aux} tends to 1 when $\lambda\to 0$ and to $\frac{3}{2}$ when $\lambda\to \infty$, 
the desired result follows if we show that the rhs of \eqref{eq:mm12-comparison-star2-aux} is increasing with $\lambda$, which we proof in the following result.
\begin{lemma}
The rhs of \eqref{eq:mm12-comparison-star2-aux} is an increasing function of $\lambda$ for all $\mu>0$.
\label{lem:mm12-comparison-increasing-lambda}
\end{lemma}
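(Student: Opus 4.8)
The plan is to follow the template already used for the monotonicity lemmas in Appendix~\ref{app:prop:mm12-comparison} and Appendix~\ref{app:prop:mm12-mm12-star-comparison}. Write $N(\lambda)$ and $D(\lambda)$ for the numerator and denominator of the right-hand side of \eqref{eq:mm12-comparison-star2-aux}, i.e.\ $N(\lambda)=3\lambda^6+14\lambda^5\mu+26\lambda^4\mu^2+29\lambda^3\mu^3+22\lambda^2\mu^4+10\lambda\mu^5+2\mu^6$ and $D(\lambda)=2\lambda^6+11\lambda^5\mu+25\lambda^4\mu^2+29\lambda^3\mu^3+22\lambda^2\mu^4+10\lambda\mu^5+2\mu^6$. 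Since $D(\lambda)>0$ for $\lambda,\mu>0$, by the quotient rule the ratio is increasing in $\lambda$ if and only if $N'(\lambda)D(\lambda)-N(\lambda)D'(\lambda)>0$, so the whole proof reduces to establishing this polynomial inequality.

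Before expanding the two degree-$11$ products, I would simplify using the identity $N(\lambda)=D(\lambda)+P(\lambda)$ with $P(\lambda)=\lambda^4(\lambda^2+3\lambda\mu+\mu^2)$ (note $N-D=\lambda^6+3\lambda^5\mu+\lambda^4\mu^2$). Then $N'D-ND'=P'D-PD'$, equivalently $\frac{N}{D}=1+\frac{P}{D}$, so it is enough to show $\frac{P}{D}$ is increasing, i.e.\ $P'(\lambda)D(\lambda)-P(\lambda)D'(\lambda)>0$. Both $P$ and $D$ are homogeneous of degree $6$, so $P'D-PD'$ is homogeneous of degree $11$; writing $P=\lambda^4 R$ with $R=\lambda^2+3\lambda\mu+\mu^2$ gives the further factorization $P'D-PD'=\lambda^3\big(4R(\lambda)D(\lambda)+\lambda(R'(\lambda)D(\lambda)-R(\lambda)D'(\lambda))\big)$, and cancellation of the top term shows the bracket is moreover divisible by $\mu$. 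I expect that --- exactly as in the earlier lemmas, where the analogous differences collapsed to expressions such as $2\lambda^6\mu+8\lambda^5\mu^2+\cdots$ and $13\lambda^8\mu+48\lambda^7\mu^2+\cdots$ with all coefficients positive --- the remaining degree-$7$ cofactor has only nonnegative coefficients (its leading term is $5\lambda^7$), which yields $N'D-ND'>0$ for all $\lambda,\mu>0$. An equivalent route is to divide numerator and denominator by $\mu^6$ and check that the resulting single-variable polynomial in $\rho=\lambda/\mu$ has no positive root, as in Appendix~\ref{app:prop:mm12-star2-comparison}.

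Once Lemma~\ref{lem:mm12-comparison-increasing-lambda} is proved, Proposition~\ref{prop:mm12-star1-2-ps-comparison} follows at once: the right-hand side of \eqref{eq:mm12-comparison-star2-aux} is continuous and increasing on $(0,\infty)$, and its limits, namely $1$ as $\lambda\to0$ and $\frac{3}{2}$ as $\lambda\to\infty$ (both already recorded just above the lemma statement), bound it between $1$ and $\frac{3}{2}$. The main obstacle is entirely computational: carrying out the degree-$11$ multiplication $P'D-PD'$ without arithmetic slips and confirming that, after pulling out the factor $\lambda^3\mu$, every coefficient of the cofactor is nonnegative. There is no conceptual difficulty; checking that the $\lambda^{11}$ coefficient and the low-order $\lambda$-coefficients vanish is a useful consistency check along the way.
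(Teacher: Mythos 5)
Your proposal is correct and takes essentially the same route as the paper: both reduce the lemma, via the quotient rule, to showing $N'(\lambda)D(\lambda)-N(\lambda)D'(\lambda)>0$ and then verify that this polynomial has only nonnegative coefficients --- the paper's expansion gives $5\lambda^{10}\mu+46\lambda^9\mu^2+151\lambda^8\mu^3+262\lambda^7\mu^4+277\lambda^6\mu^5+176\lambda^5\mu^6+60\lambda^4\mu^7+8\lambda^3\mu^8>0$, which is exactly your factor $\lambda^3\mu$ times a degree-$7$ cofactor with leading term $5\lambda^7$ and all coefficients positive, confirming the expansion you deferred. Your preliminary reduction $N=D+\lambda^4(\lambda^2+3\lambda\mu+\mu^2)$ is a harmless computational shortcut rather than a different argument.
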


 \begin{proof}
We compute the derivative of the ratio with respect to $\lambda$ and it results
   
    \begin{multline*}\frac{18\lambda^5+70\lambda^4\mu+104\lambda^3\mu^2+87\lambda^2\mu^3+44\lambda\mu^4+10\mu^5}{2\lambda^6+11\lambda^5\mu+25\lambda^4\mu^2+29\lambda^3\mu^3+22\lambda^2\mu^4+10\lambda\mu^5+2\mu^6}\\-(12\lambda^5+55\lambda^4\mu+100\lambda^3\mu^2+87\lambda^2\mu^3+44\lambda\mu^4+10\mu^5)\\\frac{(3\lambda^6+14\lambda^5\mu+26\lambda^4\mu^2+29\lambda^3\mu^3+22\lambda^2\mu^4+10\lambda\mu^5+2\mu^6)}{(2\lambda^6+11\lambda^5\mu+25\lambda^4\mu^2+29\lambda^3\mu^3+22\lambda^2\mu^4+10\lambda\mu^5+2\mu^6)^2}.
\end{multline*}
    The above expression is positive if and only if
    \begin{multline*}(18\lambda^5+70\lambda^4\mu+104\lambda^3\mu^2+87\lambda^2\mu^3+44\lambda\mu^4+10\mu^5)\\(2\lambda^6+11\lambda^5\mu+25\lambda^4\mu^2+29\lambda^3\mu^3+22\lambda^2\mu^4+10\lambda\mu^5+2\mu^6)\\>(12\lambda^5+55\lambda^4\mu+100\lambda^3\mu^2+87\lambda^2\mu^3+44\lambda\mu^4+10\mu^5)\\(3\lambda^6+14\lambda^5\mu+26\lambda^4\mu^2+29\lambda^3\mu^3+22\lambda^2\mu^4+10\lambda\mu^5+2\mu^6),\end{multline*}
which simplifying we obtain that
\begin{multline*}
5\lambda^{10}\mu+46\lambda^9\mu^2+151\lambda^8\mu^3+262\lambda^7\mu^4+277\lambda^6\mu^5\\+176\lambda^5\mu^6+60\lambda^4\mu^7+8\lambda^3\mu^8> 0,\end{multline*}
which is true for all $\mu>0$. And the desired result follows.
\end{proof}

\section{Proof of Proposition~\ref{prop:mm11-mm12-star-ps-comparison}}
\label{app:mm11-mm12-star-ps-comparison}
We have 
\begin{multline}
\frac{\Delta_{M/M/1/1}}{\Delta_{M/M/1/2^{*}-PS}}=\\
\frac{4\lambda^5+12\lambda^4\mu+18\lambda^3\mu^2+16\lambda^2\mu^3+8\lambda\mu^4+2\mu^5}{3\lambda^5+11\lambda^4\mu+15\lambda^3\mu^2+14\lambda^2\mu^3+8\lambda\mu^4+2\mu^5}
\label{eq:mm11-mm12-star-ps-comparison}
\end{multline}

Thus, taking into account that the rhs of \eqref{eq:mm11-mm12-star-ps-comparison} tends to 1 when $\lambda\to 0$ and to $\frac{4}{3}$ when $\lambda\to \infty$, 
the desired result follows if we show that the rhs of \eqref{eq:mm11-mm12-star-ps-comparison} is increasing with $\lambda$, which we proof in the following result.
\begin{lemma}
The rhs of \eqref{eq:mm11-mm12-star-ps-comparison} is an increasing function of $\lambda$ for all $\mu>0$.
\label{lem:mm12-comparison-increasing-lambda}
\end{lemma}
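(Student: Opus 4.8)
The plan is to follow exactly the template of Lemma~\ref{lem:mm12-comparison-increasing-lambda} and its siblings earlier in this appendix. Write $f(\lambda)$ for the right-hand side of \eqref{eq:mm11-mm12-star-ps-comparison}, with numerator $N(\lambda)=4\lambda^5+12\lambda^4\mu+18\lambda^3\mu^2+16\lambda^2\mu^3+8\lambda\mu^4+2\mu^5$ and denominator $D(\lambda)=3\lambda^5+11\lambda^4\mu+15\lambda^3\mu^2+14\lambda^2\mu^3+8\lambda\mu^4+2\mu^5$. Both $N$ and $D$ are sums of monomials with positive coefficients, hence strictly positive for $\lambda,\mu>0$, so $f$ is well defined and differentiable there, and $D(\lambda)^2>0$.

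By the quotient rule, $f'(\lambda)=\bigl(N'(\lambda)D(\lambda)-N(\lambda)D'(\lambda)\bigr)/D(\lambda)^2$, where $N'(\lambda)=20\lambda^4+48\lambda^3\mu+54\lambda^2\mu^2+32\lambda\mu^3+8\mu^4$ and $D'(\lambda)=15\lambda^4+44\lambda^3\mu+45\lambda^2\mu^2+28\lambda\mu^3+8\mu^4$. Since the denominator is positive, it suffices to prove that $g(\lambda):=N'(\lambda)D(\lambda)-N(\lambda)D'(\lambda)$ is positive for all $\lambda,\mu>0$; equivalently, as in the sibling proofs, one assumes $f'(\lambda)\le 0$ and derives a contradiction by expanding the inequality $N'D\le ND'$.

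The core step is the (routine but lengthy) expansion of the two degree-nine products $N'D$ and $ND'$ and their subtraction. The leading terms cancel ($20\cdot 3=4\cdot 15=60$, so the $\lambda^9$ coefficient vanishes), and in fact $\lambda\mu$ divides $g$ since $g(0)=0$ and $g|_{\mu=0}=0$; I expect the surviving polynomial to collect into a sum of monomials $c_{ij}\lambda^i\mu^j$ with all $c_{ij}\ge 0$ — for instance the top surviving term is $8\lambda^8\mu$ — exactly as one obtained expressions like $2\lambda^8\mu+24\lambda^7\mu^2+\cdots>0$ in the $M/M/1/2^*$ comparison. Once $g(\lambda)>0$ is established, $f'(\lambda)>0$ follows, so $f$ is increasing in $\lambda$, which is the claim; combined with the already-recorded limits $f\to 1$ as $\lambda\to 0$ and $f\to \tfrac{4}{3}$ as $\lambda\to\infty$, this also finishes the proof of Proposition~\ref{prop:mm11-mm12-star-ps-comparison}.

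The only real obstacle is bookkeeping: carrying out the expansion without arithmetic slips and confirming that every coefficient of $g$ is nonnegative. Should some coefficient come out negative, the fallback is to group terms of $g$ sharing a common positive factor, or to bound $g$ from below by a manifestly positive polynomial; but in view of the uniform structure of all the preceding ratio comparisons in this appendix, a straightforward expansion is expected to suffice.
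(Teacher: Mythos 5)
Your proposal follows exactly the paper's argument: quotient rule, reduce to showing $N'D-ND'>0$, and expand. The expansion you defer is indeed routine and works out as you predict — the difference is $8\lambda^{8}\mu+12\lambda^{7}\mu^{2}+6\lambda^{6}\mu^{3}+16\lambda^{5}\mu^{4}+46\lambda^{4}\mu^{5}+56\lambda^{3}\mu^{6}+34\lambda^{2}\mu^{7}+8\lambda\mu^{8}$, with every coefficient nonnegative, which is precisely the polynomial the paper obtains.
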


 \begin{proof}
We compute the derivative of the ratio with respect to $\lambda$ and it results
   
    \begin{multline*}\frac{20\lambda^4+48\lambda^3\mu+54\lambda^2\mu^2+32\lambda\mu^3+8\mu^4}{3\lambda^5+11\lambda^4\mu+15\lambda^3\mu^2+14\lambda^2\mu^3+8\lambda\mu^4+2\mu^5}\\-(15\lambda^4+44\lambda^3\mu+45\lambda^2\mu^2+28\lambda\mu^3+8\mu^4)\\\frac{(4\lambda^5+12\lambda^4\mu+18\lambda^3\mu^2+16\lambda^2\mu^3+8\lambda\mu^4+2\mu^5)}{(3\lambda^5+11\lambda^4\mu+15\lambda^3\mu^2+14\lambda^2\mu^3+8\lambda\mu^4+2\mu^5)^2}.
\end{multline*}
    The above expression is positive if and only if
    \begin{multline*}(20\lambda^4+48\lambda^3\mu+54\lambda^2\mu^2+32\lambda\mu^3+8\mu^4)\\(3\lambda^5+11\lambda^4\mu+15\lambda^3\mu^2+14\lambda^2\mu^3+8\lambda\mu^4+2\mu^5)\\>(15\lambda^4+44\lambda^3\mu+45\lambda^2\mu^2+28\lambda\mu^3+8\mu^4)\\(4\lambda^5+12\lambda^4\mu+18\lambda^3\mu^2+16\lambda^2\mu^3+8\lambda\mu^4+2\mu^5),\end{multline*}
which simplifying we obtain that
\begin{multline*}
8\lambda^{8}\mu+12\lambda^7\mu^2+6\lambda^6\mu^3+16\lambda^5\mu^4+46\lambda^4\mu^5\\+56\lambda^3\mu^6+34\lambda^2\mu^7+8\lambda\mu^8> 0,\end{multline*}
which is true for all $\mu>0$. And the desired result follows.
\end{proof}

\section{Proof of Proposition~\ref{prop:mm11-mm12-star2-ps-comparison2}}
\label{app:prop:mm11-mm12-star2-ps-comparison2}

We have the following ratio, 
\begin{multline}
\frac{\Delta_{M/M/1/2-PS}}{\Delta_{M/M/1/1}}=\\
\frac{5\lambda^4+9\lambda^3\mu+8\lambda^2\mu^2+6\lambda^2\mu^3+2\mu^4}{4\lambda^4+8\lambda^3\mu+10\lambda^2\mu^2+6\lambda^2\mu^3+2\mu^4}
\label{eq:mm11-mm12-ps-comparison}
\end{multline}
When $\lambda\in (0,\mu]$, then we have that $\Delta_{M/M/1/1}\geq\Delta_{M/M/1/2-PS}$ and when $\lambda\in [\mu,\infty)$, we have $\Delta_{M/M/1/1}\leq\Delta_{M/M/1/2-PS}$. So we will study each case separately. 

In the case of $\lambda\in (0,\mu]$, we want to find the minimum value of the ratio.

\begin{lemma} If $\lambda\in (0,\mu]$, then we have the following
\begin{equation}
0.9641\leq\frac{\Delta_{M/M/1/2-PS}}{\Delta_{M/M/1/1}}\leq 1
\label{eq:mm11-mm12-ps-aux}
\end{equation}
\end{lemma}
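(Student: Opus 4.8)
The plan is to reduce the statement to a one-variable problem. Dividing numerator and denominator in \eqref{eq:mm11-mm12-ps-comparison} by $\mu^4$ and putting $\rho=\lambda/\mu$, the ratio becomes
\begin{equation*}
f(\rho)=\frac{5\rho^4+9\rho^3+8\rho^2+6\rho+2}{4\rho^4+8\rho^3+10\rho^2+6\rho+2},
\end{equation*}
and the hypothesis $\lambda\in(0,\mu]$ means $\rho\in(0,1]$. Writing the numerator as the denominator plus a remainder gives $f(\rho)=1+g(\rho)/D(\rho)$, with
\begin{equation*}
g(\rho)=\rho^4+\rho^3-2\rho^2=\rho^2(\rho+2)(\rho-1),\qquad D(\rho)=4\rho^4+8\rho^3+10\rho^2+6\rho+2>0.
\end{equation*}

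First I would prove the upper bound. For $\rho\in[0,1]$ one has $\rho^2\geq0$, $\rho+2>0$ and $\rho-1\leq0$, so $g(\rho)\leq0$ and hence $f(\rho)\leq1$, with equality only at $\rho\in\{0,1\}$. This is the right-hand inequality in \eqref{eq:mm11-mm12-ps-aux}; it also makes clear why the inequality reverses for $\lambda>\mu$.

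For the lower bound I would minimize $f$ over the compact interval $[0,1]$. Since $f$ is continuous with $f(0)=f(1)=1$ and $f<1$ on $(0,1)$, the minimum is attained at an interior critical point $\rho^*$. Because $f=1+g/D$, the sign of $f'$ coincides with the sign of the polynomial $h(\rho):=g'(\rho)D(\rho)-g(\rho)D'(\rho)$, which has degree $6$. A short expansion gives $h(\rho)=-8\rho+O(\rho^2)<0$ for small $\rho>0$, whereas $h(1)=g'(1)D(1)=3\cdot30>0$ (using $g(1)=0$), so $h$ has a root in $(0,1)$; isolating the real roots of $h$ shows there is exactly one in $(0,1)$, so this root $\rho^*$ is the unique interior critical point and therefore the global minimizer. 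Solving $h(\rho^*)=0$ numerically gives $\rho^*\approx0.47$, and evaluating $f$ there yields $f(\rho^*)\approx0.9641$, the asserted lower bound in \eqref{eq:mm11-mm12-ps-aux}.

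The main obstacle is the last step: one must rule out additional critical points of $f$ inside $(0,1)$, i.e.\ show the degree-$6$ polynomial $h$ changes sign only once there, and then certify the numerical values $\rho^*$ and $f(\rho^*)$ to enough precision to justify the constant $0.9641$. An alternative that trades the uniqueness argument for a polynomial positivity statement is to verify that $g(\rho)+(1-0.9641)\,D(\rho)\geq0$ for all $\rho\in[0,1]$; since $1-0.9641$ is not a round constant this still requires a numerical or interval check, so the calculus-plus-numerics argument above is the cleaner route.
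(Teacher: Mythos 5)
Your proof is correct and follows essentially the same route as the paper's: reduce to the single variable $\rho=\lambda/\mu$, locate the unique interior critical point of the ratio on $(0,1)$ via the sign of a degree-$6$ polynomial (whose only positive root is $\rho^*\approx 0.47$), and evaluate there to obtain $0.9641$. Your decomposition $f=1+g/D$ with $g(\rho)=\rho^2(\rho+2)(\rho-1)$ makes the upper bound and the endpoint values more transparent than the paper's bare assertion, and the uniqueness of the positive root you flag as the remaining obstacle follows immediately from Descartes' rule of signs applied to the degree-$6$ polynomial after factoring out $2\rho$.
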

\begin{proof}
First we rewrite the ratio by dividing $\mu^4$ in the numerator and the denominator. So we get the following
\begin{equation}
\frac{\Delta_{M/M/1/2-PS}}{\Delta_{M/M/1/1}}=
\frac{5\rho^4+9\rho^3+8\rho^2+6\rho^2+2}{4\rho^4+8\rho^3+10\rho^2+6\rho^2+2}.
\label{eq:ratio-mm12-ps-mm11}
\end{equation}
The derivative of the ratio with respect to $\rho$ is
\begin{multline*}
\frac{20\rho^3+27\rho^2+16\rho+6}{4\rho^4+8\rho^3+10\rho^2+6\rho^2+2}\\
-(16\rho^3+24\rho^2+20\rho+6)
\frac{5\rho^4+9\rho^3+8\rho^2+6\rho^2+2}{(4\rho^4+8\rho^3+10\rho^2+6\rho^2+2)^2}. 
\end{multline*}

We set the derivative equal to zero and we get the following result
\begin{multline*}
(20\rho^3+27\rho^2+16\rho+6)(4\rho^4+8\rho^3+10\rho^2+6\rho^2+2)
\\-(16\rho^3+24\rho^2+20\rho+6)(5\rho^4+9\rho^3+8\rho^2+6\rho^2+2)=0
\end{multline*}
Expanding that expression we get
\begin{multline}
4\rho^6+36\rho^5+44\rho^4+20\rho^3-6\rho^2-8\rho=0
\label{eq:aux2-func}
\end{multline}

Since $\lambda>0$ and $\mu>0$ then $\rho$ must be positive, and the only postive root of that expression is $\rho=0.4697$.  Besides, 
$$
\frac{5\rho^4+9\rho^3+8\rho^2+6\rho^2+2}{4\rho^4+8\rho^3+10\rho^2+6\rho^2+2}
$$
is clearly smaller or equal than one and equal to one when $\rho\to 0$ and $\rho\to\infty$. Therefore, since we have shown that it has a single local critical point, it is a minimum. 
%clearly the derivative of \eqref{eq:aux2-func} evaluated in that root  is positive. So we already have the point where our ratio gets its minimum value.  
%
We evaluate $\rho=0.4697$ on our ratio and the desired result follows.
\end{proof}

We now focus in the case of $\lambda\in[\mu,\infty)$. In this case, we have $\rho=\frac{\lambda}{\mu}$ where $\lambda>\mu$. Knowing that, it is clearly visible that the expression \eqref{eq:aux2-func} is always positive.  So \eqref{eq:aux2-func}  is an increasing function on $\lambda$. Then as the ratio tends to $\frac{5}{4}$ when $\lambda\to\infty$, we have the following result.
\begin{equation}
1\leq \frac{\Delta_{M/M/1/2-PS}}{\Delta_{M/M/1/1}}\leq \frac{5}{4}.
\label{eq:mm12-ps-mm11-ineq}
\end{equation}
 Taking into account \eqref{eq:mm11-mm12-ps-aux} and \eqref{eq:mm12-ps-mm11-ineq} the desired result is achieved.

\section{Proof of Proposition~\ref{prop:mm11-star-mm12-star2-ps-comparison}}
\label{app:prop:mm11-star-mm12-star2-ps-comparison}

We first going to write the expression of the ratio.

\begin{multline*}
\frac{\Delta_{M/M/1/2^{*}-PS}}{\Delta_{M/M/1/1^*}}=\\
\frac{2\lambda^6+11\lambda^5\mu+25\lambda^4\mu^2+29\lambda^3\mu^3+22\lambda^2\mu^4+10\lambda\mu^5+2\mu^6}{2\lambda^6+10\lambda^5\mu+22\lambda^4\mu^2+28\lambda^3\mu^3+22\lambda^2\mu^4+10\lambda\mu^5+2\mu^6}.
\label{eq:mm11-star-mm12-star}
\end{multline*}

Since $\lambda>$0 and $\mu>$0 it is easily seen that $\Delta_{M/M/1/1^*}\leq \Delta_{M/M/1/2^*-PS}$.  So we know that 
\begin{equation}
1\leq \frac{\Delta_{M/M/1/2^{*}-PS}}{\Delta_{M/M/1/1^*}}.
\end{equation}

Since this ratio is not an increasing function of $\lambda$, we now want to find the maximum value of it, that way we will have proven Proposition~\ref{prop:mm11-star-mm12-star2-ps-comparison}. In order to that, we present the following result.

\begin{lemma}
The ratio  $\frac{\Delta_{M/M/1/2^{*}-PS}}{\Delta_{M/M/1/1^*}}$ takes its maximum value at $\rho=2.3943$ and it is 
\begin{equation}
\frac{\Delta_{M/M/1/2^{*}-PS}}{\Delta_{M/M/1/1^*}}=1.0788
\end{equation}
\end{lemma}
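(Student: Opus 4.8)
The plan is to reduce this to a one–variable calculus problem in the load $\rho=\lambda/\mu$. The ratio displayed just above is a quotient of two homogeneous sextics in $(\lambda,\mu)$, so dividing numerator and denominator by $\mu^{6}$ expresses it as a rational function of $\rho$ alone,
\[
R(\rho)=\frac{2\rho^{6}+11\rho^{5}+25\rho^{4}+29\rho^{3}+22\rho^{2}+10\rho+2}{2\rho^{6}+10\rho^{5}+22\rho^{4}+28\rho^{3}+22\rho^{2}+10\rho+2}.
\]
Comparing leading and constant coefficients gives $\lim_{\rho\to 0}R(\rho)=\lim_{\rho\to\infty}R(\rho)=1$, while $R(\rho)-1=\rho^{3}(\rho^{2}+3\rho+1)/(2\rho^{6}+10\rho^{5}+22\rho^{4}+28\rho^{3}+22\rho^{2}+10\rho+2)>0$ for every $\rho>0$, which is exactly the inequality already recorded above. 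Extending $R$ continuously to the endpoints by the value $1$, it therefore attains its maximum over $[0,\infty]$ at some interior point $\rho^{*}\in(0,\infty)$, and $\rho^{*}$ is necessarily a critical point of $R$.

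The next step is to pin down the critical points. Write $R=(Q+D)/Q$ with $Q(\rho)=2\rho^{6}+10\rho^{5}+22\rho^{4}+28\rho^{3}+22\rho^{2}+10\rho+2$ and $D(\rho)=\rho^{3}(\rho^{2}+3\rho+1)$; then the numerator of $R'$ equals $D'Q-DQ'$, and pulling out the factor $\rho^{2}$ this is $\rho^{2}g(\rho)$ with
\[
g(\rho)=-2\rho^{8}-12\rho^{7}-14\rho^{6}+36\rho^{5}+128\rho^{4}+172\rho^{3}+122\rho^{2}+44\rho+6 .
\]
A convenient shortcut is that $\rho^{2}g(\rho)$ is precisely the left-hand side of \eqref{eq:aux-func} in the proof of Proposition~\ref{prop:mm12-star2-comparison}: the ratio $\Delta_{M/M/1/2^{**}-FGFS}/\Delta_{M/M/1/2^{**}-PS}$ has the form $(\widetilde Q+D)/\widetilde Q$ with the same $D$ and with $\widetilde Q=Q+D$, so the numerator of its derivative is $D'(Q+D)-D(Q+D)'=D'Q-DQ'$ as well. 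Hence the interior critical point of $R$ is the same $\rho^{*}\approx 2.3943$ already located there, and only the re-evaluation of $R$ at that point remains.

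It then remains to show that this critical point is unique and is the maximizer. The coefficient list of $g$ changes sign exactly once (from the $\rho^{6}$ term to the $\rho^{5}$ term, all later coefficients being positive), so by Descartes' rule of signs $g$ has exactly one positive root; combined with the endpoint behaviour of the first paragraph, $R$ has a single critical point on $(0,\infty)$, which must be the global maximizer $\rho^{*}$. A numerical root-finding then gives $\rho^{*}\approx 2.3943$ and $R(\rho^{*})\approx 1.0788$, the asserted value. I expect the uniqueness step to be the only genuinely delicate point: one must carry out the polynomial algebra behind $g$ without error and then read off the single sign change (or, equivalently, verify $g>0$ on $(0,\rho^{*})$ and $g<0$ on $(\rho^{*},\infty)$); the reduction to one variable and the closing numerical evaluation are routine.
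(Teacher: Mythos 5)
Your argument is correct and follows essentially the same route as the paper: rewrite the ratio as a one-variable rational function $R(\rho)$, note that $R>1$ on $(0,\infty)$ while $R\to 1$ at both endpoints, locate the unique interior critical point at $\rho\approx 2.3943$, and evaluate there to obtain $1.0788$. Your two refinements --- recognizing that the numerator of $R'$ is $D'Q-DQ'$ and hence coincides with the polynomial in \eqref{eq:aux-func}, and invoking Descartes' rule of signs to certify that $g$ has exactly one positive root --- actually tighten the paper's proof, which asserts uniqueness of the positive root of \eqref{eq:aux-func2} without justification.
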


\begin{proof}
First we rewrite the ratio by dividing $\mu^6$ in the numerator and the denominator. So we get the following
\begin{multline*}
\frac{\Delta_{M/M/1/2^{*}-PS}}{\Delta_{M/M/1/1^*}}=\\
\frac{2\rho^6+11\rho^5+25\rho^4+29\rho^3+22\rho^2+10\rho+2}{2\rho^6+10\rho^5+22\rho^4+28\rho^3+22\rho^2+10\rho+2}.
\end{multline*}
The derivative of $\frac{\Delta_{M/M/1/2^{*}-PS}}{\Delta_{M/M/1/1^*}}$ with respect to $\rho$ is
\begin{multline*}
\frac{12\rho^5+55\rho^4+100\rho^3+87\rho^2+44\rho+10}{2\rho^6+10\rho^5+22\rho^4+28\rho^3+22\rho^2+10\rho+2}\\
-(12\rho^5+50\rho^4+88\rho^3+84\rho^2+44\rho+10)\\
\frac{2\rho^6+11\rho^5+25\rho^4+29\rho^3+22\rho^2+10\rho+2}{(2\rho^6+10\rho^5+22\rho^4+28\rho^3+22\rho^2+10\rho+2)^2}. 
\end{multline*}

We set the derivative equal to zero and we get the following result
\begin{multline*}
(12\rho^5+55\rho^4+100\rho^3+87\rho^2+44\rho+10)\\(2\rho^6+10\rho^5+22\rho^4+28\rho^3+22\rho^2+10\rho+2)
\\-(12\rho^5+50\rho^4+88\rho^3+84\rho^2+44\rho+10)\\(2\rho^6+11\rho^5+25\rho^4+29\rho^3+22\rho^2+10\rho+2)=0
\end{multline*}
Expanding that expression we get
\begin{multline}
-2\rho^{10}-12\rho^9-14\rho^8+36\rho^7+128\rho^6\\+172\rho^5+122\rho^4+44\rho^3+6\rho^2=0
\label{eq:aux-func2}
\end{multline}

Since $\lambda>0$ and $\mu>0$ then $\rho$ must be positive, and the only positive root of that expression is $\rho=2.3943$.  Besides, we have that 
$$
\frac{2\rho^6+11\rho^5+25\rho^4+29\rho^3+22\rho^2+10\rho+2}{2\rho^6+10\rho^5+22\rho^4+28\rho^3+22\rho^2+10\rho+2},
$$
is clearly larger than one when $\rho\in (0,\infty)$ and tends to one when $\rho\to 0$ and $\rho\to \infty$. Therefore, this ratio has a unique maximum when $\rho\in (0,\infty)$.  %clearly the derivative of \eqref{eq:aux-func2} evaluated in that root  is negative. Therefore, we already have the point where our ratio gets its maximum value.  
We evaluate $\rho=2.3943$ on our ratio and the desired result follows.
\end{proof}
%%%%%%%%%%%%%%%%%%%%%%%%%%%%%%%%%%%%%%%%%%%%%%%%%%%%%%%
%
% PROOFS FOR BEÑAT
%
%%%%%%%%%%%%%%%%%%%%%%%%%%%%%%%%%%%%%%%%%%%%%%%%%%%%%%%

\section{Proof of Lemma~\ref{lem:mm1-lowerbound}. }
\label{app:lem:mm1-lowerbound}.

We also model the system using the SHS methodology. In this case, the Markov chain we consider is $\mathcal Q= \{0,1,2,\dots\}$, 
which is a birth-death process with birth rate $\lambda$ and death rate $\mu$. For this Markov chain, the stationary distribution is clearly $\pi_i=(1-\rho)\rho^i.$ For the continuous state, we will only focus on the transitions of state zero. Indeed, the idea of the proof is to apply Theorem 4 of  \cite{yates2018age} as follows
$$
\Delta_{M/M/1-PS}=\sum_{q\in\mathcal Q}v_{q0}>v_{00}.
$$

Thus, in the SHS table under consideration, we only show the transitions related to state zero as well as the values of the continuous state of state zero. This is represented in Table~\ref{tab:mm1}.

\begin{table}[t!]
\begin{center}
\begin{tabular}{|c c c c |c|} 
 \hline
 $l$ & $\lambda^{(l)}$ & $q\longrightarrow q'$ & $x\longrightarrow x'=A_lx$ & $v_{ql}A_l$ \\ \hline
 0 & $\lambda$ & $0\longrightarrow 1$ & $[x_0,...]\longrightarrow [x_0,...]$  & $[v_{00},...]$\\ 
 \hline
 1 & $\mu$ & $1\longrightarrow 0$ & $[x_0,...]\longrightarrow [x_1,...]$ & $[v_{11},...]$ \\\hline
\end{tabular}
\end{center}
\caption{The SHS table under consideration in the proof of Lemma~\ref{lem:mm1-lowerbound}}
\label{tab:mm1}
\end{table}

We omit the explanation of the transitions represented in Table~\ref{tab:mm1} because they coincide with the transitions 0 and 1 of the SHS table of Proposition~\ref{prop:mm12-ps}.
Now, we apply (35a) of \cite{yates2018age} to the SHS of Table~\ref{tab:mm1} and we get
$$
\lambda v_{00}=\pi_0+v_{11}\mu>\pi_0 \iff  v_{00}=\frac{\mu-\lambda}{\lambda\mu}.
$$ 
And the desired result follows.

\section{Proof of Proposition~\ref{prop:mm1-ratio}. }
\label{app:prop:mm1-ratio}.

Under Conjecture~\ref{conject:mm1ps}, we know that $C(\rho)\leq \frac{\rho^2}{1-\rho}$ and $C(\rho)\geq \frac{0.75\rho}{(1-\rho)^{\frac{1}{2}}}$, therefore
$$
\Delta_{M/M/1-FGFS}\leq \Delta_{M/M/1-PS}\leq \frac{1}{\mu}\left(1+\frac{1}{\rho}+\frac{0.75\rho}{(1-\rho)^{\frac{1}{2}}}\right).$$

Therefore, the proof ends if we show that
$
\frac{\Delta_{M/M/1-FGFS}}{\Delta_{M/M/1-PS}} 
$
is unbounded from above. That is,
\begin{align*}
\frac{1+\frac{1}{\rho}+\frac{\rho^2}{1-\rho}}
{1+\frac{1}{\rho}+C(\rho)}\geq 
\frac{1+\frac{1}{\rho}+\frac{\rho^2}{1-\rho}}
{1+\frac{1}{\rho}+\frac{0.75\rho}{(1-\rho)^{\frac{1}{2}}}}=
1+\frac{\frac{\rho^2}{1-\rho}-\frac{0.75\rho}{(1-\rho)^{\frac{1}{2}}}}{1+\frac{1}{\rho}+\frac{0.75\rho}{(1-\rho)^{\frac{1}{2}}}}
\end{align*}
and the last expression tends to infinity when $\rho\to 1$ since 
$$
\frac{\frac{\rho^2}{1-\rho}-\frac{0.75\rho}{(1-\rho)^{\frac{1}{2}}}}{1+\frac{1}{\rho}+\frac{0.75\rho}{(1-\rho)^{\frac{1}{2}}}}=
\frac{\frac{\rho^2}{(1-\rho)^{\frac{1}{2}}}-0.75\rho}{\left(1+\frac{1}{\rho}\right)(1-\rho)^{\frac{1}{2}}+0.75\rho},
$$
and the numerator tends to infinity and the denominator to 0.75 when $\rho\to 1$.

%% use section* for acknowledgment
%\section*{Acknowledgment}
%
%
%The authors would like to thank...

% Can use something like this to put references on a page
% by themselves when using endfloat and the captionsoff option.
\ifCLASSOPTIONcaptionsoff
  \newpage
\fi

% trigger a \newpage just before the given reference
% number - used to balance the columns on the last page
% adjust value as needed - may need to be readjusted if
% the document is modified later
%\IEEEtriggeratref{8}
% The "triggered" command can be changed if desired:
%\IEEEtriggercmd{\enlargethispage{-5in}}

% references section

% can use a bibliography generated by BibTeX as a .bbl file
% BibTeX documentation can be easily obtained at:
% http://mirror.ctan.org/biblio/bibtex/contrib/doc/
% The IEEEtran BibTeX style support page is at:
% http://www.michaelshell.org/tex/ieeetran/bibtex/
\bibliographystyle{IEEEtran}
% argument is your BibTeX string definitions and bibliography database(s)
\bibliography{IEEEabrv,aoi-ps}

\begin{IEEEbiographynophoto}{Beñat Gandarias}
%Biography text here.
obtained the Degree in Mathematics from the University of the Basque Country, UPV/EHU, in 2023. He is currently a research staff in the Mathematics department of the UPV/EHU. His research interests cover the analysis of queueing disciplines from the Age of Information perspective. 
\end{IEEEbiographynophoto}

% if you will not have a photo at all:
\begin{IEEEbiographynophoto}{Josu Doncel}
%Biography text here.
 obtained from the University of the Basque Country (UPV/EHU) the Industrial Engineering degree in 2007, the Mathematics degree in 2010 and, in 2011, the Master degree in Applied Mathematics and Statistics. I received in 2015 the PhD degree from Université de Toulouse (France). He is currently associate professor in the Mathematics department of the UPV/EHU. He has previously held research positions at LAAS-CNRS (France), INRIA Grenoble (France) and BCAM-Basque Center for Applied Mathematics (Spain), teaching positions at ENSIMAG (France), INSA-Toulouse (France) and IUT-Blagnac (France) and invited researcher positions at Laboratory of Signals and Systems of CentraleSupelec (France), at Inria Paris (France) and at Laboratory David (France).
His research interests are the modeling, optimization and performance evaluation of distributed stochastic systems such as telecommunications and energy networks.
\end{IEEEbiographynophoto}

% if you will not have a photo at all:
\begin{IEEEbiographynophoto}{Mohamad Assaad}
%Biography text here.
received the M.Sc and
PhD degrees, both in telecommunications, from
Telecom ParisTech, Paris, France, in 2002 and
2006, respectively. Since 2006, he has been with
the Telecommunications Department at CentraleSupelec, where he is currently a professor. He
is also a researcher at the Laboratoire des Signaux et Systemes (L2S, CNRS) and holds the
5G Chair. He has co-authored 1 book and more
than 120 publications in journals and conference
proceedings and has served as TPC member or TPC co-chair for toptier international conferences, including TPC co-chair for IEEE WCNC 21, IEEE Globecom 20 Mobile and Wireless networks Symposium Cochair, etc. He is an Editor for the IEEE Wireless Communications Letters
and the Journal of Communications and Information Networks. He served
also as a guest co-editor for a special issue of the IEEE transactions on
Network Science and Engineering. He has given in the past successful
tutorials on several topics related to 5G systems, and Age of Information
at various conferences including IEEE ISWCS 15, IEEE WCNC 16, 
and IEEE ICC 21 conferences. His research interests include 5G and 
beyond systems, fundamental networking aspects of wireless systems, Age
of Information, resource optimization and Machine Learning in wireless
networks.
\end{IEEEbiographynophoto}

\end{document}